\newcommand{\techRep}{true} %% switch here between true and false
\newcommand{\iftechrep}{\ifthenelse{\equal{\techRep}{true}}}
\tikzset{LMC style/.style={>=stealth',every edge/.append style={thick},every state/.style={minimum size=20,inner sep=0}}}
\newtheorem{theorem}{Theorem}
\newtheorem{corollary}[theorem]{Corollary}
\newtheorem{lemma}[theorem]{Lemma}
\newtheorem{proposition}[theorem]{Proposition}
\newtheorem{example}[theorem]{Example}
\newcommand{\A}{\mathcal{A}}%
\newcommand{\Bad}{\mathit{Bad}}%
\newcommand{\card}[1] {\vert #1 \vert}
\newcommand{\cd}{\mathit{cd}}
\newcommand{\Distr}[1]{\mathit{Distr}(#1)}
\newcommand{\dist}{\mathit{dist}}
\newcommand{\Class}{\mathit{Class}}%
\newcommand{\E}{\mathcal{E}}
\newcommand{\Good}{\mathit{Good}}%
\newcommand{\io}[1]{[#1]}%
\newcommand{\Lang}{\mathcal{L}}
\newcommand{\low}{\mathit{low}}
\newcommand{\lr}{\mathit{lr}}
\newcommand{\N}{\mathbb{N}}%
\renewcommand{\P}{\mathcal{P}}%
\newcommand{\pr}{\mathit{pr}}
\newcommand{\Q}{\mathbb{Q}}%
\newcommand{\sub}{\mathit{sub}}%
\newcommand{\supp}{\mathit{supp}}
\newcommand{\Test}{\mathit{Test}}
\newcommand{\Unreach}{\mathit{Unreach}}
\newcommand{\vv}[1]{\llbracket #1 \rrbracket}
\newcommand{\myparagraph}[1]{\smallskip\noindent\textbf{#1}}%
\newenvironment{qtheorem}[1]{%
{\smallskip\par\sc Theorem #1.}
\begin{itshape}%
}{%
\end{itshape}%
}
\newenvironment{qlemma}[1]{%
{\smallskip\par\sc Lemma #1.}%
\begin{itshape}%
}{%
\end{itshape}%
}
\newenvironment{qproposition}[1]{%
{\smallskip\par\sc Proposition #1.}
\begin{itshape}%
}{%
\end{itshape}%
}
\begin{document}

\CopyrightYear{2016} 
\setcopyright{acmlicensed}
\conferenceinfo{LICS '16,}{July 05 - 08, 2016, New York, NY, USA}
 \isbn{978-1-4503-4391-6/16/07}\acmPrice{\$15.00}
\doi{http://dx.doi.org/10.1145/2933575.2933608}

\title{Distinguishing Hidden Markov Chains%
\iftechrep{%
\titlenote{This is the full version of a LICS'16 paper.
}
}{%
\titlenote{A full version of this paper is available at \texttt{http://arxiv.org/abs/1507.02314}
}
}
}
\numberofauthors{2}

\author{
% 1st. author
\alignauthor
Stefan Kiefer\\
\affaddr{University of Oxford, UK}
% 1st. author
\alignauthor
A. Prasad Sistla\\
\affaddr{University of Illinois at Chicago, USA}
}

\maketitle
 
\begin{abstract}
Hidden Markov Chains (HMCs) are commonly used mathematical models of probabilistic systems.
They are employed in various fields such as speech recognition, signal processing, and biological sequence analysis.
Motivated by  applications in stochastic runtime verification, we consider the problem of distinguishing two given HMCs based on a single observation sequence that one of the HMCs generates.
More precisely, given two HMCs and an observation sequence, a distinguishing algorithm is expected to identify the HMC that generates the observation sequence.
Two HMCs are called distinguishable if for every $\varepsilon > 0$ there is a distinguishing algorithm whose error probability is less than~$\varepsilon$.
We show that one can decide in polynomial time whether two HMCs are distinguishable.
Further, we present and analyze two distinguishing algorithms for distinguishable HMCs.
The first algorithm makes a decision after processing a fixed number of observations, and it exhibits two-sided error.
The second algorithm processes an unbounded number of observations, but the algorithm has only one-sided error. 
The error probability, for both algorithms, decays exponentially with the number of processed observations. 
We also provide an algorithm for distinguishing multiple HMCs. 
%Finally, we discuss an application in stochastic runtime verification.
\end{abstract} 

\begin{CCSXML}
<ccs2012>
<concept>
<concept_id>10003752.10010061.10010065</concept_id>
<concept_desc>Theory of computation~Random walks and Markov chains</concept_desc>
<concept_significance>500</concept_significance>
</concept>
<concept>
<concept_id>10003752.10003753.10003757</concept_id>
<concept_desc>Theory of computation~Probabilistic computation</concept_desc>
<concept_significance>300</concept_significance>
</concept>
<concept>
<concept_id>10002950.10003648.10003662.10003665</concept_id>
<concept_desc>Mathematics of computing~Computing most probable explanation</concept_desc>
<concept_significance>300</concept_significance>
</concept>
</ccs2012>
\end{CCSXML}

\ccsdesc[500]{Theory of computation~Random walks and Markov chains}
\ccsdesc[300]{Theory of computation~Probabilistic computation}
\ccsdesc[300]{Mathematics of computing~Computing most probable explanation}

\printccsdesc

\keywords{Hidden Markov chains; Labelled Markov chains; monitors}

\section{Introduction}
Hidden Markov Chains (HMCs) are commonly used mathematical models of probabilistic systems.
They are specified by a Markov Chain, capturing the probabilistic behavior of a system, and an observation function specifying the outputs generated from each of its states.
Figure~\ref{fig-fig} depicts two example HMCs $H_1, H_2$, with observations $a$ and~$b$.
%\begin{figure}[htp]
%\centering{
%\includegraphics[scale=0.45]{fig.pdf}
%}
%\caption{Two HMCs}
%\label{fig-fig}
%\end{figure} 
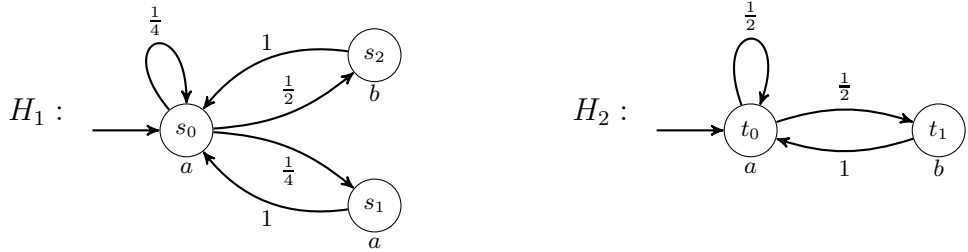
\begin{figure*}[ht]
\begin{center}
\begin{tikzpicture}[scale=2.5,LMC style]
\node (H1) at (-1.8,0.1) {\large $H_1 : $};
\node[state] (s0) at (-1,0) {$s_0$};
\node (a0) at (-1,-0.2) {$a$};
\node[state] (s1) at ( 0,-0.4) {$s_1$};
\node (b01) at (-0,-0.6) {$a$};
\node[state] (s2) at ( 0,0.4) {$s_2$};
\node (b0) at (-0,+0.2) {$b$};
\node (H2) at (1.2,0.1) {\large $H_2 : $};
\node[state] (t0) at (+2,0) {$t_0$};
\node (a1) at (+2,-0.2) {$a$};
\node[state] (t1) at (+3,0) {$t_1$};
\node (b1) at (+3,-0.2) {$b$};
\path[->] (-1.5,0) edge (s0);
\path[->] ( 1.5,0) edge (t0);
\path[->] (s0) edge [loop,out=130,in=90,looseness=13] node[above] {$\frac14$} (s0);
\path[->] (s0) edge [bend right=18] node[above] {$\frac12$} (s2);
\path[->] (s0) edge [bend left=18] node[below] {$\frac14$} (s1);
\path[->] (s2) edge [bend right=30] node[above] {$1$} (s0);
\path[->] (s1) edge [bend left=30] node[below] {$1$} (s0);
\path[->] (t0) edge [loop,out=110,in=70,looseness=13] node[above] {$\frac12$} (t0);
\path[->] (t0) edge [bend left=18] node[above] {$\frac12$} (t1);
\path[->] (t1) edge [bend left=18] node[below] {$1$} (t0);
\end{tikzpicture}
\end{center}
\caption{Two HMCs. Here $H_1$ and $H_2$ are distinguishable (see Example~\ref{ex-disting-freq}) and hence not equivalent.}
\label{fig-fig}
\end{figure*}
We consider finite-state HMCs in this paper.
%Each HMC has finitely many states, in the example $s_0, s_1, s_2, t_0, t_1$.
%In the example, $s_0$ and~$t_0$ are initial states.
%Each state has a fixed probability distribution on possible successor states.
%When the HMC is in a state, it outputs its observation, in our example $a$~or~$b$. 
%While an HMC is in a given state, say~$s$, the observation $O(s) \in \{a, b\}$ is output and a successor state is chosen according to the probability distribution associated with~$s$.
%Whereas the observation of each state is observable from outside,
An HMC randomly generates a (conceptually infinite) string of
observations. %; the observations  in the example are $a$~and~$b$.
The states producing the observations are not observable
 (note that $s_0$~and~$s_1$ output the same observation~$a$ in the example).
This motivates the term \emph{hidden}.

HMCs are widely employed in fields such as speech recognition (see~\cite{Rabiner89} for a tutorial),
gesture recognition~\cite{Gesture},
musical score following~\cite{MusicalScore},
signal processing~\cite{SignalProcessing},
and climate modeling~\cite{Weather}.
HMCs are heavily used in computational biology~\cite{HMM-comp-biology},
more specifically in DNA modeling~\cite{DNA-modeling} and biological sequence analysis~\cite{durbin1998biological},
including protein structure prediction~\cite{ProteinStructure}, detecting similarities in genomes~\cite{Homology} and gene finding~\cite{GeneFinding}.
Following~\cite{LyngsoP02}, applications of HMCs are based on two basic problems, cf.~\cite[Chapter~2]{book-Fraser-HMM}:
The first one is, given an observation string and an HMC, what is the most likely sequence of states that produced the string?
This is useful for areas like speech recognition, see~\cite{Rabiner89} for efficient algorithms based on dynamic programming.
The second problem is, given an observation string and multiple HMCs, identify the HMC that is most likely to produce the observation.
This is used for classification.

The second problem raises a fundamental question, which we address in this work:
Given two HMCs, and assuming that one of them produces a random single observation sequence, is it even possible to identify the producing HMC with a high probability?
And if yes, how many observations in that observation sequence are needed?
At its heart, this question is about comparing two HMCs in terms of their (distributions on) observation sequences.
To make this more precise, let a \emph{monitor} for two given HMCs~$H_1, H_2$ be an algorithm that reads (increasing prefixes of) a single observation sequence, and at some point outputs ``$H_1$'' or ``$H_2$''.
The \emph{distinguishability} problem asks for two given HMCs $H_1, H_2$, whether for all $\varepsilon > 0$ there is a monitor such that for both $i=1,2$, if the monitor reads a random observation sequence produced by~$H_i$, then with probability at least $1-\varepsilon$ the monitor outputs~``$H_i$''.

A related problem is \emph{equivalence} of HMCs.
Two HMCs are called \emph{equivalent} if they produce the same (prefixes of) observation sequences with the same probability.
Equivalence of HMCs has been well-studied and can be decided in polynomial time, using algorithms based on linear algebra, see e.g.\ \cite{Ito-Equivalence,Tzeng,Doyen-Equivalence}.
The exact relation between equivalence and distinguishability depends on whether a monitor has access to a single random observation sequence or to multiple such sequences.

\begin{figure*}[ht]
\begin{center}
\begin{tikzpicture}[scale=2.5,LMC style]
\node (H1) at (-1.8,0.1) {\large $H_1 : $};
\node[state] (s0) at (-1,0) {$s_0$};
\node (a0) at (-1,-0.2) {$a$};
\node[state] (s1) at ( 0,0) {$s_1$};
\node (b0) at (-0,-0.2) {$b$};
\node (H2) at (1.2,0.1) {\large $H_2 : $};
\node[state] (t0) at (+2,0) {$t_0$};
\node (a1) at (+2,-0.2) {$a$};
\node[state] (t1) at (+3,0) {$t_1$};
\node (b1) at (+3,-0.2) {$b$};
\path[->] (-1.5,0) edge (s0);
\path[->] ( 1.5,0) edge (t0);
\path[->] (s0) edge [loop,out=110,in=70,looseness=13] node[above] {$\frac12$} (s0);
\path[->] (s0) edge node[above] {$\frac12$} (s1);
\path[->] (s1) edge [loop,out=110,in=70,looseness=13] node[above] {$1$} (s1);
\path[->] (t0) edge [loop,out=110,in=70,looseness=13] node[above] {$\frac23$} (t0);
\path[->] (t0) edge node[above] {$\frac13$} (t1);
\path[->] (t1) edge [loop,out=110,in=70,looseness=13] node[above] {$1$} (t1);
\end{tikzpicture}
\end{center}
\caption{Two HMCs. Here $H_1$ and~$H_2$ are not distinguishable but not equivalent.}
\label{fig-non-distinguishable}
\end{figure*}
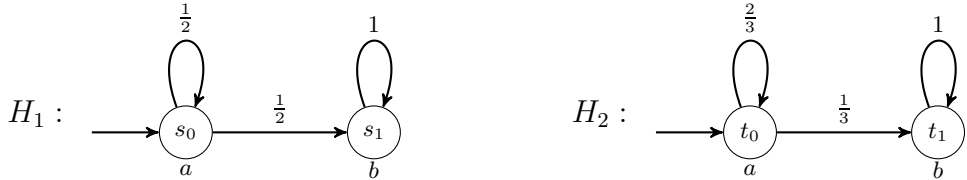

\begin{itemize}
\item[(1)]
Consider first a notion of a ``monitor'' that has access to \emph{several} random observation sequences,
each generated starting from the same initial state. 
Call this a \emph{multi-monitor}.
If the two given HMCs are equivalent then even a multi-monitor can only guess.
Now assume the two HMCs are not equivalent.
It is known (see e.g.~\cite{Tzeng}) that then there exists a linear-length prefix of the observation sequence that is more likely in one HMC than in the other HMC.
A multi-monitor could exploit the law of large numbers and only count how often that particular observation prefix occurs.
Hence for multi-monitors, distinguishability and non-equivalence coincide.
\item[(2)]
Consider now a monitor that has access to only a \emph{single} random observation sequence.
Here, non-equivalence does not imply distinguishability:
loosely speaking, for some HMCs it is the case that while the observation prefix is increasing, the evidence added by each new observation does not help the monitor enough to make up its mind about which HMC produces the sequence.
Figure~\ref{fig-non-distinguishable} shows an example of two HMCs that
are neither equivalent nor distinguishable. (On the
other hand, the HMCs in Figure~\ref{fig-fig} are not equivalent, but
are distinguishable as shown later in Section~\ref{sec-distinguish}).
\end{itemize}
We assume in the rest of the paper that a monitor has access to only a single random observation sequence.
This is the more natural version of the problem, both from the point of view of the motivation mentioned above and from our application in stochastic runtime monitoring.

%One of the important problems associated with HMCs is the problem of identification of the source of observations generated by one of a number of known HMCs.

%A distinguishing algorithm for a pair of HMCs is an algorithm that takes a sequence of observations generated by any of one of the two and identifies which of them generated the sequence. The algorithm is allowed to make errors in its decision, but only with a small probability of error. We say that two HMCs are distinguishable, if there exist a class of algorithms that distinguish the two HMCs such that their error probabilities are arbitrarily close to 0, i.e., the greatest lower bound of their error probabilities is zero. The \emph{distinguishability} problem for HMCs is to determine whether two given finite HMCs are distinguishable. In this paper, we show that the distinguishability problem is decidable in polynomial time. We establish this result by showing that two HMCs are distinguishable iff their variation distance is 1 (see \cite{14CK-LICS}). Another problem associated with HMCs is the monitorability problem introduced in \cite{Sistla11}. We show that the distinguishability problem is equivalent to the monitoring problem. This equivalence result then shows that the monitorability problem is also decidable in polynomial time, correcting its undecidability claim of \cite{Sistla11}. As far as we know, this is the first polynomial time decidability result for the monitoring problem.

We prove that the distinguishability problem is decidable in polynomial time.
We establish this result by showing that two HMCs are distinguishable if and only if their \emph{total variation distance} is equal to~$1$.
This distance measure for HMCs was studied in~\cite{14CK-LICS}, and a polynomial-time algorithm for deciding whether the distance of two HMCs is~$1$ was given there.
That polynomial-time algorithm includes a mechanism for checking whether two given HMCs are equivalent (but also needs other ingredients).

It is important to note that deciding distinguishability does not readily provide a family of monitors as required by the definition of distinguishability; it only guarantees their existence.
Developing a family of monitors (one for any desired error bound $\varepsilon > 0$) requires more insights.
Inspired by the area of \emph{sequential
  analysis}~\cite{wetherill}, we design  monitors that
track the \emph{likelihood ratio} of the sequence of
observations. 
%The so-called \emph{sequential probability ratio test} would give an answer once the likelihood ratio crosses certain thresholds.
However,  estimating the error probability  of the monitors is
challenging, since one needs a bound on the change of the likelihood
ratio per observation.
Unfortunately, such a bound does not exist for HMCs in general, not even on the difference of the \emph{log}-likelihood ratio (see Example~\ref{ex-unbounded}).
Hence, in this paper we take a different route: We consider a
different class
of monitors that  translate the given random observation sequence into a certain kind of non-homogenous ``random walk'' with \emph{bounded} step size.
This allows us to employ martingale techniques, specifically Azuma's inequality, to prove error bounds that decay exponentially with the number of observations the monitor makes.
Then we show that the error bounds from a random-walk monitor carry over to a likelihood-based monitor. 

More specifically, we present two likelihood-based monitors for distinguishable HMCs.
The first one makes a decision after reading a fixed number of observation symbols.
This number is chosen depending on the desired error bound:
we show that for an error probability~$\varepsilon$ it suffices to read the prefix of length $C  \log\frac{1}{\varepsilon}$, where $C>0$ is a polynomial-time computable constant.
This error is two-sided, i.e., the monitor may mistake $H_1$ for~$H_2$ and vice versa.

The second monitor has only one-sided error: observation sequences from~$H_1$ are almost always (i.e., with probability~$1$) recognized as stemming from~$H_1$.
However, on sequences generated by~$H_2$, with high probability the monitor never gives an answer. % or may give the wrong answer with small error probability.
%This is useful in applications where a notification is only required for~$H_1$, as in the application of Section~\ref{sec-verification}.
This is useful in applications such as runtime verification (see Section~\ref{sec-verification}).
The expected number of observations from~$H_1$ that the monitor
processes before giving its decision is
$O(\log\frac{1}{\varepsilon})$, while ensuring an error probability of
at most~$\varepsilon$ on observations from~$H_2$. For this class of
monitors, we have a polynomial-time algorithm that computes an $O(\log\frac{1}{\varepsilon})$ upper
bound on the expected number of observations from~$H_1$ before a decision
is given. 

%Besides the number of observations to be read, the runtime analysis of monitors has two parts.
%The first part pertains to the \emph{setup} computation before the start of the monitoring.
%Here, the monitor computes suitable termination conditions that guarantee the required error bounds.
%The second part concerns the \emph{update} computation performed while reading a single observation symbol.
%Our monitors are polynomial-time algorithms in both respects:
%for the setup, linear programming is the bottleneck; for the update, only simple arithmetic is performed, which can be done in time linear in the HMCs.

\myparagraph{Main Contributions.} %\mbox{} \vspace{-1mm}
\begin{itemize}
\item We show that the distinguishability problem can be decided in polynomial time (Section~\ref{sec-distinguish}).
\item We design two classes of likelihood-based monitors that accomplish the following tasks ($\varepsilon>0$ is an error bound):
\begin{enumerate}
\item[(1)] After $O(\log \frac{1}{\varepsilon})$ observations (the exact number can be efficiently computed from the given HMCs) the first monitor class provides a guess about the source of the observations, such that the probability that the guess is wrong is at most~$\varepsilon$ (Section~\ref{sub-two-sided}).
    This can be extended to more than two HMCs (Section~\ref{sub-multiple}).
\item[(2)] For the second monitor class, if $H_1$ produces the
  observation sequence then the monitor raises an alarm almost surely,
  and after an \emph{expected} number of $O(\log
  \frac{1}{\varepsilon})$ observations (such an upper bound can be
  efficiently computed from the given HMCs and $\varepsilon$);
    if $H_2$ produces the observation sequence then, with probability at least $1-\varepsilon$, the monitor never raises an alarm (Section~\ref{sub-monitor-one-sided}).
\end{enumerate} 
\item
We apply our results to stochastic runtime verification, where a monitor should distinguish correct and faulty behaviour of a \emph{single} stochastic system.
This yields polynomial-time decidability of monitorability as defined in~\cite{Sistla11}, as well as efficient runtime monitors for stochastic systems, see Section~\ref{sec-verification}.
\end{itemize}
Missing proofs can be found \iftechrep{in the appendix}{in~\cite{16KS-TR}}.

%\section{Related Work}
%\label{sec-related}
\myparagraph{Related Work.}
The area of \emph{sequential analysis} in statistics, pioneered by Wald (see \cite{wetherill}), deals with the problem of hypothesis testing using repeated and unbounded sampling.
A line of work going back to Phatarfod \cite{Phatarfod,Swamy,Schmitz} investigated the application of sequential analysis, more specifically the sequential probability ratio test, to Markov chains.
Similar to our work, the goal in the above works is to identify a Markov chain among several, in this case using likelihood ratios.
A monitor algorithm is derived by keeping track of likelihood ratios:
it gives notice once the likelihood ratio drops below or exceeds some fixed threshold.
One problem with this approach is that error probabilities can only be estimated---not bounded---by the heuristic assumption that the excess over the threshold is not big.
This assumption is not always true.
A more important difference from our work is that the observation in each state equals the state,
in other words, the Markov chains are not hidden.

There is early related work that is more specific to HMCs.
The paper~\cite{probDistMeasure} aims at \emph{measuring} a certain distance between two HMCs by running one of them.
This is in spirit close to our work, as a positive distance in their sense could be transformed to a monitor.
However, the authors place strong assumptions on the Markov chains, in particular ergodicity.
If this assumption is removed, their distance can be different for different runs,
 and the existence of a lower bound on the possible distances is unclear.

%The work on ``Quickest Detection'' pioneered by 
%Vincent Poor \cite{poor08} deals with detecting change in the 
%distribution of signals generated by a source. 
%They do not consider the distinguishability problem for finite HMCs.
%To the best of our knowledge they do not provide monitors for HMCs
%where the error probability decays exponentially. 
%\stefan{A reviewer could say: but do they also provide monitors?}

Work by Alur et al.~\cite{AlurCY95} also aims at distinguishing probabilistic models, but there are important differences.
First, they consider Markov Decision Processes rather than Markov chains, i.e., they consider \emph{strategies} to distinguish two such processes, which is a more general, and computationally harder problem (they show PSPACE- and EXPTIME-completeness results).
Second, their problems are defined such that the exact values of the transition probabilities is unimportant.
In our case this is different.

The work in~\cite{LyngsoP02} deals with comparing two HMCs in terms of various distance measures.
Among other results, they show NP-hardness of computing and approximating the $\ell_1$-distance.
The HMCs considered there generate distributions on \emph{finite} strings of observations, as each HMC has a dedicated end state, reached with probability~$1$, where the HMC ``stops''.
Such HMCs form a subclass of HMCs, whereas we consider general HMCs.

Our work on distinguishability is inspired by the work on monitorability that was defined in~\cite{Sistla11}.
In \cite[Section 4.1]{Sistla11} a notion of \emph{strong monitorability} is proposed and it is shown that deciding it is PSPACE-complete.
By our results in Section~\ref{sec-verification}, strong monitorability corresponds to a stronger form of distinguishability, so the latter is PSPACE-complete as well.
In light of this it might be surprising that (general) distinguishability turns out to be decidable in polynomial time. %, which is what we show in Section~\ref{sec-distinguish}. 
In~\cite{Sistla11} it was wrongly claimed that \emph{monitorability} is undecidable for finite-state systems.
Our result not only shows that it is decidable, but also gives a polynomial-time decision procedure.

%As has been pointed out, we show that the problem of monitorability is decidable in polynomial time correcting the wrong claim of \cite{Sistla11} that it is undecidable. \stefan{As mentioned, I'm tempted to omit that for the submission.}
Our work on exponentially decaying monitors is inspired by the exponentially converging monitorable systems defined in \cite{Sistla14}.
The algorithms presented there are for a very restricted class of HMCs, whereas our monitors work for all pairs of distinguishable HMCs. 

Closely related to some of our results is a very recent work by Bertrand et.\ al.~\cite{BertrandHL16}.
This paper also exploits the results of~\cite{14CK-LICS} to obtain polynomial-time decidability of ``AA-diagnosability'' of stochastic systems, a problem related to monitorability (Section~\ref{sec-verification}).
Although the technical report of our work had been available~\cite{16KS-TR},
the results in~\cite{BertrandHL16} were obtained independently and are largely orthogonal to ours:
whereas we focus on constructing specific monitors with computable error bounds,
they investigate the decidability and complexity of several variants of diagnosability.

%\myparagraph{Organization of the paper.}
%In Section~\ref{sec-prelim} we formally define the notions of distinguishability and monitors.
%In Section~\ref{sec-distinguish} we show that distinguishability can be decided in polynomial time.
%In Section~\ref{sec-monitors} we present and analyze our monitors. % and analyze them.
%In Section~\ref{sec-profiles} we discuss the computation of important parameters of the monitors.
%In Section~\ref{sec-verification} we discuss an application of monitors in the area of stochastic runtime verification.
%%We conclude in Section~\ref{sec-conclusions}.
%Missing proofs are provided in an appendix. 

\section{Definitions}
\label{sec-prelim}
\myparagraph{Notation.}
For a countable set $S$, a probability distribution $\psi$ over~$S$
is a function $\psi: S\to [0,1]$ such that $\sum_{s\in
  S}\psi(s)\:=1$. For an element $s\in S$, we let $\delta_s$ denote the unique
distribution with $\delta_s(s)=1$. We let $\Distr{S}$ denote the set
of all distributions over~$S$. We let $S^*,S^{\omega}$ respectively  denote the set
of finite sequences (\emph{strings}) and the set of infinite sequences of symbols from~$S$. If $S$ is a finite set then we let $\card{S}$ denote its
cardinality. For any $u\in S^*$, we let $\card{u}$ denote its
length. 
%For any $u\in S^*\cup S^{\omega}, i\leq \card{u}$, we let $u[i]$ denote its
%prefix of length $i$. 
For any real number $x$, we let $\card{x}$ denote its
absolute value.

\myparagraph{Hidden Markov Chains.} A Markov chain is a triple $G = (S, R, \phi)$ where $S$ is a set of states, $R\subseteq S\times S$, and $\phi:R\to (0,1]$ is such that $\sum_{t : (s,t) \in R} \phi(s,t)\:=1$ for all $s\in S$. A Markov chain $G$
 and an initial state $s \in S$
 induce a probability measure, denoted by~$\P_s$, on measurable
 subsets of $\{s\}S^\omega$ in the usual way:
more precisely, we consider the $\sigma$-algebra generated by the cylinder sets $\{s_0 s_1 \cdots s_n\} S^\omega$ for $n \ge 0$ and $s_0 = s$ and $s_i \in S$, with the probability measure~$\P_s$ such that
\[
 \P_s(\{s_0 s_1 \cdots s_n\} S^\omega) = \prod_{i=1}^{n} \phi(s_{i-1}, s_i)\,.
\]
 Let $\Sigma$ be a finite set. A \emph{Hidden Markov Chain (HMC)}, with
observation alphabet $\Sigma$, is a triple $(G, O, s_0)$, 
where $G = (S, R, \phi)$ is a Markov chain,
and $O: S \to \Sigma$ is the observation function, 
and $s_0 \in S$ is the initial state.
We may write $\P$ for~$\P_{s_0}$.
For $\Lang \subseteq \Sigma^\omega$ we define
%$\overline{\Lang} := \Sigma^\omega \setminus \Lang$ and 
the inverse observation function
\[
 \io{\Lang} := \{s_0 s_1 \cdots \in S^\omega \mid O(s_0) O(s_1) \cdots \in \Lang\}\,.
\]

\myparagraph{Monitors.}
A \emph{monitor} $M : \Sigma^* \to \{\bot,1\}$ is a computable function with the property that, for any $u \in \Sigma^*$, if $M(u) = 1$ then $M(u v) = 1$ for every $v \in \Sigma^*$.
%For a string $u \in \Sigma^*$, we say that $M$ \emph{rejects}~$u$, if $M(u) = 0$, otherwise we say $M$ \emph{accepts}~$u$.
%Thus if $M$ rejects~$u$ then it rejects all its extensions.
%For an infinite sequence $\alpha \in \Sigma^\omega$,
%we say that $M$ rejects~$\alpha$ if there exists a prefix of~$\alpha$ that is rejected by~$M$;
%we say $M$ accepts~$\alpha$ if it does not reject it. 
Let $\Lang(M) \subseteq \Sigma^\omega$ denote the set of infinite
sequences that have a prefix~$u$ with $M(u)=1$. (Intuitively, $\Lang(M)$ is
the set of observation sequences which the monitor decides to have been
generated by the first HMC among a pair of such HMCs.)
%It is not hard to show that, given an HMC, the event $\io{\Lang(M)}$ is measurable.
Given an HMC, the event $\io{\Lang(M)}$ is measurable, as it is a countable union of cylinder sets.

\myparagraph{Distinguishability.}
Given two HMCs $H_1, H_2$ with the same observation alphabet~$\Sigma$,
we write $\P_1, \P_2, \io{\cdot}_1, \io{\cdot}_2$ for their associated probability measures and inverse observation functions.
HMCs $H_1, H_2$ are called \emph{distinguishable} 
 if for every $\varepsilon > 0$ there exists a monitor~$M$ such that
 \[
  \P_1(\io{\Lang(M)}_1) \ \ge \ 1-\varepsilon \quad \text{and} \quad 
  \P_2(\io{\Lang(M)}_2) \ \le \ \varepsilon\,. 
 \]

\section{Polynomial-Time Decidability of the Distinguishability Problem}
\label{sec-distinguish}
For two HMCs $H_1, H_2$
define the \emph{(total variation) distance} between $H_1$ and~$H_2$,
denoted by $d(H_1,H_2)$, as follows:
\[
 d(H_1, H_2) := \sup_{E \subseteq \Sigma^\omega} \left| \P_1(\io{E}_1) - \P_2(\io{E}_2) \right| \;,
\]
where the supremum ranges over all \emph{measurable} subsets of~$\Sigma^\omega$.
It is shown in~\cite{14CK-LICS} that the supremum is in fact a maximum.
In particular, if $d(H_1, H_2) = 1$ then there exists a measurable set $E \subseteq \Sigma^\omega$
 with $\P_1(\io{E}_1) = 1$ and $\P_2(\io{E}_2) = 0$.
We show:
\newcommand{\stmtpropreductiontodistone}{
HMCs $H_1, H_2$ are distinguishable if and only if $d(H_1, H_2) = 1$.
}
\begin{proposition} \label{prop-reduction-to-dist-one}
\stmtpropreductiontodistone
\end{proposition}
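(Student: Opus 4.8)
The plan is to prove the two implications separately; the forward implication is immediate, and essentially all the content is in the converse.

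For the \emph{only if} direction, assume $H_1, H_2$ are distinguishable, fix $\varepsilon > 0$, and let $M$ be a monitor with $\P_1(\io{\Lang(M)}_1) \ge 1 - \varepsilon$ and $\P_2(\io{\Lang(M)}_2) \le \varepsilon$. Since $\Lang(M)$ is a countable union of cylinder sets it is a measurable subset of $\Sigma^\omega$, hence an admissible witness in the definition of $d(H_1, H_2)$, so $d(H_1, H_2) \ge \P_1(\io{\Lang(M)}_1) - \P_2(\io{\Lang(M)}_2) \ge 1 - 2\varepsilon$. Letting $\varepsilon \to 0$ and using $d(H_1, H_2) \le 1$ yields $d(H_1, H_2) = 1$.

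For the \emph{if} direction, assume $d(H_1, H_2) = 1$. By the result of~\cite{14CK-LICS} quoted above there is a measurable set $E \subseteq \Sigma^\omega$ with $\P_1(\io{E}_1) = 1$ and $\P_2(\io{E}_2) = 0$. The point is that $E$ itself need not be recognised by any monitor, since monitor languages are the open subsets of $\Sigma^\omega$ that admit a \emph{computable} presentation as a union of cylinders, and a generic open set need not be such. To handle a given $\varepsilon > 0$ I would proceed in two steps. First, regard $E' \mapsto \P_2(\io{E'}_2)$ as a Borel probability measure on the compact metric space $\Sigma^\omega$; by outer regularity there is an open set $U \supseteq E$ with $\P_2(\io{U}_2) < \varepsilon$, and $U \supseteq E$ still gives $\P_1(\io{U}_1) = 1$. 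Second, write $U = \bigcup_{i \ge 1} w_i \Sigma^\omega$ with $w_i \in \Sigma^*$ and set $U_n := \bigcup_{i=1}^{n} w_i \Sigma^\omega$; then $\P_1(\io{U_n}_1) \uparrow \P_1(\io{U}_1) = 1$, so for some $n$ we have $\P_1(\io{U_n}_1) \ge 1 - \varepsilon$, while $\P_2(\io{U_n}_2) \le \P_2(\io{U}_2) < \varepsilon$ because $U_n \subseteq U$. The function $M$ with $M(u) = 1$ iff $w_i$ is a prefix of $u$ for some $i \le n$ is a (clearly computable, clearly monotone) monitor with $\Lang(M) = U_n$, and it witnesses distinguishability for this $\varepsilon$; as $\varepsilon$ was arbitrary, $H_1, H_2$ are distinguishable.

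The hard part will be precisely this conversion of the abstract witnessing set $E$ into a concrete, finitely specified monitor: one must first enlarge $E$ to an open set without damaging either probability (this is where regularity of Borel measures on Cantor space enters), and then truncate the resulting countable union of cylinders to a finite one, which is legitimate exactly because $\P_1(\io{U}_1) = 1$ forces the measures of the finite truncations to converge up to $1$. The remaining arguments---measurability of $\Lang(M)$, monotonicity and computability of the truncated monitor, and the trivial bound $d \le 1$---are routine.
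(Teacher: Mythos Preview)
Your proof is correct. The \emph{only if} direction matches the paper's argument exactly. For the \emph{if} direction, however, you take a genuinely different route. The paper does not work with the maximising event~$E$ at all; instead it invokes a further structural result from~\cite{14CK-LICS} (Theorem~7 and the discussion after Proposition~5 there), which already guarantees that when $d(H_1,H_2)=1$ one can find, for any~$\varepsilon$, an integer~$k$ and a set $W\subseteq\Sigma^k$ of equal-length words with $\P_1(\io{W\Sigma^\omega}_1)\ge 1-\varepsilon$ and $\P_2(\io{W\Sigma^\omega}_2)\le\varepsilon$. The monitor is then immediate.

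Your argument replaces this second citation by a two-step measure-theoretic approximation: outer regularity of the pushforward of~$\P_2$ on the Cantor space~$\Sigma^\omega$ to enlarge~$E$ to an open~$U$, then continuity from below of the pushforward of~$\P_1$ to truncate~$U$ to a finite union of basic cylinders. This is more self-contained---beyond the existence of the maximising event, you use nothing specific to HMCs, only that $\Sigma^\omega$ is compact metric---and it would go through verbatim for any pair of Borel probability measures on~$\Sigma^\omega$ at total-variation distance~$1$. The paper's route is shorter given the cited result and has the minor bonus that the resulting~$W$ consists of words of a single length~$k$, whereas your $\{w_1,\dots,w_n\}$ may have mixed lengths (harmless for the proposition). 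Incidentally, your approximation scheme would still work starting from an~$E$ with $\P_1(\io{E}_1)>1-\varepsilon/2$ and $\P_2(\io{E}_2)<\varepsilon/2$, so even the fact that the supremum is attained is not really needed in your approach.
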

\begin{proof}
Let $H_1, H_2$ be two given HMCs.
We show that $H_1, H_2$ are distinguishable if and only if $d(H_1, H_2) = 1$.
\begin{itemize}
\item \emph{``if''}:
Let $d(H_1, H_2) = 1$.
Choose $\varepsilon > 0$ arbitrarily.
It follows from \cite[Theorem~7]{14CK-LICS} and the discussion after \cite[Proposition~5]{14CK-LICS} that
there are $k \in \N$ and $W \subseteq \Sigma^k$ such that
\begin{equation*} 
 \P_1(\io{W \Sigma^\omega}_1) \ge 1 - \varepsilon \quad \text{and} \quad 
 \P_2(\io{W \Sigma^\omega}_2) \le \varepsilon \,.
\end{equation*}
Construct a monitor~$M$ that outputs~$1$ after having read a string in~$W$.
Then we have $\Lang(M) = W \Sigma^\omega$.
It follows:
 \[
  \P_1(\io{\Lang(M)}_1) \ge 1-\varepsilon \quad \text{and} \quad
  \P_2(\io{\Lang(M)}_2) \le \varepsilon\,.
 \]
Since $\varepsilon$ was chosen arbitrarily,
 the HMCs $H_1, H_2$ are distinguishable.
\item \emph{``only if''}:
Let $H_1, H_2$ be distinguishable, i.e., 
for every $\varepsilon > 0$ there exists a monitor~$M_\varepsilon$
 such that
 \[
  \P_1(\io{\Lang(M_\varepsilon)}_1) \ge 1-\varepsilon \quad \text{and} \quad
  \P_2(\io{\Lang(M_\varepsilon)}_2) \le \varepsilon\,.
 \]
Then we have:
\begin{align*}
 d(H_1, H_2) 
 & = \sup_{E \subseteq \Sigma^\omega} \left| \P(\io{E}_1) - \P(\io{E}_2) \right| \\
 & \ge \sup_{\varepsilon > 0} \big( \P(\io{\Lang(M_\varepsilon)}_1) - \P(\io{\Lang(M_\varepsilon)}_2) \big) \\
 & \ge \sup_{\varepsilon > 0} \left( 1 - 2 \varepsilon \right) 
 \ = \ 1 
\end{align*}
\end{itemize}
This concludes the proof.
\end{proof}

It follows that HMCs $H_1, H_2$ are distinguishable iff there is a \emph{distinguishing event}, i.e., a set $E \subseteq \Sigma^\omega$ with $\P_1(\io{E}_1) = 1$ and $\P_2(\io{E}_2) = 0$.
\begin{example} \label{ex-disting-freq}
Consider the HMCs $H_1, H_2$ from Figure~\ref{fig-fig}.
By computing the stationary distributions, one can show that, the
distinguishing event $E$ is given by
\[
 E = \{\sigma_1 \sigma_2 \cdots \in \Sigma^\omega \mid \lim_{n \to \infty} \frac{f(n)}{n} = 5/7 \}\;,
\]
where $f(n)$ denotes the number of occurrences of~$a$ in the prefix $\sigma_1 \sigma_2 \cdots \sigma_n$,
is a distinguishing event for $H_1, H_2$.
Hence $H_1, H_2$ are distinguishable.
Here, counting the frequencies of the observations symbols suffices for distinguishing two distinguishable HMCs.
In general, this is not true: the \emph{order} of observations may matter.
\qed
\end{example}
Proposition~\ref{prop-reduction-to-dist-one} implies the following theorem:
\begin{theorem} \label{thm-qual-P}
One can decide in polynomial time whether given HMCs $H_1, H_2$ are distinguishable. 
%One can decide in polynomial time whether a given cHMC~$H$ is monitorable. 
\end{theorem}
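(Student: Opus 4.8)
The plan is to reduce the decision problem to a distance computation that is already known to be solvable in polynomial time. By Proposition~\ref{prop-reduction-to-dist-one}, the HMCs $H_1$ and $H_2$ are distinguishable if and only if $d(H_1,H_2)=1$. So the algorithm is simply: compute whether $d(H_1,H_2)=1$, and answer ``distinguishable'' exactly in that case. Correctness is immediate from the proposition, so the whole difficulty is in the distance check.

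For that check I would invoke the polynomial-time procedure of~\cite{14CK-LICS}, which, given two HMCs with rational transition probabilities and observation functions, decides whether their total variation distance equals~$1$. This is the procedure already alluded to at the start of this section; it relies on the fact that the defining supremum of $d(\cdot,\cdot)$ is attained (also quoted above), together with the structural characterisation of distance-one HMCs from that paper, and it internally performs, among other things, an equivalence test on certain sub-HMCs. Running it on the given $H_1, H_2$ and returning its verdict yields a polynomial-time decision procedure for distinguishability, proving the theorem.

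The remaining work is purely a matter of checking that the hypotheses match: that the model of HMC and the definition of total variation distance used in~\cite{14CK-LICS} agree with the ones fixed in Section~\ref{sec-prelim} (they do, modulo a routine translation between presentations), and that the cited algorithm decides the \emph{exact} predicate ``$d=1$'' rather than merely approximating the distance (otherwise the polynomial bound would say nothing here). I do not anticipate a genuine obstacle: the conceptual content---the equivalence between the operational notion of a monitor and the static distance-$1$ condition---has already been discharged by Proposition~\ref{prop-reduction-to-dist-one}, and everything that remains is a black-box appeal to prior work.
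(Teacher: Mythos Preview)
Your proposal is correct and matches the paper's own proof essentially line for line: reduce distinguishability to the predicate $d(H_1,H_2)=1$ via Proposition~\ref{prop-reduction-to-dist-one}, then invoke the polynomial-time decision procedure for that predicate from~\cite{14CK-LICS}. The paper additionally cites the specific location (Algorithm~1 and Theorem~21) and notes that the procedure solves $n_1$ linear programs in $n_1+n_2$ variables, but conceptually there is no difference.
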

\begin{proof}
In~\cite[Algorithm~1 and Theorem~21]{14CK-LICS} it is shown that, given two HMCs $H_1, H_2$, one can decide in polynomial time whether $d(H_1, H_2) = 1$.
(The algorithm given there solves $n_1$ linear programs, each with $n_1 + n_2$ variables, where $n_1, n_2$ is the number of states in~$H_1, H_2$, respectively.)
Then the result follows from Proposition~\ref{prop-reduction-to-dist-one}.
\end{proof}
Distinguishing events cannot in general be defined by monitors, as a monitor can reject an observation sequence only on the basis of a finite prefix.
Moreover, the decision algorithm for Theorem~\ref{thm-qual-P} can assure the \emph{existence} of a monitor for two given HMCs, but the decision algorithm does not provide useful monitors.
That is the subject of the next section.

%Theorem~\ref{thm-qual-P} corrects a statement in~\cite[Theorem 4]{Sistla11}, where it was erroneously claimed that deciding monitorability is undecidable.

\section{Monitors}
\label{sec-monitors}
In this section, we present concrete monitors, with error bounds.
To this end we give some additional definitions in Section~\ref{sub-keeping-track},
where we also explain how monitors can keep track of certain conditional distributions.
We also introduce ``profiles'', a key concept for our proofs of error bounds.
In Sections \ref{sub-two-sided} and~\ref{sub-monitor-one-sided} we present monitors for distinguishable HMCs with two-sided and one-sided error, respectively.
In Section~\ref{sub-multiple} we provide a monitor for distinguishing among multiple HMCs.

For $i=1,2$, let $H_i= (G_i,O_i, s_{i,0})$ be two 
HMCs with the same observation alphabet~$\Sigma$, where $G_i = (S_i,R_i,\phi_i)$.
Without loss of generality we assume $S_1 \cap S_2 = \emptyset$.
Let $m := \card{S_1}+\card{S_2}$. 
We fix $H_1,H_2$ and $m$ throughout the section.

\subsection{Keeping Track of Probabilities and Profiles} \label{sub-keeping-track}

Let $i \in \{1,2\}$ and $\psi \in \Distr{S_i}$.
For $u \in \Sigma^*$ define 
\[
 \pr_i(\psi,u) := \sum_{s\in S_i}\psi(s)\cdot\P_{i,s}(\io{u\Sigma^{\omega}}_i)\,.
\]
Intuitively, $\pr_i(\psi,u)$ is the probability that the string $u$ is output by HMC~$H_i$ starting from
the initial distribution $\psi$.
For $W \subseteq \Sigma^m$ we also define
\[
 \pr_i(\psi, W) := \sum_{u \in W} \pr_i(\psi,u)\,,
\]
which is the probability that $H_i$ outputs a string in~$W$ starting from distribution~$\psi$.
For $u \in \Sigma \Sigma^*$ and $s,t \in S_i$ define
\[ 
 \sub_{i}(s,u,t) := \P_{i,s}(\io{u \Sigma^{\omega}}_i\cap S_i^{|u|-1} \{t\} S_i^{\omega})\,.
\]
Intuitively, $\sub_i(s,u,t)$ is the probability that $H_i$ outputs~$u$ and is then in state~$t$, starting from state~$s$.
We have:
\begin{equation} \label{eq-sub-to-pr}
 \pr_i(\psi, u) = \sum_{s \in S_i} \psi(s) \cdot \sum_{t \in S_i} \sub_{i}(s,u,t)
\end{equation}
For any $s, r \in S_i$ and $u \in \Sigma\Sigma^*$ and $a \in \Sigma$ we have:
\begin{equation*} %\label{eq-sub-to-sub}
 \sub_i(s, u a, r) = 
 \begin{cases}
 \sum_{t \in S_i} \sub_i(s,u,t) \phi_i(t,r) & \text{if $O_i(r) = a$} \\
 0 & \text{otherwise}
 \end{cases}
\end{equation*}
So if a monitor has kept track of the values $\sub_i(s,u,t)_{s,t \in S_i}$ for a prefix~$u$ of an observation sequence, it can, upon reading the next observation~$a$, efficiently compute $\sub_i(s,u a,t)_{s,t \in S_i}$ and, by~\eqref{eq-sub-to-pr}, also $\pr_i(\delta_{s_{i,0}}, u a)$.

For $u \in \Sigma^*$ define the \emph{likelihood ratio}
\[
 \lr(u) := \frac{\pr_2(\delta_{s_{2,0}}, u)}{\pr_1(\delta_{s_{1,0}}, u)}\;.
\]
Finally, for $u \in \Sigma \Sigma^*$ with $\pr_i(\psi,u)>0$, define the distribution $\cd_i(\psi,u)$ (which stands for ``conditional distribution'') as follows:
\[
 \cd_i(\psi,u)(t):= \frac{1}{\pr_i(\psi,u)}\cdot \sum_{s\in S_i} \psi(s) \cdot \sub_{i}(s,u,t)\quad \text{ for } t \in S_i
\]
Intuitively, $\cd_i(\psi,u)(t)$  is the conditional probability that $H_i$ is in state~$t$ given that it has output $u$ and started from~$\psi$.
As explained above, a monitor can efficiently keep track of $\lr(u)$ and~$\cd_i(\psi,u)$.

We say that a pair of distributions $(\psi_1, \psi_2) \in \Distr{S_1} \times \Distr{S_2}$ is \emph{reachable} in $(H_1, H_2)$ if there is $u \in \Sigma\Sigma^*$ with $\psi_i=\cd_i(\delta_{s_{i,0}},u)$  for $i=1,2$.
A \emph{profile} for $H_1, H_2$ is a pair $(\A, c)$ such that 
$\A : \Distr{S_1} \times \Distr{S_2} \to 2^{\Sigma^m}$ and
$c \in (0,1]$ and
\[
 \pr_1(\psi_1, \A(\psi_1, \psi_2)) - \pr_2(\psi_2, \A(\psi_1, \psi_2)) \ge c
\]
holds for all reachable pairs $(\psi_1, \psi_2)$ of distributions.
For the monitors presented in this section the following proposition is crucial.
\begin{proposition} \label{prop-profile}
Let HMCs $H_1, H_2$ be distinguishable.
Then there is a number~$c>0$, computable in time polynomial in
the sizes of $H_1,H_2$, such that there is a profile $(\A,c)$.
%It can be computed in polynomial time.
\end{proposition}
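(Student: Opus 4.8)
The plan is to construct the profile from the distinguishing event guaranteed by Proposition~\ref{prop-reduction-to-dist-one} together with the compactness of the space of reachable distribution pairs. First I would observe that by Proposition~\ref{prop-reduction-to-dist-one} and the results of~\cite{14CK-LICS}, distinguishability gives $d(H_1,H_2)=1$, and more is true: the quantity $d$ can be viewed as a function of the initial distribution pair, and for \emph{every} reachable pair $(\psi_1,\psi_2)$ the ``sub-HMCs'' started from $\psi_1$ and $\psi_2$ are still distinguishable, hence still have distance~$1$. (This uses that a reachable pair records exactly the conditional state distributions after some common observation prefix, so a distinguishing event for the original HMCs, conditioned on that prefix, distinguishes the shifted pair.) Therefore for each reachable $(\psi_1,\psi_2)$ there is \emph{some} finite horizon $k$ and a set $W\subseteq\Sigma^k$ with $\pr_1(\psi_1,W\Sigma^\omega)$ close to~$1$ and $\pr_2(\psi_2,W\Sigma^\omega)$ close to~$0$; truncating $W$ to length exactly $m$ (the relevant bound coming from the construction in~\cite{14CK-LICS}, which yields linear-length witnesses of length at most $m$) gives $\A(\psi_1,\psi_2)\subseteq\Sigma^m$ with $\pr_1(\psi_1,\A(\psi_1,\psi_2))-\pr_2(\psi_2,\A(\psi_1,\psi_2))>0$.

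Next I would turn this pointwise positivity into a uniform lower bound~$c>0$. The natural tool is compactness: $\Distr{S_1}\times\Distr{S_2}$ is a compact subset of Euclidean space, and for each fixed $W\subseteq\Sigma^m$ the map $(\psi_1,\psi_2)\mapsto \pr_1(\psi_1,W)-\pr_2(\psi_2,W)$ is continuous (indeed affine, since $\pr_i(\psi,u)$ is linear in~$\psi$ by~\eqref{eq-sub-to-pr}). There are only finitely many subsets $W$ of~$\Sigma^m$, so the function $g(\psi_1,\psi_2):=\max_{W\subseteq\Sigma^m}\big(\pr_1(\psi_1,W)-\pr_2(\psi_2,W)\big)$ is continuous, and we have just argued $g>0$ at every \emph{reachable} pair. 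If the set of reachable pairs were closed this would immediately give a uniform positive minimum, and then one takes $\A(\psi_1,\psi_2)$ to be an argmax $W$ and $c$ the minimum; on non-reachable pairs $\A$ can be defined arbitrarily since the profile condition only constrains reachable pairs. The set of reachable pairs need not be closed, but its closure consists of pairs on which distinguishability (hence $g>0$) still holds by a limiting argument, or alternatively one restricts attention to the closure and argues $g$ stays bounded below there; either way compactness delivers $c>0$.

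For the \emph{polynomial-time computability} of~$c$ I would make the above effective. The value $c$ is the minimum over the compact set (the closure of) reachable pairs of the piecewise-affine function $g$; each piece is the affine function $\pr_1(\cdot,W)-\pr_2(\cdot,W)$ for one of the finitely many $W\subseteq\Sigma^m$. The polynomial-time algorithm of~\cite[Algorithm~1 and Theorem~21]{14CK-LICS} that decides whether $d(H_1,H_2)=1$ works by solving linear programs over $n_1+n_2$ variables; I would extract from that construction not only the yes/no answer but an explicit rational lower bound on the relevant optimum — concretely, the linear programs produced there have a solution of encoding size polynomial in the input, so their optimal values, and hence the resulting~$c$, are rationals with polynomially many bits, computable in polynomial time. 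Thus one defines $c$ to be (a suitable positive rational extracted from) that optimum.

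The main obstacle I expect is the last two points handled together: namely pinning down that length~$m$ (rather than some a priori larger, input-dependent horizon) already suffices for a uniform witness set $\A(\psi_1,\psi_2)\subseteq\Sigma^m$, and simultaneously that the resulting constant is not merely positive but polynomial-time computable. This requires going inside the machinery of~\cite{14CK-LICS}: their characterization of $d(H_1,H_2)=1$ via reachability of ``separating'' configurations in a product construction of $m$ states, and the fact that from any reachable pair a separating observation of length at most~$m$ exists, with the gap it creates bounded below by the minimum positive value of the associated linear program. Getting the quantitative bound — carefully tracing that the optima involved are rationals of polynomial bit-size and that the minimization over the (closure of the) reachable set is itself a polynomially-sized linear-programming problem — is the technically delicate part; the compactness/continuity skeleton above is routine by comparison.
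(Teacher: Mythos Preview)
Your high-level skeleton (pointwise positivity plus uniformization) matches the paper's, but there are two genuine gaps.

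\textbf{The closure step.} You correctly observe that every reachable pair $(\psi_1,\psi_2)$ has $g(\psi_1,\psi_2)>0$, and you correctly note that the set of reachable pairs need not be closed. Your fix --- ``its closure consists of pairs on which distinguishability (hence $g>0$) still holds by a limiting argument'' --- is exactly the hard part, and as stated it does not work. Distinguishability (i.e.\ distance~$1$) is not a closed property of distribution pairs: a sequence of reachable pairs, each with distance~$1$, could in principle converge to an \emph{equivalent} pair $(\psi_1,\psi_2)$ with $g(\psi_1,\psi_2)=0$. The paper explicitly warns that for distinguishable HMCs there can exist unreachable pairs with $\dist(\psi_1,\psi_2)=0$. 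Ruling out that such pairs occur as limits of reachable ones is precisely the nontrivial content; it does not follow from continuity of~$g$, and you give no argument for it.

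\textbf{Polynomial time.} Your $g(\psi_1,\psi_2)=\max_{W\subseteq\Sigma^m}(\cdots)$ ranges over doubly-exponentially many sets~$W$; minimizing this piecewise-affine function over the (closure of the) reachable pairs is not a single polynomially-sized LP. You also assume that the closure of the reachable set is describable by polynomially many linear constraints, which you do not establish. Saying one can ``extract'' a rational lower bound from the LPs of~\cite{14CK-LICS} is not a computation.

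The paper's actual route avoids both issues. It first computes (Proposition~\ref{prop-test-set}), via the linear-algebra machinery for HMC equivalence, a \emph{test set} $\Test\subseteq\Sigma^*$ of at most~$m$ words, each of length $<m$, that witnesses non-equivalence whenever it holds; this replaces your doubly-exponential $\max_W$ by a max over at most~$m$ words. Then (Proposition~\ref{prop-lbound}) it solves $|S_1|$ polynomial-size LPs whose feasible regions over-approximate the reachable pairs using only \emph{state-pair} reachability (polynomially many constraints of the form $\psi_2(s_2)=0$ for unreachable $(s_1,s_2)$, plus a dominance constraint). Positivity of each optimum is deduced from \cite[Theorem~21]{14CK-LICS}, which says that $d(H_1,H_2)<1$ iff there exists an equivalent pair satisfying exactly those support constraints; distinguishability rules this out. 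That theorem is what fills the gap your ``limiting argument'' leaves open, and the state-pair over-approximation is what makes the LP polynomial.
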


\subsection{Monitors with Two-Sided Error} \label{sub-two-sided}

In this and the next subsection, we assume that $H_1,H_2$ are distinguishable,
 and fix a profile $(\A,c)$. 
The monitors of this subsection take an observation sequence as input, and at some point output a value from $\{1,2,3\}$ indicating a decision regarding which of the two HMCs generated the observations.
An output of~$3$ indicates that neither of the HMCs could have generated it.
The monitors of this subsection have two-sided errors: the answers $1$ or~$2$ may be wrong (with a small probability).

We define a likelihood-based monitor~$M_2$ (the subscript denotes two-sided error) as follows. 
Monitor~$M_2$ runs in \emph{phases}; in each phase, the monitor receives $m$~observations. 
The monitor runs at most~$N$ phases, where $N \in \N$ is a parameter fixed in advance: choosing a larger~$N$ leads to smaller error probabilities.  
After reading an observation sequence $u$ of length~$N \cdot m$, it
computes the likelihood ratio $\lr(u)$.  
Monitor~$M_2$ outputs~$1$ if $\lr(u) < 1$, and~$2$ if $\lr(u) > 1$.
It may output either $1$ or~$2$ if $\lr(u) = 1$. 
Monitor~$M_2$ needs no access to the function~$\A$.

%We have the following Theorem:
The following theorem says that the observation sequences for which
monitor~$M_2$ outputs~$1$ are much more likely to be  generated by~$H_1$.
By symmetry, the observation sequences for which~$M_2$ outputs~$2$ 
are much more likely to be  generated by~$H_2$. 
\newcommand{\stmtcoralgonelikeli}{
Consider the monitor~$M_2$ that reads the first $N \cdot m$ observations.
Let $\Lang(M_2) \subseteq \Sigma^\omega$ be the set of observation sequences for which $M_2$ outputs~$1$.
Then we have
\[
 \P_1(\io{\Lang(M_2)}_1) - \P_2(\io{\Lang(M_2)}_2) \ge 1 - 2 \exp\left(-\frac{c^2}{18}\cdot N\right)\,.
\]
Hence,
 \begin{align*}
  \P_1(\io{\Lang(M_2)}_1) \ &\ge \ 1 - 2 \exp\left(-\frac{c^2}{18}\cdot N\right) \quad \text{and} \\
  \P_2(\io{\Lang(M_2)}_2) \ &\le \ 2 \exp\left(-\frac{c^2}{18}\cdot N\right)\,. 
 \end{align*}
}
\begin{theorem} \label{thm-alg1-likeli}
\stmtcoralgonelikeli
\end{theorem}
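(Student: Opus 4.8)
The plan is to reduce the behaviour of $M_2$ to a bounded-step random walk and then apply Azuma's inequality, exactly as advertised in the introduction. The key idea is that a single phase (reading $m$ observations from a reachable pair $(\psi_1, \psi_2)$) produces a ``step'' whose expectation is biased by at least $c$ towards the correct HMC, and whose magnitude is bounded. Concretely, define for a phase the indicator $X = 1$ if the $m$ observations read in that phase lie in $\A(\psi_1, \psi_2)$, and $X = 0$ otherwise, where $(\psi_1, \psi_2)$ is the reachable pair that describes the monitor's conditional state at the start of the phase. By the profile property (Proposition~\ref{prop-profile}), under $H_1$ the probability of $X = 1$ exceeds the probability under $H_2$ by at least $c$; hence on a suitable common log-likelihood scale the drift per phase is bounded away from zero, uniformly in the history.

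First I would make the link between the likelihood ratio $\lr(u)$ and the profile precise. The natural quantity to track is $\log \lr(u)$ after $N$ phases, but as Example~\ref{ex-unbounded} warns, $\log \lr$ can jump unboundedly within a phase, so one cannot directly bound its increments. The fix is to replace $\log \lr$ by a surrogate random walk $Z_N$ with bounded increments that nevertheless \emph{dominates} the event $\{\lr(u) \ge 1\}$ in the right direction: for each phase $k$, let the increment be $+1$ if the phase's $m$-block falls into the profile set $\A$ evaluated at the current conditional distributions, and $-1$ otherwise (or, more carefully, a centred version thereof scaled to make increments lie in an interval of length at most $2$). One shows: (i) under $\P_1$ the conditional expectation of each increment given the past is at least some $\mu_1 > 0$ with $\mu_1 \ge c/3$ (after appropriate normalisation, accounting for the worst-case split of the mass $c$ above and below the $H_2$-probability); (ii) under $\P_2$ it is at most $-\mu_2 < 0$ with $\mu_2 \ge c/3$; and (iii) on the event that $M_2$ outputs the wrong answer, the walk $Z_N$ must have deviated from its mean by at least order $\mu_i N$. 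The factor $18$ in the exponent is exactly what comes out of Azuma: with increments of range $2$ and drift $\ge c/3$, the probability of a deviation of $(c/3) N$ is at most $\exp(-((c/3)N)^2 / (2 N \cdot 2^2)) = \exp(-c^2 N / 72)$ — so I would calibrate the normalisation so the constant lands at $18$ (e.g.\ increments in an interval of length $3$, drift $\ge c/3$, giving $\exp(-(c^2/9)N^2 / (2 \cdot 3^2 N)) = \exp(-c^2 N / 162)$, then tune; the precise bookkeeping is routine once the scheme is fixed).

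The second step is to convert the random walk into a statement about $\lr(u)$ itself. Here the crucial observation is that $M_2$ does \emph{not} use $\A$ — it only thresholds $\lr(u)$ at $1$ — so I need to argue that whenever the random-walk monitor would (correctly, with high probability) announce ``$H_1$'', the likelihood ratio is also below $1$. This follows because $\lr(u) = \P_2(\text{output } u)/\P_1(\text{output } u)$ and the profile gives, telescoping over the $N$ phases, a multiplicative bound: on the high-probability event under $\P_1$ that ``most'' phases landed in their respective profile sets, $\pr_1$ dominates $\pr_2$ by a product of per-phase factors each $\ge 1 + \text{(something positive)}$, forcing $\lr(u) < 1$; symmetrically under $\P_2$. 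Formally this is cleanest as: $\P_1(\io{\Lang(M_2)}_1) \ge \P_1(Z_N > 0)$ and $\P_2(\io{\Lang(M_2)}_2) \le \P_2(Z_N > 0)$, since $\{\lr(u) < 1\} \supseteq \{Z_N > 0\}$ up to the tie case, where the set-theoretic inclusion is exactly the content of the profile inequality summed over phases. Then $\P_1(\io{\Lang(M_2)}_1) - \P_2(\io{\Lang(M_2)}_2) \ge \P_1(Z_N>0) - \P_2(Z_N > 0) \ge (1 - \exp(-\tfrac{c^2}{18}N)) - \exp(-\tfrac{c^2}{18}N)$, which is the claimed bound; the two displayed inequalities follow since $\P_1(\io{\Lang(M_2)}_1) \le 1$.

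\textbf{Main obstacle.} The delicate point is step two: establishing the set inclusion $\{Z_N > 0\} \subseteq \{\lr(u) < 1\}$ (and its mirror), i.e.\ that the profile-based ``bounded walk'' genuinely certifies the likelihood ratio without the monitor ever consulting $\A$. This requires carefully choosing what ``the current reachable pair'' means at the start of each phase — it must be $(\cd_1(\delta_{s_{1,0}}, u_{<k}), \cd_2(\delta_{s_{2,0}}, u_{<k}))$ for the prefix $u_{<k}$ seen so far — and then showing that the profile inequality, applied at that pair, contributes the right per-phase multiplicative factor to $\lr$, uniformly over all histories including those of vanishing probability under one of the measures (where conditioning must be handled with care, or the corresponding event simply absorbed into the error term). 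Managing the filtration and the martingale-difference structure under \emph{both} $\P_1$ and $\P_2$ simultaneously, with a single walk $Z_N$, is where the real work lies; everything else is the standard Azuma computation.
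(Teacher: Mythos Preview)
Your first step---building an auxiliary random walk $Z_N$ with bounded increments from the profile sets $\A(\psi_{1,k},\psi_{2,k})$ and applying Azuma---is exactly what the paper does (this is its Theorem~\ref{thm-alg1} for the monitor~$M_2'$), and your drift/range calibration will indeed produce the constant $c^2/18$.

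The genuine gap is your second step. You propose to transfer the Azuma bound from $Z_N$ to $M_2$ via a set inclusion $\{Z_N>0\}\subseteq\{\lr(u)<1\}$, claiming this is ``exactly the content of the profile inequality summed over phases''. It is not. The profile inequality $\pr_1(\psi_1,\A)-\pr_2(\psi_2,\A)\ge c$ is an \emph{aggregate} statement about probabilities; it says nothing about individual observation blocks, so there is no pointwise relationship between ``$v\in\A(\psi_1,\psi_2)$'' and ``$\pr_2(\psi_2,v)/\pr_1(\psi_1,v)<1$''. Worse, your argument is internally inconsistent: the single inclusion $\{Z_N>0\}\subseteq\{\lr(u)<1\}$ would give $\P_1(\lr<1)\ge\P_1(Z_N>0)$ (fine) but also $\P_2(\lr<1)\ge\P_2(Z_N>0)$, which is the wrong direction for your second displayed inequality. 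You would need both inclusions at once, which is impossible unless the events coincide.

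The paper's transfer step is entirely different and does not attempt any pointwise comparison. It uses a Neyman--Pearson-type optimality argument (Lemma~\ref{lem-maximizing-event-countable}): among \emph{all} sets $W\subseteq\Sigma^{N\cdot m}$, the likelihood set $W=\{u:\pr_1(\delta_{s_{1,0}},u)\ge\pr_2(\delta_{s_{2,0}},u)\}$ maximizes the difference $\P_1(\io{W\Sigma^\omega}_1)-\P_2(\io{W\Sigma^\omega}_2)$. Since $M_2'$ also decides based on the length-$N\cdot m$ prefix, its acceptance set is one such $W$, and therefore
\[
\P_1(\io{\Lang(M_2)}_1)-\P_2(\io{\Lang(M_2)}_2)\ \ge\ \P_1(\io{\Lang(M_2')}_1)-\P_2(\io{\Lang(M_2')}_2)\ \ge\ 1-2\exp\bigl(-\tfrac{c^2}{18}N\bigr),
\]
from which the two individual bounds follow exactly as you say. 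So the missing idea is: do not compare $Z_N$ and $\lr$ pathwise; instead compare the two \emph{monitors} via the optimality of the likelihood-ratio test on the difference $\P_1-\P_2$.
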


Proving the bounds of Theorem~\ref{thm-alg1-likeli} is challenging 
due to the following reasons.
For $k \ge 0$ define a random variable $L_k : \{s_{1,0}\} S_1^\omega \to \Q$ by
\[
 L_k(s_{1,0} s_1 s_2 \cdots ) := \lr\big(O(s_{1,0}) O(s_1) O(s_2) \cdots O(s_{k-1})\big)\,.
\]
Denote by $\E_1$ the expectation with respect to~$\P_1$.
It was proved in \cite[proof of Proposition~6]{14CK-LICS} that 
$
\E_1(L_{k+1} \mid L_k = x) \ = \ x
$
holds for all $x \in \Q$, i.e., the sequence $L_0, L_1, \ldots$ is a martingale.
Unfortunately, the differences $\card{L_{k+1} - L_k}$ are not bounded,
neither are the differences $\card{\log L_{k+1} - \log L_k}$, as the following example shows.

\begin{figure*}[ht]
\begin{center}
\begin{tikzpicture}[scale=2.5,LMC style]
\node (H1) at (-1.8,0.1) {\large $H_1 : $};
\node[state] (s0) at (-1,0) {$s_0$};
\node (a0) at (-1,-0.2) {$a$};
\node[state] (s1) at ( 0,-0.4) {$s_1$};
\node (b01) at (-0,-0.6) {$a$};
\node[state] (s2) at ( 0,0.4) {$s_2$};
\node (b0) at (-0,+0.2) {$b$};
\node (H2) at (1.2,0.1) {\large $H_2 : $};
\node[state] (t0) at (+2,0) {$t_0$};
\node (a1) at (+2,-0.2) {$a$};
\node[state] (t1) at (+3,0) {$t_1$};
\node (b1) at (+3,-0.2) {$b$};
\path[->] (-1.5,0) edge (s0);
\path[->] ( 1.5,0) edge (t0);
\path[->] (s0) edge [loop,out=130,in=90,looseness=13] node[above] {$\frac13$} (s0);
\path[->] (s0) edge [bend right=18] node[above] {$\frac13$} (s2);
\path[->] (s0) edge  node[below] {$\frac13$} (s1);
\path[->] (s2) edge [bend right=30] node[above] {$\frac12$} (s0);
\path[->] (s1) edge [loop,out=20,in=340,looseness=13] node[right] {$1$} (s1);
\path[->] (s2) edge [loop,out=20,in=340,looseness=13] node[right] {$\frac12$} (s2);
\path[->] (t0) edge [loop,out=110,in=70,looseness=13] node[above] {$\frac12$} (t0);
\path[->] (t1) edge [loop,out=110,in=70,looseness=13] node[above] {$\frac12$} (t1);
\path[->] (t0) edge [bend left=18] node[above] {$\frac12$} (t1);
\path[->] (t1) edge [bend left=18] node[below] {$\frac12$} (t0);
\end{tikzpicture}
\end{center}
\caption{Two HMCs where the difference in log-likelihood ratios is unbounded}
\label{fig-unbounded}
\end{figure*}
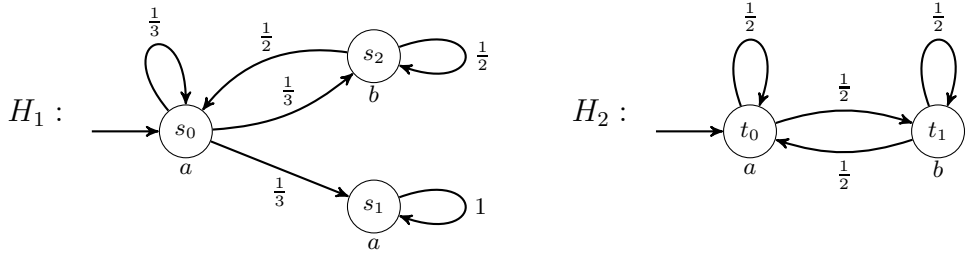

\begin{example} \label{ex-unbounded}
Consider the HMCs $H_1, H_2$ in Figure~\ref{fig-unbounded}.
For $n>1$, the probability that $H_1$ generates the
string $a^n$  is $(\frac{1}{3})^{n-1}+\frac{1}{3}\cdot \sum_{i=0}^{
  n-2}(\frac{1}{3})^i$ which is easily shown to be
$\frac{1}{2}(1+(\frac{1}{3})^{n-1})$, and the probability that 
$H_1$ generates $a^nb$ is $(\frac{1}{3})^n$. The corresponding
probabilities for $H_2$ are $(\frac{1}{2})^{n-1}$ and
$(\frac{1}{2})^n$, respectively.
Now consider any $\alpha\in \{s_0^ns_2\} \{s_0,s_1,s_2\}^\omega$, for some $n>1$. The two likelihood
ratios $L_n(\alpha)$ and $L_{n+1}(\alpha)$
corresponding to the length
$n$ and length $n+1$ prefixes of $\alpha$, are given by 
 $L_n(\alpha)=
 \frac{(\frac{1}{2})^{n-1}}{\frac{1}{2}(1+(\frac{1}{3})^{n-1})}$ and
$L_{n+1}(\alpha) = (\frac{3}{2})^n$. Since $n>1$, we see that $L_n(\alpha)<2\cdot
(\frac{1}{2})^{n-1} \le 1$. Hence,
$\frac{L_{n+1}(\alpha)}{L_{n}(\alpha)}> (\frac{3}{2})^n$.
So we have that $\log (L_{n+1}(\alpha))-\log(L_n(\alpha))>
n\cdot \log(\frac{3}{2})$,  which is
unbounded with increasing $n$. In a more general case, if $\alpha$ has $b$ appearing
infinitely often with an increasing number of $a$-symbols between
two successive $b$-symbols, then the difference in the log-likelihood ratio of
two successive prefixes of $\alpha$, with the second prefix ending
with $b$, is unbounded.
\qed
\end{example}

This problem of unbounded differences between subsequent log-likelihood ratios prohibits a standard error analysis of hypothesis-testing methods from sequential analysis~\cite{wetherill}.
Moreover, Azuma's inequality then does not yield an exponentially
decaying error bound. As a
consequence, we cannot directly prove the bounds of Theorem~\ref{thm-alg1-likeli}.
Therefore, in this subsection, we take a detour.
First we develop another  monitor~$M_2'$ that is not based on
likelihoods but is based on a random walk.
Then we prove error bounds for~$M_2'$.
Then we show that the error bounds for~$M_2'$  carry over to the
likelihood-based monitor~$M_2$. 

The monitor~$M_2'$ also runs in  $N$ phases, receiving
$m$~observations in each phase.  
The monitor maintains two probability distributions $\psi_1\in \Distr{S_1},\:\psi_2\in\Distr{S_2}$, and a variable~$x$ that takes rational values.
Initially, $\psi_1,\psi_2$ are set to $\delta_{s_{1,0}},\delta_{s_{2,0}}$ respectively, 
and $x$ is initialized to $0$.
The monitor keeps track of
$\psi_i\:=\cd_i(\delta_{s_{i,0}},u)$, for $i=1,2$, where $u$ is the observation string received thus far.
The variable~$x$ indicates a current estimate about which of the two HMCs is being observed:
 a negative value of~$x$ indicates a preference for~$H_1$; a positive value indicates a preference for~$H_2$.
In each phase, $M_2'$ waits until it gets the next $m$ observations and then updates $x$, $\psi_1$ and~$\psi_2$. 

We describe a phase of~$M_2'$.
Let $\psi_1, \psi_2, x$ be the values at the end of the previous phase.
Let 
$p_1 = \pr_1(\psi_1,\A(\psi_1,\psi_2))$ and 
$p_2 = \pr_2(\psi_2,\A(\psi_1,\psi_2))$.
By the definition of a profile we have $p_1 - p_2 \ge c > 0$.
Denote by $v \in \Sigma^m$ the string of observations received in the current phase.
Assume that $\pr_1(\psi_1,v)>0$ and $\pr_2(\psi_2,v)>0$ (i.e., $v$ can be generated with non-zero
probability by both $H_1,H_2$ from $\psi_1,\psi_2$ respectively). 
If $p_1+p_2 \le 1$ then $x$ is updated as follows:
\[
x := \begin{cases}
x - 1                         & \text{ if $v \in \A(\psi_1,\psi_2)$} \\
x + \frac{p_1+p_2}{2-p_1-p_2} & \text{ if $v \not\in \A(\psi_1,\psi_2)$}
\end{cases}
\]
If $p_1+p_2 > 1$ then $x$ is updated as follows:
\[
x := \begin{cases}
x - \frac{2-p_1-p_2}{p_1+p_2} & \text{ if $v \in \A(\psi_1,\psi_2)$} \\
x + 1                         & \text{ if $v \not\in \A(\psi_1,\psi_2)$}
\end{cases}
\]
Note that in all cases, the value of $x$ is increased or decreased by at most $1$.
After this, $\psi_1,\psi_2$ are set to $\cd_1(\psi_1,v)$ and $\cd_2(\psi_2,v)$ respectively, and the phase is finished.
On the other hand, if $\pr_1(\psi_1,v)>0$ and $\pr_2(\psi_2,v)=0$ then $1$ is output;
if $\pr_1(\psi_1,v)=0$ and $\pr_2(\psi_2,v)>0$ then $2$ is output;
if $\pr_1(\psi_1,v)=0$ and $\pr_2(\psi_2,v)=0$ then $3$ is output.
In those cases the monitor terminates immediately.

After $N$ phases, if $x\leq 0$ then the monitor~$M_2'$ outputs~$1$, otherwise it outputs~$2$.
An output of~$i$ indicates that the sequence is believed to be generated by~$H_i$.
Note that $M_2'$---in contrast to~$M_2$---needs access to the function~$\A$.
By constructing a supermartingale and applying Azuma's inequality we obtain:
\newcommand{\stmtthmalgone}{
Consider the monitor~$M_2'$ running $N$ phases.
Let $\Lang(M_2') \subseteq \Sigma^\omega$ be the set of observation sequences for which $M_2'$ outputs~$1$.
Then, 
 \begin{align*}
  \P_1(\io{\Lang(M_2')}_1) \ &\ge \ 1-\exp\left(-\frac{c^2}{18}\cdot N\right) \quad \text{and} \\ 
  \P_2(\io{\Lang(M_2')}_2) \ &\le \ \exp\left(-\frac{c^2}{18}\cdot N\right)\,. 
 \end{align*}
}%
\begin{theorem} \label{thm-alg1}
\stmtthmalgone
\end{theorem}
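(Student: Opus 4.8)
The plan is to track the counter~$x$ maintained by~$M_2'$ and show that, under~$\P_1$, it behaves like a random walk whose conditional drift per phase is at most~$-c/2$ and whose step size is at most~$1$; symmetrically, under~$\P_2$ the drift per phase is at least~$+c/2$. Azuma's inequality applied to an associated (super)martingale then yields the two exponential bounds. The key lemma is: if a phase starts from the pair of conditional distributions $(\psi_1,\psi_2)$, then the conditional expected change of~$x$ during that phase is at most~$-c/2$ under~$\P_1$ and at least~$+c/2$ under~$\P_2$.

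To prove that lemma I would compute directly. As in the description of~$M_2'$, with $p_i := \pr_i(\psi_i,\A(\psi_1,\psi_2))$ we have $p_1 - p_2 \ge c$ (the pair at the start of each phase is reachable). Under~$\P_1$ the block~$v$ of $m$ observations received during the phase lies in $\A(\psi_1,\psi_2)$ with probability exactly~$p_1$, so a short case split on whether $p_1+p_2\le 1$ or $p_1+p_2>1$ (the two branches of the update rule) shows that the conditional expected change of~$x$ equals $\tfrac{(p_1+p_2)-2p_1}{2-(p_1+p_2)}$ in the first case and $\tfrac{(p_1+p_2)-2p_1}{p_1+p_2}$ in the second. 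In both cases the numerator is $p_2-p_1\le -c$ and the denominator lies in $[1,2]$, so the expected change is at most~$-c/2$; and, as already noted, $|\Delta x|\le 1$ always. Under~$\P_2$ the block~$v$ lies in $\A(\psi_1,\psi_2)$ with probability~$p_2$, and the same computation gives expected change at least~$+c/2$.

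Next I would set up the martingale. Let $\mathcal F_k$ be generated by the first $km$ observations, let $x_k$ be the value of~$x$ after $k$ completed phases, and let $\tau\le N$ be the number of phases $M_2'$ completes before it halts (after $N$ phases or by early termination), extending $x_k := x_\tau$ for $k>\tau$. Put $Y_k := x_k + \tfrac c2\,(k\wedge\tau)$. By the lemma, $(Y_k)_{k\ge 0}$ is a $\P_1$-supermartingale with $Y_0=0$ and $|Y_k-Y_{k-1}|\le 1+\tfrac c2\le\tfrac32$ (using $c\le 1$). Under~$\P_1$ the only possible early termination is the benign one caused by $\pr_2(\psi_2,v)=0$ (output~$1$), and output~$3$ cannot occur because every observed prefix has positive probability in~$H_1$; hence $M_2'$ outputs~$2$ only on $\{\tau=N,\ x_N>0\}$, on which $Y_N>\tfrac c2 N$. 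Azuma's inequality then gives
\[
 \P_1\!\Big(Y_N-Y_0\ge\tfrac c2 N\Big)\ \le\ \exp\!\Big(-\frac{(cN/2)^2}{2N(3/2)^2}\Big)\ =\ \exp\!\Big(-\frac{c^2}{18}N\Big),
\]
and since output~$3$ has probability~$0$, $\P_1(\io{\Lang(M_2')}_1)=1-\P_1(M_2'\text{ outputs }2)\ge 1-\exp(-\tfrac{c^2}{18}N)$. For the second bound I would argue symmetrically: under~$\P_2$, with $\tau'$ the analogous halting count (early termination here is the benign one caused by $\pr_1(\psi_1,v)=0$, output~$2$), the process $Y'_k:=-x_k+\tfrac c2\,(k\wedge\tau')$ is a $\P_2$-supermartingale with the same increment bound, $M_2'$ outputs~$1$ only on $\{\tau'=N,\ x_N\le 0\}$ where $Y'_N\ge\tfrac c2 N$, and Azuma gives $\P_2(\io{\Lang(M_2')}_2)=\P_2(M_2'\text{ outputs }1)\le\exp(-\tfrac{c^2}{18}N)$.

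The main obstacle is the drift computation of the second paragraph: one must verify that the branch-dependent increments built into~$M_2'$ simultaneously keep $|\Delta x|\le 1$ and produce a conditional drift of magnitude at least~$c/2$ under each~$\P_i$ — this is precisely what those particular constants are engineered for. Once that is in place, the remainder is a routine supermartingale/Azuma argument; the only care needed is the bookkeeping for the (benign) early-termination stopping times and the fact that output~$3$ cannot occur under~$\P_1$ or~$\P_2$.
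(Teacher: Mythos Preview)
Your proposal is correct and follows essentially the same approach as the paper: compute the per-phase conditional drift of~$x$ explicitly (obtaining $\frac{p_2-p_1}{2-p_1-p_2}$ or $\frac{p_2-p_1}{p_1+p_2}$, each at most $-c/2$ under~$\P_1$), form the compensated process $Y_k = x_k + \tfrac{c}{2}k$ as a supermartingale with increments bounded by $1+\tfrac{c}{2}\le\tfrac32$, and apply Azuma's inequality. Your handling of the early-termination stopping time is in fact slightly more careful than the paper's, which leaves that bookkeeping implicit.
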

Hence the error probability decays exponentially with~$N$.
%Conversely, given a target error probability of at most~$\varepsilon$,
%Theorem~\ref{thm-alg1} says that it suffices to choose $N \ge \frac{18}{c^2}\cdot \log \left(\frac{1}{\varepsilon}\right)$ when employing monitor $M_2'$.
To prove Theorem~\ref{thm-alg1-likeli} we show (\iftechrep{in the appendix}{in~\cite{16KS-TR}}) that the same error bound, up to a factor of~$2$, holds for the likelihood-based monitor~$M_2$.
The authors are not aware of a proof of Theorem~\ref{thm-alg1-likeli} that avoids reasoning about a monitor like~$M_2'$.
The proof shows that the difference $\P_1(\io{\Lang(M_2)}_1) -
\P_2(\io{\Lang(M_2)}_2)$ cannot be increased by any other monitor
that is based solely on the first $N \cdot m$ observations: $M_2$ is
optimal in that respect.

To guarantee an error probability bound of at most~$\varepsilon$ of the likelihood-based monitor $M_2$, we set  $N =
\lceil\frac{18}{c^2}\cdot \log
\left(\frac{2}{\varepsilon}\right)\rceil$.

%
%For each $i=1,2$, it maintains the conditional distribution
%$\cd_i(\delta_{s_{i,0}},w)$ and the probability
%$\pr_i(\delta_{s_{i,0}},w)$. Observe that,
%$\cd_i(\delta_{s_{i,0}},\varepsilon)(s_{i,0})=1$ and
%$\pr_i(\delta_{s_{i,0}},\varepsilon)=1$ where $\varepsilon$ is the empty
%sequence. For any $a\in \Sigma$, we have 
%$$\pr_i(\delta_{s_{i,0}},wa)=\pr_i(\delta_{s_{i,0}},w)\pr_i(\cd_i(\delta_{s_{i,0}},w),a)$$
%where $$\pr_i(\cd_i(\delta_{s_{i,0}},w),a)\:=\sum_{s\in S_{i}}\sum_{t:O_{i}(t)=a}
%\cd_i(\delta_{s_{i,0}},w)(s)\phi_i(s,t)$$ and
%\[
%\cd_i(\delta_{s_{i,0}},wa)(t) := \begin{cases}
%\frac{\sum_{s\in S_{i}}     \cd_i(\delta_{s_{i,0}},w)(s)\phi_i(s,t)}{\pr_i(\cd_i(\delta_{s_{i,0}},w),a)}                   & \text{ if $O_i(t)=a$} \\
%0 & \text{ otherwise}
%\end{cases}
%\]
%From the above equations we see that for any $u\in \Sigma^m$,
%$\pr_i(\delta_{s_{i,0}},wu)$ and $\cd_i(\delta_{s_{i,0}},wu)$ can be
%computed
%from $\pr_i(\delta_{s_{i,0}},w)$ and $\cd_i(\delta_{s_{i,0}},w)$ using
%$O(m^3)$ multiplications and additions. The value of $\lr(wu)$ can be
%computed with the same complexity.

%
\begin{example} \label{ex-1}
Figure~\ref{fig-ex-1} shows two HMCs $H_1, H_2$ with a parameter $\delta \in (0, \frac14]$.
\begin{figure*}[ht]
\begin{center}
\begin{tikzpicture}[scale=2.5,LMC style]
\node (H1) at (-1.8,0.1) {\large $H_1 : $};
\node[state] (s0) at (-1,0) {$s_0$};
\node (a0) at (-1,-0.2) {$a$};
\node[state] (s1) at ( 0,0) {$s_1$};
\node (b0) at (-0,-0.2) {$b$};
\node (H2) at (1.2,0.1) {\large $H_2 : $};
\node[state] (t0) at (+2,0) {$t_0$};
\node (a1) at (+2,-0.2) {$a$};
\node[state] (t1) at (+3,0) {$t_1$};
\node (b1) at (+3,-0.2) {$b$};
\path[->] (-1.5,0) edge (s0);
\path[->] ( 1.5,0) edge (t0);
\path[->] (s0) edge [loop,out=110,in=70,looseness=13] node[above] {$\frac12 + \delta$} (s0);
\path[->] (s0) edge [bend left=20] node[above] {$\frac12 - \delta$} (s1);
\path[->] (s1) edge [bend left=20] node[below] {$\frac12 + \delta$} (s0);
\path[->] (s1) edge [loop,out=110,in=70,looseness=13] node[above] {$\frac12 - \delta$} (s1);
\path[->] (t0) edge [loop,out=110,in=70,looseness=13] node[above] {$\frac12 ´- \delta$} (t0);
\path[->] (t0) edge [bend left=20] node[above] {$\frac12 + \delta$} (t1);
\path[->] (t1) edge [bend left=20] node[below] {$\frac12 - \delta$} (t0);
\path[->] (t1) edge [loop,out=110,in=70,looseness=13] node[above] {$\frac12 + \delta$} (t1);
\end{tikzpicture}
\end{center}
\caption{Two distinguishable HMCs with a parameter $\delta \in (0,\frac14]$}
\label{fig-ex-1}
\end{figure*}
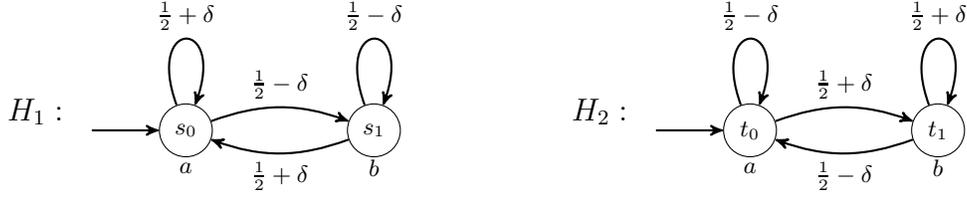
In every step except the first one, $H_1$ outputs~$a$ with probability $\frac12 + \delta$, and $b$ with probability $\frac12 - \delta$.
For $H_2$ the probabilities are reversed.
The HMCs are distinguishable.
The intuitive reason is that $H_1$ tends to output more $a$-symbols than $b$-symbols, whereas $H_2$ tends to output more $b$-symbols than $a$-symbols, and this difference is exhibited in the long run.
Intuitively speaking, the smaller $\delta$ is, the ``less distinguishable'' are $H_1$ and~$H_2$.
We will show later that there is a profile with $c = \delta$.
By Theorem~\ref{thm-alg1-likeli}, the probability that $M_2$ mistakes $H_2$ for~$H_1$ decays exponentially.
More specifically, for an error bound of~$\varepsilon$ it suffices to make $B \delta^{-2} \log\frac{1}{\varepsilon}$ observations, for a constant~$B>0$.
%We show in Appendix~\ref{app-monitors} 
It can be shown that there is a constant $d>0$ such that $M_2$ needs, for small~$\varepsilon$, at least $d \delta^{-2} \log\frac{1}{\varepsilon}$ observations to push the error probability below~$\varepsilon$.
Hence, for the HMCs from Figure~\ref{fig-ex-1}, the bound of Theorem~\ref{thm-alg1-likeli} is asymptotically tight.
As mentioned after the proof of Theorem~\ref{thm-alg1-likeli}, the likelihood-based monitor is essentially optimal among monitors that observe a fixed-length prefix.
So the bound from Theorem~\ref{thm-alg1} is also asymptotically tight.
\qed
\end{example}

\subsection{Monitors with One-Sided Error} \label{sub-monitor-one-sided}
Now we present~$M_1$, a likelihood-based monitor with one-sided error.
Monitor~$M_1$ uses a threshold parameter $\low \in (0,1]$.
For each $N>0$, after reading a prefix $v$, of length~$N
\cdot m$, of observations, it computes the likelihood ratio $\lr(v)$. If
$\lr(v)\leq \low$, it terminates outputting~$1$, otherwise it continues. 

For any infinite sequence $u$ and integer $i>0$, let $u[i]$ denote the prefix of $u$ of length $i$.
We fix an integer $N>0$. 
Let $U_N$ be the set of all $u\in \Sigma^{\omega}$ such that $\lr(u[N\cdot m])\leq \exp\bigl(-\frac{c^2}{36}\cdot N\bigr)$. 
Recall from Theorem~\ref{thm-alg1-likeli} the set
$\Lang(M_2)$ of observation sequences for which $M_2$ outputs $1$. 
It should be easy to see that $U_N\subseteq \Lang(M_2)$.
We need the following technical lemma.
\newcommand{\stmtlemtechnical}{
$\displaystyle\P_1(\io{U_N}_1)\ \geq\ 1-4
\exp\Big(-\frac{c^2}{36}\cdot N\Big)$
}
\begin{lemma}\label{lem-technical}
\stmtlemtechnical
\end{lemma}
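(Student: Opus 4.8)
The plan is to build an explicit non-negative supermartingale out of the likelihood ratio and finish with Markov's inequality.

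Write the first $N\cdot m$ observations of an input sequence as a concatenation $v^{(1)}v^{(2)}\cdots v^{(N)}$ of blocks $v^{(k)}\in\Sigma^m$, and for $i\in\{1,2\}$ and $0\le k\le N$ let $\psi_i^{(k)}\in\Distr{S_i}$ be the conditional distribution of $H_i$ after the first $km$ observations, i.e.\ $\psi_i^{(0)}:=\delta_{s_{i,0}}$ and $\psi_i^{(k)}:=\cd_i(\psi_i^{(k-1)},v^{(k)})$. Iterating the chain rule $\pr_i(\psi,uv)=\pr_i(\psi,u)\cdot\pr_i(\cd_i(\psi,u),v)$ (which follows from~\eqref{eq-sub-to-pr} and the definition of~$\cd_i$) and dividing the $i=2$ equation by the $i=1$ equation gives, $\P_1$-almost surely (and trivially when some numerator below vanishes, since then $\lr(u[Nm])=0\in U_N$),
\[
 \lr(u[Nm]) \;=\; \prod_{k=1}^{N}\frac{\pr_2(\psi_2^{(k-1)},v^{(k)})}{\pr_1(\psi_1^{(k-1)},v^{(k)})}\,.
\]
Here $\psi_i^{(k-1)}$ and $\lr(u[(k-1)m])$ depend only on $v^{(1)},\dots,v^{(k-1)}$, and all denominators above are positive on a set of full $\P_1$-measure (if $\pr_1(\delta_{s_{1,0}},w)=0$ then $\P_1(\io{w\Sigma^\omega}_1)=0$), so these manipulations are legitimate.

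The crucial ingredient is a one-block estimate for the square root of the block likelihood ratio. Fix $k$ and condition on $v^{(1)},\dots,v^{(k-1)}$; under $\P_1$ the block $v^{(k)}$ then has distribution $v\mapsto\pr_1(\psi_1^{(k-1)},v)$ on $\Sigma^m$, so, writing $a_v:=\pr_1(\psi_1^{(k-1)},v)$ and $b_v:=\pr_2(\psi_2^{(k-1)},v)$ (hence $\sum_v a_v=\sum_v b_v=1$),
\[
 \E_1\Bigl[\textstyle\sqrt{\frac{\pr_2(\psi_2^{(k-1)},v^{(k)})}{\pr_1(\psi_1^{(k-1)},v^{(k)})}}\ \Big|\ v^{(1)},\dots,v^{(k-1)}\Bigr] \;=\; \sum_{v\in\Sigma^m}\sqrt{a_v\,b_v}\,.
\]
Factoring $|a_v-b_v|=|\sqrt{a_v}-\sqrt{b_v}|(\sqrt{a_v}+\sqrt{b_v})$ and using Cauchy--Schwarz together with $\sum_v(\sqrt{a_v}+\sqrt{b_v})^2\le 2\sum_v(a_v+b_v)=4$ yields the standard Hellinger--total-variation bound $\bigl(\tfrac12\sum_v|a_v-b_v|\bigr)^2\le 2\bigl(1-\sum_v\sqrt{a_vb_v}\bigr)$. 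The pair $(\psi_1^{(k-1)},\psi_2^{(k-1)})$ is the initial pair (for $k=1$, for which the profile inequality is available, as needed already for the analysis of~$M_2'$) or a reachable pair (for $k\ge 2$), so by the profile property $\tfrac12\sum_v|a_v-b_v|\ge \pr_1(\psi_1^{(k-1)},\A(\psi_1^{(k-1)},\psi_2^{(k-1)}))-\pr_2(\psi_2^{(k-1)},\A(\psi_1^{(k-1)},\psi_2^{(k-1)}))\ge c$, hence $\sum_v\sqrt{a_vb_v}\le 1-\tfrac{c^2}{2}\le e^{-c^2/2}$. Therefore, setting $M_0:=1$ and $M_k:=\sqrt{\lr(u[km])}\cdot e^{c^2k/2}$ for $1\le k\le N$, the product formula and this estimate show that $(M_k)_{0\le k\le N}$ is a non-negative supermartingale under~$\P_1$ with $M_0=1$, so $\E_1[M_N]\le 1$. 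On the event $\{\lr(u[Nm])>\exp(-\tfrac{c^2}{36}N)\}$ we have $M_N>\exp\!\bigl(c^2N(\tfrac12-\tfrac1{72})\bigr)=\exp(\tfrac{35}{72}c^2N)$, so Markov's inequality gives
\[
 \P_1\bigl(\lr(u[Nm])>\exp(-\tfrac{c^2}{36}N)\bigr)\;\le\;\exp\!\bigl(-\tfrac{35}{72}c^2N\bigr)\;\le\;\exp\!\bigl(-\tfrac{c^2}{36}N\bigr)\;\le\;4\exp\!\bigl(-\tfrac{c^2}{36}N\bigr)\,,
\]
and the complement of this event is exactly $\io{U_N}_1$, which proves the lemma (in fact with room to spare).

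The main obstacle is precisely the one the surrounding text stresses: the per-step, and even per-block, log-likelihood ratio is unbounded above (Example~\ref{ex-unbounded}), so a direct Azuma or Chernoff argument on $\sum_k\log(\cdot)$ is unavailable. The device that circumvents this is to work with the $\tfrac12$-th power rather than the logarithm: the one-block conditional expectation of the square root of the block likelihood ratio is a Bhattacharyya coefficient, which the profile together with the elementary Hellinger--total-variation inequality bounds away from~$1$ by a constant depending only on~$c$; this makes $M_k$ contract geometrically in expectation, after which Markov's inequality has ample slack. The only remaining points that need care — all routine — are the $\P_1$-a.s.\ positivity of the denominators $\pr_1(\psi_1^{(k-1)},v^{(k)})$, the measurability of $\psi_i^{(k-1)}$ and $\lr(u[(k-1)m])$ with respect to the first $k-1$ blocks, and the availability of the profile inequality at the initial pair.
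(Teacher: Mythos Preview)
Your proof is correct and takes a genuinely different route from the paper. The paper proves the lemma by contradiction, leaning on Theorem~\ref{thm-alg1-likeli} (and hence on the detour through the random-walk monitor~$M_2'$ and Azuma's inequality): it splits $\Lang(M_2)=U_N\cup V_N$, bounds $\P_1(\io{V_N}_1)$ via the likelihood-ratio constraint on~$V_N$ together with $\P_2(\io{\Lang(M_2)}_2)\le 2\exp(-\tfrac{c^2}{18}N)$, and shows that a failure of the lemma would force $\P_1(\io{\Lang(M_2)}_1)$ below the lower bound of Theorem~\ref{thm-alg1-likeli}. Your argument is direct and self-contained: you build the non-negative supermartingale $M_k=\sqrt{\lr(u[km])}\,e^{c^2k/2}$ by bounding the per-block Bhattacharyya coefficient $\sum_v\sqrt{a_vb_v}\le 1-\tfrac{c^2}{2}$ from the profile via the Hellinger--total-variation inequality, then apply Markov. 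This sidesteps precisely the obstacle highlighted after Theorem~\ref{thm-alg1-likeli} (unbounded log-likelihood increments), avoids any dependence on~$M_2'$ or Azuma, and in fact yields the much stronger exponent $\tfrac{35}{72}c^2$ in place of $\tfrac{1}{36}c^2$. The paper's approach has the organizational advantage of reusing an already-proved theorem; yours is more elementary, gives a sharper constant, and would allow one to bypass the random-walk monitor altogether in proving Lemma~\ref{lem-technical}. Your handling of the edge cases (vanishing numerator, availability of the profile at the initial pair) is correct: when $O_1(s_{1,0})=O_2(s_{2,0})$ the initial pair is reachable via the one-letter string, and otherwise $\lr(u[Nm])=0$ $\P_1$-a.s.\ so the lemma is trivial.
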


This allows us to prove the following theorem:
\newcommand{\stmtthmalgtwolikeli}{
Consider the monitor~$M_1$ with threshold parameter $\low \in (0,1]$.
Let $\Lang(M_1) \subseteq \Sigma^\omega$ be the set of observation sequences for which $M_1$ terminates (and hence outputs~$1$).
Then,
 \begin{align*}
  \P_1(\io{\Lang(M_1)}_1) \  &=  \ 1 \qquad \text{and} \\
  \P_2(\io{\Lang(M_1)}_2) \  &\le \ \low\,.
 \end{align*}
}
\begin{theorem} \label{thm-alg2}
\stmtthmalgtwolikeli
\end{theorem}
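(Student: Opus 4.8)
The plan is to analyze the two claims separately, both relying on the likelihood ratio martingale $(L_k)_{k\ge0}$ and on Lemma~\ref{lem-technical}. For the first claim, $\P_1(\io{\Lang(M_1)}_1) = 1$, I would argue as follows. Monitor $M_1$ terminates on an observation sequence $u$ precisely if $\lr(u[N\cdot m]) \le \low$ for some $N>0$. Now for each fixed $N$, Lemma~\ref{lem-technical} gives $\P_1(\io{U_N}_1) \ge 1 - 4\exp(-\tfrac{c^2}{36} N)$, where $U_N$ is the set of sequences with $\lr(u[N\cdot m]) \le \exp(-\tfrac{c^2}{36} N)$. Since $\low > 0$, we have $\exp(-\tfrac{c^2}{36}N) \le \low$ for all sufficiently large~$N$, say $N \ge N_0$, and then $U_N \subseteq \Lang(M_1)$ for such~$N$. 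Hence $\P_1(\io{\Lang(M_1)}_1) \ge \P_1(\io{U_N}_1) \ge 1 - 4\exp(-\tfrac{c^2}{36}N)$ for every $N \ge N_0$; letting $N \to \infty$ yields $\P_1(\io{\Lang(M_1)}_1) = 1$.

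For the second claim, $\P_2(\io{\Lang(M_1)}_2) \le \low$, I would use a change-of-measure / optional-stopping argument. Under $\P_2$, consider the stopping time $T$ equal to the first phase boundary $N\cdot m$ at which $\lr(v) \le \low$ (with $T = \infty$ if the monitor never terminates). The key fact is that for any prefix $u \in \Sigma^*$, the $\P_2$-probability of the cylinder $\io{u\Sigma^\omega}_2$ equals $\lr(u)$ times the $\P_1$-probability of $\io{u\Sigma^\omega}_1$, since $\lr(u) = \pr_2(\delta_{s_{2,0}},u)/\pr_1(\delta_{s_{1,0}},u)$ (when the denominator is positive; the degenerate cases where $H_1$ cannot produce $u$ contribute nothing to $\Lang(M_1)$ because the monitor's likelihood ratio is only evaluated along sequences that $H_1$ can generate with positive probability, and in any case $\Lang(M_1)$ is defined via prefixes $v$ with $\lr(v)$ finite). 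Summing over the disjoint minimal prefixes $v$ at which termination first occurs, $\P_2(\io{\Lang(M_1)}_2) = \sum_v \pr_2(\delta_{s_{2,0}},v) = \sum_v \lr(v)\,\pr_1(\delta_{s_{1,0}},v) \le \low \sum_v \pr_1(\delta_{s_{1,0}},v) \le \low$, using $\lr(v) \le \low$ on each such~$v$ and the fact that the cylinders $\io{v\Sigma^\omega}_1$ are pairwise disjoint.

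The main obstacle is the careful handling of the degenerate cases in the second claim: sequences on which $\pr_1(\delta_{s_{1,0}},\cdot)$ hits zero (so $\lr$ is not well-defined as a finite number) and the precise bookkeeping of the minimal terminating prefixes to ensure the cylinder sets are disjoint and the sum telescopes correctly. I would address this by noting that if $H_1$ assigns probability zero to some prefix $v$ then either the monitor has already terminated earlier (on a strictly shorter prefix) or $v \notin \io{\Lang(M_1)}_1$ contributes to the event only through sequences of $\P_2$-measure that must be separately bounded --- but in fact the monitor only ever outputs~$1$, and it does so based on $\lr(v) \le \low$, which presupposes a finite positive value of $\pr_1(\delta_{s_{1,0}},v)$; hence every terminating prefix $v$ has $\pr_1(\delta_{s_{1,0}},v) > 0$ and the inequality $\pr_2(\delta_{s_{2,0}},v) = \lr(v)\,\pr_1(\delta_{s_{1,0}},v) \le \low\,\pr_1(\delta_{s_{1,0}},v)$ is valid as stated. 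Once this is pinned down, the summation over disjoint minimal terminating cylinders is routine.
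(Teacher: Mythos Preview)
Your proposal is correct and matches the paper's proof essentially line for line: the first claim is obtained exactly as you describe, via $U_N \subseteq \Lang(M_1)$ for large~$N$ together with Lemma~\ref{lem-technical} and letting $N\to\infty$; the second claim is proved by summing $\pr_2$ over the prefix-free set of minimal terminating words~$v$ (the paper calls this set~$X$ and explicitly requires $\pr_1(\delta_{s_{1,0}},v)>0$, which is precisely the degenerate-case handling you work out) and bounding $\pr_2(v)=\lr(v)\,\pr_1(v)\le \low\cdot\pr_1(v)$. Your discussion of the edge cases is a bit more elaborate than the paper's, but the substance is identical.
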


Now we analyze the response time of~$M_1$ taken on observation sequences generated by~$H_1$.
Formally, we define a random variable $T: \{s_{1,0}\} S_1^\omega \to
\N$ such that $T$ is the number of observations made by monitor~$M_1$
before outputting~$1$.
The following proposition bounds the expected value of~$T$ in~$H_1$.
\newcommand{\stmtpropalgtwolikelibound}{
%$\E_1(T) \leq 4\cdot \frac{\low}{c} + \frac{1}{1-\exp(-\frac{c^{2}}{18})}$
$\E_1(T) \le \frac{36 m}{c^2} \cdot \log \frac{1}{\low} + \frac{147%165
  m}{c^2}\cdot \low +m$,
where $\E_1(T)$ is the expected value of~$T$ under the probability measure~$\P_1$.
}
\begin{proposition} \label{prop-alg2likeli-bound} 
\stmtpropalgtwolikelibound
\end{proposition}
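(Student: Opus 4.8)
The plan is to reduce the expectation to a sum of tail probabilities and then control those via Lemma~\ref{lem-technical}. Write $T = m N^*$, where $N^*$ is the (random) index of the phase at the end of which $M_1$ terminates on an observation sequence produced by~$H_1$; by Theorem~\ref{thm-alg2} we have $\P_1(N^* < \infty)=1$, so $\E_1(T)$ is finite and equals $m \sum_{N\ge 0} \P_1(N^* > N)$. The key observation is that the event $\{N^* > N\}$ (``$M_1$ has not stopped within the first $N$ phases'') is contained in the event $\{\lr(v[N m]) > \low\}$, where $v$ denotes the observed sequence: if $\lr(v[N m]) \le \low$ held, the monitor would already have output~$1$ at phase~$N$ (or earlier). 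Hence it suffices to bound $\sum_{N\ge 0} \P_1\big(\lr(v[N m]) > \low\big)$.

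First I would split the sum at the threshold $N_0 := \lceil \tfrac{36}{c^2}\log\tfrac1\low\rceil$. For $N < N_0$ I just use the trivial bound $\P_1(N^*>N)\le 1$, which contributes at most $N_0$ in total. For $N \ge N_0$ we have $\exp\!\big(-\tfrac{c^2}{36}N\big) \le \low$, so $\{u : \lr(u[N m]) > \low\} \subseteq \Sigma^\omega \setminus U_N$ with $U_N$ as in Lemma~\ref{lem-technical}; therefore, by that lemma, $\P_1\big(\lr(v[N m]) > \low\big) \le 1 - \P_1(\io{U_N}_1) \le 4\exp\!\big(-\tfrac{c^2}{36}N\big)$.

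It then remains to evaluate the geometric tail: $\sum_{N\ge N_0} 4\exp\!\big(-\tfrac{c^2}{36}N\big) = \frac{4\exp(-\frac{c^2}{36}N_0)}{1-\exp(-\frac{c^2}{36})} \le \frac{4\low}{1-\exp(-\frac{c^2}{36})}$. Using $1-e^{-x}\ge x - x^2/2 = x(1-x/2)$ with $x = c^2/36 \le 1/36$ (recall $c\in(0,1]$), one obtains $1 - e^{-c^2/36} \ge \frac{c^2}{36}\cdot\frac{71}{72}$, whence the tail is at most $\frac{4\cdot 36\cdot 72}{71}\cdot\frac{\low}{c^2} \le \frac{147}{c^2}\low$. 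Combining this with $N_0 \le \frac{36}{c^2}\log\frac1\low + 1$ gives $\E_1(N^*) \le \frac{36}{c^2}\log\frac1\low + 1 + \frac{147}{c^2}\low$, and multiplying by~$m$ yields the claimed bound $\E_1(T) \le \frac{36m}{c^2}\log\frac1\low + \frac{147m}{c^2}\low + m$.

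The argument is elementary once Lemma~\ref{lem-technical} is available; the only steps requiring care are the constant-chasing that produces the factor~$147$ (where $c\le 1$ is used) and checking that the ceiling in $N_0$ contributes precisely the additive ``$+m$'' term. I do not anticipate a genuine obstacle here — the substantive work was done in establishing Lemma~\ref{lem-technical}.
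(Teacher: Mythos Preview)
Your proposal is correct and follows essentially the same route as the paper's proof: both split the tail sum at $N_0=\lceil \tfrac{36}{c^2}\log\tfrac{1}{\low}\rceil$, invoke Lemma~\ref{lem-technical} to bound $\P_1\{T>Nm\}$ for $N\ge N_0$, sum the resulting geometric series, and then use the Taylor estimate $1-e^{-x}\ge x-x^2/2$ together with $c\le 1$ to obtain the constant~$147$. The only cosmetic difference is that the paper starts from $\E_1(T)=\sum_{n\ge 0}\P_1\{T>n\}$ and collapses it to $m\sum_{N\ge 0}\P_1\{T>Nm\}$, whereas you introduce the phase counter~$N^*$ explicitly; the two formulations are equivalent.
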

The proof of this proposition employs ideas similar to those in~\cite{Sistla14} for proving an upper bound
on the expected monitoring time for exponentially converging monitorable systems.
Observe that as $\low$ decreases, the first term of the bound dominates.

\subsection{Monitors for Distinguishing Among Multiple HMCs} \label{sub-multiple}
Now we address the problem of distinguishing among multiple mutually distinguishable HMCs. 
We present a monitor based on likelihoods. % and present exponentially decaying error bounds for it.
For $i=1,\ldots,k$, let $H_i=(G_i,O_i,s_{i,0})$ be HMCs with the same observation alphabet~$\Sigma$ where $G_i=(S_i,R_i,\phi_i)$.
Let $\P_i$ and $\io{\cdot}_i$ be the associated probability measures and inverse observation functions corresponding to the HMC~$H_i$.
We assume that they are mutually distinguishable, i.e.,
for $1\leq i<j\leq k$, HMCs $H_i$~and~$H_j$ are distinguishable.
So by Proposition~\ref{prop-profile} there are profiles $(\A_{i,j}, c_{i,j})$. 
Define $c := \min \{ c_{i,j} \mid 1 \le i < j \le k\}$.

Let $m := 2\cdot \max\{\card{S_i}\mid 1\leq i\leq k\}$ and $N>0$ be an integer parameter.
The following monitor~$M$ distinguishes among the $k$~HMCs:
it takes an observation sequence $u\in \Sigma^{N\cdot m}$ as input and outputs the smallest integer 
$i \in \{1, \ldots, k\}$ such that
$\pr_i(\delta_{s_{i,0}},u) \ge \pr_j(\delta_{s_{j,0}},u)$ for all $j \in \{1, \ldots, k\}$.
%\max \{\pr_j(\delta_{s_{j,0}},u)\mid 1\leq j\leq k\}$. 
Essentially, $M$ outputs the index of the HMC whose likelihood value is the highest after $N \cdot m$ observations.
By applying the union bound to Theorem~\ref{thm-alg1-likeli} we get:
\newcommand{\stmtthmalgmultiple}{
Consider the monitor~$M$.
Let $i \in \{1, \ldots, k\}$ and
let $\Lang_i\subseteq \Sigma^{N\cdot m}$ be the set of observation sequences for which $M$ outputs~$i$.
Then we have for all $j \in \{1, \ldots, k\} - \{i\}$:
 \begin{align*}
  \P_i(\io{\Lang_i\Sigma^\omega}_i) \ &\ge \ 1-2k\cdot \exp\left(-\frac{c^2}{18}\cdot N\right) \quad \text{and} \\
  \P_j(\io{\Lang_i\Sigma^\omega}_j) \ &\le \ 2 \exp\left(-\frac{c^2}{18}\cdot N\right)
 \end{align*}
}
\begin{theorem}\label{thm-algmultiple}
\stmtthmalgmultiple
\end{theorem}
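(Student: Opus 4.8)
The plan is to derive both bounds from the two‑sided pairwise bound of Theorem~\ref{thm-alg1-likeli}, glued together by a union bound over the $k-1$ HMCs other than~$H_i$, exactly as the sentence preceding the statement suggests. The starting point is the observation that $M$ is nothing but all $\binom{k}{2}$ pairwise likelihood comparisons carried out simultaneously: $M$ outputs~$i$ on~$u$ precisely when $i$ is the smallest index attaining $\max_{1\le\ell\le k}\pr_\ell(\delta_{s_{\ell,0}},u)$. This yields two set inclusions. First, if $M(u)=i$ then in particular $\pr_i(\delta_{s_{i,0}},u)\ge\pr_j(\delta_{s_{j,0}},u)$ for every~$j$, so $\Lang_i\subseteq\{u\in\Sigma^{N\cdot m}\mid\pr_i(\delta_{s_{i,0}},u)\ge\pr_j(\delta_{s_{j,0}},u)\}$ for each $j\ne i$. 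Second, if $\pr_i(\delta_{s_{i,0}},u)$ strictly exceeds $\pr_\ell(\delta_{s_{\ell,0}},u)$ for all $\ell\ne i$, then $i$ is the unique argmax and $M(u)=i$; contrapositively, $\Sigma^{N\cdot m}\setminus\Lang_i\subseteq\bigcup_{j\ne i}\{u\mid\pr_j(\delta_{s_{j,0}},u)\ge\pr_i(\delta_{s_{i,0}},u)\}$.

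Next I would turn Theorem~\ref{thm-alg1-likeli} into probability estimates for the sets just introduced. Fixing $j\ne i$ and applying Theorem~\ref{thm-alg1-likeli} to $H_i$ and $H_j$ with the profile $(\A_{\min(i,j),\max(i,j)},c_{\min(i,j),\max(i,j)})$, and using that $c_{\cdot,\cdot}\ge c$ together with monotonicity of the error bound in the profile constant, gives
\[
 \P_j\left(\io{\{u\in\Sigma^{N\cdot m}\mid\pr_i(\delta_{s_{i,0}},u)\ge\pr_j(\delta_{s_{j,0}},u)\}\Sigma^\omega}_j\right)\ \le\ 2\exp\left(-\frac{c^2}{18}\cdot N\right)
\]
and, symmetrically,
\[
 \P_i\left(\io{\{u\in\Sigma^{N\cdot m}\mid\pr_j(\delta_{s_{j,0}},u)\ge\pr_i(\delta_{s_{i,0}},u)\}\Sigma^\omega}_i\right)\ \le\ 2\exp\left(-\frac{c^2}{18}\cdot N\right).
\]
Two small matching issues have to be dispatched. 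The monitor~$M_2$ of Theorem~\ref{thm-alg1-likeli} reads $N$ phases of length $\card{S_i}+\card{S_j}$, whereas~$M$ reads $N$ phases of length $m=2\max_\ell\card{S_\ell}\ge\card{S_i}+\card{S_j}$; this is harmless, since any profile of phase length~$\ell$ extends to one of phase length~$\ell'\ge\ell$ with the same constant via $\A'(\psi_1,\psi_2):=\A(\psi_1,\psi_2)\Sigma^{\ell'-\ell}$, using $\pr_a(\psi,W\Sigma^{\ell'-\ell})=\pr_a(\psi,W)$. And since the decision boundary $\lr(u)=1$ of~$M_2$ may be resolved either way without changing the value of $\P_1(\io{\Lang(M_2)}_1)-\P_2(\io{\Lang(M_2)}_2)$ (the extra strings contribute $\pr_1(u)-\pr_2(u)=0$), I am free to let the event ``$M_2$ believes $H_i$'' coincide with the non‑strict set $\{u\mid\pr_i(\delta_{s_{i,0}},u)\ge\pr_j(\delta_{s_{j,0}},u)\}$ in one application and with its counterpart in the other, so that Theorem~\ref{thm-alg1-likeli} yields exactly the two displays above.

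Finally I would assemble the pieces. For $j\ne i$, the first inclusion and the first display give $\P_j(\io{\Lang_i\Sigma^\omega}_j)\le\P_j(\io{\{u\mid\pr_i(\delta_{s_{i,0}},u)\ge\pr_j(\delta_{s_{j,0}},u)\}\Sigma^\omega}_j)\le 2\exp(-c^2N/18)$, the claimed upper bound. For $H_i$ itself, the events $\io{\Lang_i\Sigma^\omega}_i$ and $\io{(\Sigma^{N\cdot m}\setminus\Lang_i)\Sigma^\omega}_i$ partition the underlying space, so the second inclusion, the union bound, and the second display give $\P_i(\io{\Lang_i\Sigma^\omega}_i)=1-\P_i(\io{(\Sigma^{N\cdot m}\setminus\Lang_i)\Sigma^\omega}_i)\ge 1-\sum_{j\ne i}2\exp(-c^2N/18)=1-2(k-1)\exp(-c^2N/18)\ge 1-2k\exp(-c^2N/18)$. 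I do not expect a genuine mathematical obstacle here: all of the analytic content (the supermartingale and Azuma argument, and the transfer from the random‑walk monitor~$M_2'$ to the likelihood monitor~$M_2$) is already packaged in Theorem~\ref{thm-alg1-likeli}, and the only real friction is the bookkeeping around the tie‑break at $\lr(u)=1$, the phase‑length mismatch, and keeping the roles of ``$H_1$'' and ``$H_2$'' straight when instantiating the pairwise theorem.
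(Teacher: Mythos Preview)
Your proposal is correct and follows essentially the same route as the paper's proof: define the pairwise ``$H_i$ beats $H_j$'' events, invoke Theorem~\ref{thm-alg1-likeli} for each pair, use the inclusion $\Lang_i\subseteq\Lang_{i,j}$ for the second bound and a union bound over $j\ne i$ for the first. You are in fact more careful than the paper on two points it leaves implicit---the phase-length mismatch between $m=2\max_\ell|S_\ell|$ and the pairwise $|S_i|+|S_j|$, and the tie-breaking at $\lr(u)=1$---and your treatment of both is sound.
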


\section{Computing Profiles} \label{sec-profiles}
In the monitors of Section~\ref{sec-monitors} the constant~$c>0$ determines the
number $N$ of phases needed to ensure a bound on the error
probability. Recall that $c$ is the constant in a profile
$(\A,c)$. Any such constant $c$ will do, but the larger it is the
better, since the number of phases used will be smaller.
Note that even the existence of a positive~$c$ (as claimed by Proposition~\ref{prop-profile}) is not obvious.
In this section, we prove Theorem~\ref{thm-profile}---which strengthens Proposition~\ref{prop-profile}---by presenting a polynomial-time algorithm to compute a positive~$c$ and also the representation of a profile function $\A$ in polynomial time.

%In fact, we even need to \emph{compute} profiles, as we explain now.
%Let $(\A,c)$ be a profile with $c > 0$.
%As we will see, the monitors of this section might keep track of conditional distributions $\psi_1, \psi_2$, and compute (representations of) events $\A(\psi_1, \psi_2)$ and their probabilities.
%This computation should be efficient although $\card{\A(\psi_1, \psi_2)}$ can be exponential in~$m$.
%Moreover, a concrete $c>0$ needs to be computed in order to set up the monitor  so that it satisfies a given error bound.
%In the following we define a class of profiles that satisfy such computational requirements.

Let a \emph{test set} $\Test \subseteq \Sigma^*$ be a set of at most~$m$ words, with $|v| < m$ for all $v \in \Test$.
This defines a function $\A_\Test : \Distr{S_1} \times \Distr{S_2} \to 2^{\Sigma^m}$ in the following way.
Fix $\psi_1 \in \Distr{S_1}$ and $\psi_2 \in \Distr{S_2}$.
Let $v \in \Test$ be such that
\begin{equation} \label{eq-arg-max}
v := \arg\max_{w \in \Test} \card{\pr_1(\psi_1,w)-\pr_2(\psi_2,w)}
\end{equation}
and write
\[
 \vv{v} := \{v w \mid w \in \Sigma^*, \ |v w| = m\}
\]
for the set of strings of length~$m$ with $v$ as a prefix.
Then define:
\begin{equation*}
 \A_\Test(\psi_1,\psi_2) := 
 \begin{cases}
   \vv{v} & \text{if $\pr_1(\psi_1,v)>\pr_2(\psi_2,v)$} \\
  \Sigma^m - \vv{v} & \text{otherwise}
 \end{cases}
\end{equation*}
Depending on the case above, $\pr_i(\psi_i,\A_\Test(\psi_1,\psi_2))$ is either $\pr_i(\psi_i,v)$ or $1-\pr_i(\psi_i,v)$.
Hence:
\begin{multline} \label{eq-pos-diff}
\pr_1(\psi_1,\A_\Test(\psi_1,\psi_2)) - \pr_2(\psi_2,\A_\Test(\psi_1,\psi_2)) \\
= \card{\pr_1(\psi_1,v) - \pr_2(\psi_2,v)}
\end{multline}

Given a test set~$\Test$ and distributions $\psi_1, \psi_2$, a monitor can compute the word~$v$ from~\eqref{eq-arg-max} using~\eqref{eq-sub-to-pr}, and hence the probabilities $\pr_i(\psi_i,\A_\Test(\psi_1,\psi_2))$.
Moreover, a monitor can check whether a given word $w \in \Sigma^m$ is in~$\A_\Test(\psi_1,\psi_2)$ by checking whether $v$ is a prefix of~$w$.

\newcommand{\stmtthmprofile}{
Let HMCs $H_1, H_2$ be distinguishable.
One can compute, in polynomial time, a test set $\Test \subseteq \Sigma^*$ and a number $c > 0$ such that
$(\A_\Test,c)$ is a profile.
}
\begin{theorem}\label{thm-profile}
\stmtthmprofile
\end{theorem}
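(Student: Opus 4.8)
The plan is to show that a test set of the form described suffices, and that the required constant~$c$ can be extracted via a compactness/linear-algebra argument from the machinery of~\cite{14CK-LICS}. First I would recall that, by Proposition~\ref{prop-reduction-to-dist-one} and the hypothesis of distinguishability, $d(H_1,H_2)=1$; moreover, by~\cite[Theorem~7]{14CK-LICS} and the discussion after \cite[Proposition~5]{14CK-LICS}, the distance being~$1$ is witnessed ``locally'': for every reachable pair $(\psi_1,\psi_2)$ there is a bounded-length word~$v$ (of length less than~$m$, since the relevant automaton-product construction has at most $m$ states and one may take a shortest distinguishing word) on which $\pr_1(\psi_1,v)$ and $\pr_2(\psi_2,v)$ differ. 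The key structural point to establish is that finitely many such words---at most~$m$ of them---can be chosen once and for all, independently of the reachable pair; this gives the test set~$\Test$. Here I would argue that the set of reachable pairs, although infinite, has a finite ``separating basis'': the functions $\psi_i\mapsto\pr_i(\psi_i,w)$ are linear, so the distinguishing condition on $(\psi_1,\psi_2)$ is a union of finitely many ``strict inequality'' regions indexed by words, and by a dimension argument (the space $\Distr{S_1}\times\Distr{S_2}$ has dimension at most $m-2$) one can cover all reachable pairs with the regions of at most~$m$ words of length~$<m$. Taking those words as~$\Test$, equation~\eqref{eq-pos-diff} shows that $\pr_1(\psi_1,\A_\Test(\psi_1,\psi_2))-\pr_2(\psi_2,\A_\Test(\psi_1,\psi_2))=\max_{w\in\Test}\card{\pr_1(\psi_1,w)-\pr_2(\psi_2,w)}>0$ for every reachable pair.

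Next I would produce the uniform positive lower bound~$c$. Define $g(\psi_1,\psi_2):=\max_{w\in\Test}\card{\pr_1(\psi_1,w)-\pr_2(\psi_2,w)}$; this is a continuous (indeed piecewise-linear) function on the compact set $\Distr{S_1}\times\Distr{S_2}$. By the previous paragraph $g>0$ on the reachable pairs, but the reachable pairs need not form a closed set, so I cannot directly take a minimum over them. Instead I would pass to the closure $\overline{\mathcal{R}}$ of the set of reachable pairs and show $g>0$ there as well: a limit of reachable pairs, by continuity of the $\pr_i$, still satisfies the distinguishing property witnessed by~$\Test$, because the $\card{\pr_1(\psi_1,w)-\pr_2(\psi_2,w)}$ vary continuously and strict positivity of the max is a closed condition once we know the witnessing word set is finite and fixed---this is precisely why it was essential to make~$\Test$ independent of the pair. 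Then $c:=\min_{(\psi_1,\psi_2)\in\overline{\mathcal{R}}}g(\psi_1,\psi_2)>0$ by compactness, and $(\A_\Test,c)$ is a profile by definition.

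The remaining task is the polynomial-time computation of~$\Test$ and of (a positive rational lower bound on)~$c$. For~$\Test$ I would invoke the explicit construction behind \cite[Algorithm~1 and Theorem~21]{14CK-LICS}: that algorithm, in deciding $d(H_1,H_2)=1$, already identifies the relevant short distinguishing words (solving the $n_1$ linear programs over $n_1+n_2$ variables mentioned in the proof of Theorem~\ref{thm-qual-P}), and these can be collected into the desired test set of size at most~$m$ with each word of length~$<m$. For~$c$ I would phrase the minimization of $g$ over $\overline{\mathcal{R}}$---or, more conveniently, over a polytope that contains $\overline{\mathcal{R}}$ and on which $g$ is still positive---as a finite family of linear programs (one per choice of the maximizing word and per sign pattern), each of polynomial size, and take~$c$ to be the minimum of their optimal values; since LPs with rational data have rational optima computable in polynomial time, this yields a positive rational~$c$. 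The main obstacle I anticipate is the second paragraph: proving that the witnessing words can be chosen uniformly over all reachable pairs, and then transferring positivity of~$g$ to the closure---if one is not careful, the infimum of $g$ over the reachable pairs could be~$0$ even though $g$ is pointwise positive, so the finiteness of~$\Test$ together with the piecewise-linear (hence closed-graph) structure of $g$ must be used precisely. I expect this is exactly where the heavier results of~\cite{14CK-LICS}, rather than soft topology alone, are needed.
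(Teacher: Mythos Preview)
Your high-level decomposition---fix a finite test set $\Test$, then extract a uniform lower bound~$c$ by linear programming---is the paper's strategy, but two of the three steps have real gaps. First, the construction of~$\Test$: you try to ``cover reachable pairs by regions indexed by distinguishing words,'' but this is never made precise, and it is not what the paper does. In the paper, $\Test$ comes from the standard equivalence-checking machinery (Proposition~\ref{prop-test-set}): one computes a linear-algebra basis for the space spanned by the functionals $(\psi_1,\psi_2)\mapsto\pr_1(\psi_1,u)-\pr_2(\psi_2,u)$ over $u\in\Sigma^*$, and the corresponding words form~$\Test$. This $\Test$ satisfies $\psi_1\equiv\psi_2$ iff $\pr_1(\psi_1,u)=\pr_2(\psi_2,u)$ for every $u\in\Test$, for \emph{all} pairs, not just reachable ones; reachability and \cite[Algorithm~1]{14CK-LICS} play no role at this stage.

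Second, your compactness step contains a false assertion: ``strict positivity of the max is a closed condition once the witnessing word set is finite and fixed.'' It is not; $\{g>0\}$ is open regardless of whether $\Test$ is finite, so $g>0$ on the reachable pairs does not pass to~$\overline{\mathcal{R}}$. You flag this as the main obstacle but offer no way through. The paper bypasses compactness entirely. For each $s_1\in S_1$ it solves the LP minimizing~$x$ subject to $\psi_1\in\Distr{S_1}$, $\psi_2\in\Distr{S_2}$, $s_1$ dominating~$\psi_1$, $\psi_2(s_2)=0$ for every $s_2$ with $(s_1,s_2)$ not a reachable state-pair, and $\card{\pr_1(\psi_1,u)-\pr_2(\psi_2,u)}\le x$ for all $u\in\Test$ (Proposition~\ref{prop-lbound}). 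These are $\card{S_1}$ polynomial-size LPs, and every reachable pair of distributions lies in one of the feasible polytopes, so $c:=\min_{s_1}c(s_1)$ is a valid lower bound for~$\dist$. The nontrivial point---that each optimum $c(s_1)$ is strictly positive---is exactly where \cite[Theorem~21]{14CK-LICS} enters: optimum~$0$ would yield $\psi_1\equiv\psi_2$ with $\psi_1(s_1)>0$ and $\supp(\psi_2)$ containing only states $s_2$ with $(s_1,s_2)$ reachable, and that theorem then forces $d(H_1,H_2)<1$, contradicting distinguishability. Your instinct that ``heavier results of~\cite{14CK-LICS}'' are needed is correct, but the specific polytopes (domination plus state-pair reachability) and the reduction to \cite[Theorem~21]{14CK-LICS} are the missing ingredients.
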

The proof builds on~\cite{14CK-LICS} but requires further insights.
For the proof we need the concept of \emph{equivalence}:
For $i=1,2$ let $\psi_i\in \Distr{S_i}$.
We say that $\psi_1$ is \emph{equivalent} to $\psi_2$, written as $\psi_1\equiv \psi_2$, 
if $\pr_1(\psi_1,u) = \pr_2(\psi_2,u)$ holds for all $u\in \Sigma^*$.
We have the following proposition:
\newcommand{\stmtproptestset}{
One can compute, in polynomial time, a test set $\Test \subseteq \Sigma^*$ such that for all
$\psi_1 \in \Distr{S_1}$
and all $\psi_2 \in \Distr{S_2}$
we have:
\[
 \psi_1\equiv \psi_2 \quad \Longleftrightarrow\quad \forall u\in
 \Test:\ \pr_1(\psi_1,u)=\pr_2(\psi_2,u)
\]
}
\begin{proposition} \label{prop-test-set}
\stmtproptestset
\end{proposition}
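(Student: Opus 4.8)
The plan is to build the test set~$\Test$ via a standard fixed-point (saturation) construction on the difference of the two probability functions, exploiting the fact that the map $u \mapsto (\pr_1(\psi_1,u), \pr_2(\psi_2,u))$ factors through a linear-algebraic computation of dimension at most~$m = \card{S_1}+\card{S_2}$. Concretely, for a word $u \in \Sigma^*$ consider the pair of row vectors obtained by starting from $(\psi_1,\psi_2)$ (viewed as a vector in $\R^{S_1}\times\R^{S_2}$, i.e.\ in $\R^m$) and applying, for each successive symbol of~$u$, the linear ``filter'' maps induced by $\sub_1$ and $\sub_2$ (the maps that send a sub-distribution to the vector of $\sub_i(\cdot,a,\cdot)$-weighted successors restricted to states emitting~$a$). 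Then $\pr_i(\psi_i,u)$ is just the sum of the coordinates of the $i$-th component after processing~$u$. The key observation is that $\psi_1 \equiv \psi_2$ iff the linear functional ``$\pr_1 - \pr_2$'' vanishes on the entire forward orbit of $(\psi_1,\psi_2)$ under these symbol-maps; and since this orbit lives in an $m$-dimensional space, its span is determined by words of length~$< m$.

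The main steps, in order, would be: (1) Fix a basis-independent description of the linear maps $M_a : \R^m \to \R^m$ for $a \in \Sigma$ that implement one observation step on the pair of (sub-)distributions, and the linear functional $\theta : \R^m \to \R$ with $\theta(M_u(\psi_1,\psi_2)) = \pr_1(\psi_1,u) - \pr_2(\psi_2,u)$ for every $u$, where $M_u$ denotes the composition along~$u$. (2) Run the following saturation: start with $\Test_0 = \{\lambda\}$ (the empty word), and repeatedly, for each current word $v \in \Test_j$ and each $a \in \Sigma$, add $va$ to $\Test_{j+1}$ provided the vector $M_{va}^{\mathsf T}\theta$ — equivalently, the row of ``what $\pr_1(\cdot,va)-\pr_2(\cdot,va)$ computes'' as a linear functional of $(\psi_1,\psi_2)$ — is linearly independent of the functionals already indexed by $\Test_j$; stop when no new word increases the dimension of the span. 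Since each added word raises the dimension by at least one and the ambient dimension is~$m$, the process terminates after at most~$m$ words, all of length~$< m$, and each step is a polynomial-time linear-independence check. (3) Prove correctness of the resulting $\Test$: the ``$\Rightarrow$'' direction is immediate since $\psi_1\equiv\psi_2$ means $\pr_1(\psi_1,u)=\pr_2(\psi_2,u)$ for \emph{all}~$u$, in particular for $u \in \Test$. For ``$\Leftarrow$'', suppose $\pr_1(\psi_1,u)=\pr_2(\psi_2,u)$ for all $u \in \Test$; by construction the functionals $\{(\psi_1,\psi_2) \mapsto \pr_1(\psi_1,u)-\pr_2(\psi_2,u) : u \in \Test\}$ span the same subspace of $(\R^m)^*$ as the functionals indexed by \emph{all} words (this is exactly the saturation invariant: adding any further word does not enlarge the span), so vanishing on all of $\Test$ forces vanishing on every word, i.e.\ $\psi_1\equiv\psi_2$.

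I expect the main obstacle to be step~(3)'s saturation invariant, namely arguing that once no single-symbol extension of a word in the current set is independent, then in fact \emph{no} word of any length is independent of the collected set. This is the analogue of the Brzozowski/automata closure argument used for equivalence of weighted automata (and is essentially the mechanism behind the polynomial-time equivalence algorithms of~\cite{Tzeng, Doyen-Equivalence} cited earlier): one shows by induction on word length that $M_u^{\mathsf T}\theta$ lies in the span of $\{M_v^{\mathsf T}\theta : v \in \Test\}$ for every~$u$, using that the span is closed under the (transposed) symbol-maps precisely because every length-$(\card v +1)$ extension of a $v\in\Test$ was either added to~$\Test$ or already redundant. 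A secondary point to handle carefully is bookkeeping: the vectors involved have polynomially many bits (products of the given rational transition probabilities), so the linear-independence tests stay within polynomial time — this follows from standard bounds on Gaussian elimination over the rationals and should be stated but not belabored. Everything else — that $\pr_i$ is genuinely linear in~$\psi_i$ and computed by the $M_a$, that $\card\Test \le m$ and $\card v < m$ — is routine once the linear-algebraic setup is in place.
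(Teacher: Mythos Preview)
Your approach is essentially the paper's: represent $\pr_1(\psi_1,u)-\pr_2(\psi_2,u)$ as the pairing of $(\psi_1,\psi_2)$ with a vector $\eta(u)\in\mathbb{R}^m$ (your $M_u^{\mathsf T}\theta$), and greedily build a basis of $\mathrm{span}\{\eta(u):u\in\Sigma^*\}$. There is, however, a genuine bug in the direction of your saturation. You extend words on the \emph{right} ($v\mapsto va$), but with your conventions $M_a^{\mathsf T}(M_v^{\mathsf T}\theta)=(M_v\circ M_a)^{\mathsf T}\theta=M_{av}^{\mathsf T}\theta$, not $M_{va}^{\mathsf T}\theta$: applying the transposed symbol-map \emph{prepends} a symbol. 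Equivalently, for $u=a_1\cdots a_k$ and $\eta(u)=M(a_1)\cdots M(a_k)\,\eta$ one has $\eta(au)=M(a)\,\eta(u)$, whereas $\eta(ua)=M(u)M(a)\,\eta$ is not a fixed linear image of $\eta(u)$. So the closure property you invoke in step~(3) --- ``the span is closed under the transposed symbol-maps'' --- does \emph{not} follow from having checked all right-extensions $va$ of words $v\in\Test$. Concretely, in $\mathbb{R}^3$ with $\eta=e_1$ and $M(a)\colon e_1\mapsto e_2,\ e_2\mapsto e_1$ and $M(b)\colon e_1\mapsto e_1,\ e_2\mapsto e_3$, your algorithm halts at $\Test=\{\varepsilon,a\}$ with span $\langle e_1,e_2\rangle$, yet $\eta(ba)=M(b)M(a)e_1=M(b)e_2=e_3$ lies outside it.

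The fix is trivial and is exactly what the paper does: extend on the left, i.e., add $a w$ when $M(a)\,\eta(w)$ is independent of the current vectors. Then the resulting span \emph{is} $M(a)$-invariant for every $a$, and the induction on word length that you sketch in step~(3) goes through. Everything else you write (the dimension bound $m$, the bound $|v|<m$ on word length, polynomial bit-sizes for the linear-independence checks, and the two directions of the equivalence) is correct and matches the paper.
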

The algorithm for Proposition~\ref{prop-test-set} uses linear-algebra based techniques that have been developed for deciding equivalence of HMCs, see e.g.~\cite{Schutzenberger,Tzeng,Ito-Equivalence,Doyen-Equivalence}.
%For completeness and concreteness we provide details in the following proof.

We fix $\Test$ for the remainder of the section.
We define a distance measure $\dist(\psi_1,\psi_2)$ between
$\psi_1,\psi_2$ given by
\[
 \dist(\psi_1,\psi_2) := \max_{w \in \Test} \card{\pr_1(\psi_1,w)-\pr_2(\psi_2,w)}\,.
\]
By Proposition~\ref{prop-test-set} we have:
\[
 \psi_1\equiv \psi_2 \quad \Longleftrightarrow \quad \dist(\psi_1,\psi_2)=0 
\]
For the following proposition, linear programming is used to compute a lower bound on $\dist(\psi_1,\psi_2)$ for reachable pairs $(\psi_1, \psi_2)$ in distinguishable HMCs:
\newcommand{\stmtproplbound}{
Let $H_1,H_2$ be distinguishable HMCs.
One can compute, in polynomial time, a rational number~$c > 0$
such that for all reachable pairs $(\psi_1, \psi_2)$ of distributions we have $\dist(\psi_1, \psi_2) \ge c$.
}
\begin{proposition} \label{prop-lbound}
\stmtproplbound
\end{proposition}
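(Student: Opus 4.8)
The plan is to reduce the claimed bound to a single linear program. The starting observation is that for each $w \in \Test$ the map $(\psi_1,\psi_2) \mapsto \pr_1(\psi_1,w) - \pr_2(\psi_2,w)$ is linear in $(\psi_1,\psi_2)$: by~\eqref{eq-sub-to-pr} its coefficients are the numbers $\sum_{t\in S_1}\sub_1(s,w,t)$ and $-\sum_{t\in S_2}\sub_2(s,w,t)$, which are computable in polynomial time from the recurrence for $\sub_i$ (recall $|w| < m$ and $|\Test| \le m$). Hence $\dist(\psi_1,\psi_2) = \max_{w\in\Test}\card{\pr_1(\psi_1,w)-\pr_2(\psi_2,w)}$ is a convex, piecewise-linear function, and for any polyhedron $P$ given by polynomially many linear inequalities the quantity $\inf_{(\psi_1,\psi_2)\in P}\dist(\psi_1,\psi_2)$ is the optimum of the linear program that minimises an auxiliary variable $t$ subject to $(\psi_1,\psi_2)\in P$ and $-t \le \pr_1(\psi_1,w)-\pr_2(\psi_2,w) \le t$ for all $w\in\Test$; if the optimum is attained it is rational and computable in polynomial time.

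So it suffices to produce, in polynomial time, a bounded polyhedron $P$ with two properties: every reachable pair lies in $P$, and $P$ contains no \emph{equivalent} pair, i.e.\ no $(\psi_1,\psi_2)\in P$ with $\psi_1\equiv\psi_2$. Indeed, by Proposition~\ref{prop-test-set} the function $\dist$ is then strictly positive on the compact set $P$, so $c := \min_P \dist > 0$; and this $c$ does the job since, for reachable $(\psi_1,\psi_2)$, equation~\eqref{eq-pos-diff} gives $\pr_1(\psi_1,\A_{\Test}(\psi_1,\psi_2)) - \pr_2(\psi_2,\A_{\Test}(\psi_1,\psi_2)) = \dist(\psi_1,\psi_2) \ge c$, i.e.\ $(\A_{\Test},c)$ is a profile. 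Both properties of $P$ rest on the structure exposed in~\cite{14CK-LICS}. A reachable pair $(\psi_1,\psi_2) = (\cd_1(\delta_{s_{1,0}},u),\cd_2(\delta_{s_{2,0}},u))$ is a (double) normalisation of the joint forward vectors $f_i(u) := (\sub_i(s_{i,0},u,t))_{t \in S_i}$, and the pairs $(f_1(u),f_2(u))$ span a subspace $M$ of $\mathbb{R}^{S_1}\oplus\mathbb{R}^{S_2}$ that is polynomial-time computable (it is the smallest subspace containing the first-step pairs and closed under the per-symbol operators $T_1^a\oplus T_2^a$, so it stabilises after at most $m$ refinement steps). Intersecting $M$ with the nonnegative orthant and a normalising hyperplane yields the candidate $P$. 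Conversely, a point of $P$ with $\psi_1\equiv\psi_2$ would, after rescaling, give a nonzero nonnegative equivalent pair inside $M$; by the characterisation of distance~$1$ that underlies the polynomial-time test of~\cite{14CK-LICS} this is impossible when $d(H_1,H_2) = 1$, which holds here by Proposition~\ref{prop-reduction-to-dist-one} because $H_1,H_2$ are distinguishable.

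The main obstacle --- and the point at which the proof needs insight beyond~\cite{14CK-LICS} --- is the normalisation mismatch. In a reachable pair the belief states $\psi_1$ and $\psi_2$ are normalised \emph{separately} (dividing $f_1(u)$ by $\pr_1(\delta_{s_{1,0}},u)$ and $f_2(u)$ by $\pr_2(\delta_{s_{2,0}},u)$), so the set of reachable pairs is in general neither polyhedral nor even convex, whereas $M$ only records the pair up to a single common scaling; moreover, if one keeps one component at $\ell_1$-norm one, the other component (scaled by the likelihood ratio) is unbounded, so a naive normalisation of $M$ is not bounded. I would resolve this much as the decision procedure of~\cite{14CK-LICS} does, namely by splitting the reasoning according to which states of $H_1$ (resp.\ $H_2$) carry positive mass, obtaining for each such support pattern a bounded polyhedron in $\mathbb{R}^{S_1}\oplus\mathbb{R}^{S_2}$ and a corresponding linear program, and taking $c$ to be the minimum of the resulting values. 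That this minimum is positive again uses distinguishability: for any reachable pair $(\psi_1,\psi_2)$ the HMCs obtained by starting $H_1$ from $\psi_1$ and $H_2$ from $\psi_2$ are themselves at total variation distance~$1$ (a distinguishing event for $H_1,H_2$ conditioned on the common prefix $u$ works), so by a compactness argument the closure of the set of reachable pairs contains no equivalent pair; this gives the qualitative fact $\inf_{\text{reachable}}\dist > 0$, which the linear-programming construction then turns into an effectively computable rational~$c$.
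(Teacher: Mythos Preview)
Your overall framing is right: $\dist$ is piecewise linear in $(\psi_1,\psi_2)$, so minimising it over a polyhedron is a linear program, and what one needs is a polynomial collection of polyhedra that (a) covers all reachable pairs and (b) contains no equivalent pair, with positivity coming from the distance-$1$ characterisation in~\cite{14CK-LICS}. Where your argument has a genuine gap is precisely at the point you flag yourself.

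The forward subspace $M$ does not help: because $\psi_1$ and $\psi_2$ are normalised \emph{separately}, a reachable pair is not a scalar multiple of $(f_1(u),f_2(u))$ and need not lie in (a normalisation of) $M$. Your proposed repair --- ``splitting the reasoning according to which states of $H_1$ (resp.\ $H_2$) carry positive mass'' --- yields exponentially many support patterns and hence exponentially many linear programs, so you lose the polynomial-time bound. The compactness sketch at the end at best gives the qualitative statement $\inf_{\text{reachable}}\dist>0$; it does not supply an effective polynomial-size over-approximation on which to run the LP.

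The paper's proof avoids $M$ entirely and uses a much coarser, but polynomial, over-approximation. For each state $s_1\in S_1$ it considers the polyhedron $\{(\psi_1,\psi_2): s_1 \text{ dominates } \psi_1,\ \psi_2(s_2)=0 \text{ whenever the state pair }(s_1,s_2)\text{ is unreachable}\}$, and solves the corresponding LP; then $c$ is the minimum of the $|S_1|$ optimal values. Every reachable pair lies in one of these polyhedra (choose $s_1$ maximising $\psi_1(s_1)$; then $\psi_1(s_1)>0$ forces $\psi_2$ to be supported on states $s_2$ with $(s_1,s_2)$ reachable). And if some LP had optimum~$0$, Proposition~\ref{prop-test-set} would give $\psi_1\equiv\psi_2$ with $\psi_1(s_1)>0$ and $\supp(\psi_2)\subseteq\{s_2:(s_1,s_2)\text{ reachable}\}$, which is exactly the hypothesis of \cite[Theorem~21]{14CK-LICS} forcing $d(H_1,H_2)<1$, contradicting distinguishability. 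The ``dominating state'' trick is the missing idea: it replaces an exponential support enumeration by $|S_1|$ linear programs while retaining precisely the information needed to invoke~\cite{14CK-LICS}.
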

In general there may exist \emph{unreachable} pairs $(\psi_1, \psi_2)$ of distributions with $\dist(\psi_1, \psi_2) = 0$, even for distinguishable HMCs.
Proposition~\ref{prop-lbound} establishes in particular the nontrivial fact that for distinguishable HMCs there \emph{exists} a positive lower bound on~$\dist(\psi_1, \psi_2)$ for all reachable pairs $(\psi_1, \psi_2)$.
%So the reachability aspect is essential for a lower bound $c > 0$ as in Proposition~\ref{prop-lbound}.

\begin{example} \label{ex-2}
Consider again the HMCs from Figure~\ref{fig-ex-1}.
We compute the set~$\Test$ according to the algorithm from Proposition~\ref{prop-test-set}.
This yields $\Test = \{\varepsilon, a, a a, b a\}$, where $\varepsilon$ denotes the empty word.
%(Indeed, in the notation of the proof of Proposition~\ref{prop-test-set} we get linearly independent vectors
%$\eta(\varepsilon) = (1,1,-1,-1)^T$, 
%$\eta(a) = (1,0,-1,0)^T$, 
%$\eta(a a) = (\frac12 + \delta,0,-\frac12 + \delta,0)^T$, 
%$\eta(b a) = (0,\frac12 + \delta,0,-\frac12 + \delta)^T$).

In this example, the last symbol of any observation sequence reveals the state.
Hence there are only two reachable pairs of distributions:
one is $(\pi_1, \pi_2)$ with $\pi_1(s_0) = \pi_2(t_0) = 1$,
and the other one is $(\pi_1', \pi_2')$ with $\pi_1'(s_1) = \pi_2'(t_1) = 1$.
Using the definition of~$\dist$ we compute:
\begin{align*}
\dist(\pi_1, \pi_2)   & = \pr_1(\pi_1, a a) - \pr_2(\pi_2, a a) \\
                      & = \frac12 + \delta - \left(\frac12 - \delta\right) = 2 \delta \\
\dist(\pi_1', \pi_2') & = \pr_1(\pi_1', b a) - \pr_2(\pi_2', b a) \\
                      & = \frac12 + \delta - \left(\frac12 - \delta\right) = 2 \delta
\end{align*}
Hence we have $\dist(\psi_1, \psi_2) = 2 \delta > 0$ for all \emph{reachable} pairs $(\psi_1, \psi_2)$ of distributions.

In order to illustrate some aspects of Proposition~\ref{prop-lbound}, we use linear programming to compute a lower bound on $\dist(\psi_1, \psi_2)$ for \emph{all} (reachable or unreachable) pairs $(\psi_1, \psi_2)$ of distributions.
Concretely, we solve the following linear program, where $\delta$ is the constant parameter from the HMCs $H_1, H_2$, and the variables are $x$ and variables encoding distributions $\psi_1, \psi_2$:
\begin{align*}
 \text{minimize }    & x \ge 0  \\ %\label{eq-ex-1-lp} \\
 \text{subject to: } & \psi_1 \in \Distr{S_1}, \ \psi_2 \in \Distr{S_2}, \\
                     & -x \le \pr_1(\psi_1, u) {-} \pr_2(\psi_2, u) \le x  \text{ for all $u \in \Test$.} \nonumber
\end{align*}
An optimal solution is $x = \delta$ and 
$\psi_1(s_0) = \frac34 - \frac{\delta}{2}$ and
$\psi_1(s_1) = \frac14 + \frac{\delta}{2}$ and
$\psi_2(t_0) = \frac34 + \frac{\delta}{2}$ and
$\psi_2(t_1) = \frac14 - \frac{\delta}{2}$.
Hence $x = \delta > 0$ is a lower bound on~$\dist(\psi_1, \psi_2)$ for \emph{all} pairs of distributions,
and hence, a fortiori, also for all reachable pairs.
As mentioned after Proposition~\ref{prop-lbound}, the reachability aspect is in general (unlike in this example) essential for obtaining a positive lower bound.
Indeed, the proof of Proposition~\ref{prop-lbound} takes advantage of further results from~\cite{14CK-LICS}.

If we compute a lower bound according to the proof Proposition~\ref{prop-lbound}, i.e., taking reachability into account, we obtain $c = 4 \delta / (3 + 2 \delta)$, which lies strictly between the previously computed lower bounds $\delta$ and~$2 \delta$.
\qed
\end{example}

With Proposition~\ref{prop-lbound} at hand, we are ready to prove Theorem~\ref{thm-profile}:
\begin{proof}[of Theorem~\ref{thm-profile}]
Compute $\Test$ according to Proposition~\ref{prop-test-set} and $c>0$ according to Proposition~\ref{prop-lbound}.
We show that $(\A_\Test,c)$ is a profile.
Let $(\psi_1, \psi_2)$ be a reachable pair of distributions.
Let $v \in \Sigma^*$ be as in~\eqref{eq-arg-max}.
We have:
\begin{align*}
& \ \rlap{$\pr_1(\psi_1,\A_\Test(\psi_1,\psi_2)) - \pr_2(\psi_2,\A_\Test(\psi_1,\psi_2))$} \\
& = \card{\pr_1(\psi_1,v) - \pr_2(\psi_2,v)} && \text{by \eqref{eq-pos-diff}} \\
& = \max_{w \in \Test} \card{\pr_1(\psi_1,w)-\pr_2(\psi_2,w)} && \text{by \eqref{eq-arg-max}} \\
& = \dist(\psi_1,\psi_2)                     && \text{def.\ of $\dist$} \\
& \ge c                                      && \text{Proposition~\ref{prop-lbound}}
\end{align*}
This completes the proof.
\end{proof}

We have seen that for a given error bound, the number of observations our monitors need to make depends quadratically on~$\frac{1}{c}$.
%To use a likelihood-based monitor, one needs to know only~$c$ (not~$\A$), and it needs to be computed only once.
So it may be beneficial to compute a larger value of~$c$, even if such a computation is expensive.
To this end, for a distribution $\pi \in \Distr{S}$, write $\supp(\pi) := \{s \in S \mid \pi(s)>0\}$.
For HMCs $H_1, H_2$, if a pair $(\psi_1, \psi_2)$ of distributions is reachable, we say that the pair
 $(\supp(\psi_1), \supp(\psi_2))$ is \emph{reachable}.
We have the following proposition:
\newcommand{\stmtpropexpensive}{
Let $H_1, H_2$ be two distinguishable HMCs.
One can compute, in exponential time:
\begin{align*}
 c \ := \ & \min_{\text{reachable } (S_1', S_2') \in 2^{S_1} \times 2^{S_2}}  \
          \min_{\psi_1 \in \Distr{S_1'}} \
          \min_{\psi_2 \in \Distr{S_2'}} \\
          & \quad \max_{U \subseteq \Sigma^m} \
          \big( \pr_1(\psi_1, U) - \pr_2(\psi_2, U) \big)
\end{align*}
}
\begin{proposition} \label{prop-expensive-c}
\stmtpropexpensive
\end{proposition}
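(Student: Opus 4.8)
The plan is to show that the displayed quantity $c$ is (a) well-defined, (b) strictly positive, and (c) computable in exponential time, and then to argue that the induced function $\A$ together with this $c$ forms a profile (which is what makes the proposition useful, matching the role of Proposition~\ref{prop-profile}). For well-definedness: the outer minimum ranges over the finite set of reachable pairs $(S_1', S_2')$ of supports, which can be enumerated in time exponential in $m$ (e.g.\ by a fixpoint computation over $2^{S_1} \times 2^{S_2}$ that starts from $(\supp(\delta_{s_{1,0}}), \supp(\delta_{s_{2,0}}))$ and closes under the one-step support-update induced by each observation symbol). For fixed $(S_1', S_2')$, the inner expression $\max_{U \subseteq \Sigma^m}\big(\pr_1(\psi_1,U) - \pr_2(\psi_2,U)\big)$ has a closed form: the maximizing $U$ is exactly $\{w \in \Sigma^m : \pr_1(\psi_1,w) > \pr_2(\psi_2,w)\}$, so the value equals $\sum_{w \in \Sigma^m} \max\{0,\ \pr_1(\psi_1,w) - \pr_2(\psi_2,w)\}$, i.e.\ the total-variation distance between the two length-$m$ output distributions. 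This is a piecewise-linear convex function of $(\psi_1,\psi_2)$, and $\Distr{S_1'} \times \Distr{S_2'}$ is a compact polytope, so the two inner minima are attained; minimizing a convex piecewise-linear function over a polytope can be done by solving finitely many linear programs (one per sign pattern of the $|\Sigma|^m$ differences, or more cleverly), which keeps the whole computation within exponential time. I would be slightly careful to note that the $\pr_i(\psi_i,w)$ are linear in $\psi_i$, which is what gives convexity of the max and hence LP-solvability.

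The positivity claim is the crux, and here I would reduce to the results already in hand rather than re-deriving everything. Suppose for contradiction that $c = 0$. Then by compactness there is a reachable support pair $(S_1', S_2')$ and distributions $\psi_1 \in \Distr{S_1'}$, $\psi_2 \in \Distr{S_2'}$ with $\pr_1(\psi_1, U) = \pr_2(\psi_2, U)$ for every $U \subseteq \Sigma^m$ (since the max of $\pr_1(\psi_1,U)-\pr_2(\psi_2,U)$ over all $U$ is $0$, both it and its negation vanish). In particular $\pr_1(\psi_1,w) = \pr_2(\psi_2,w)$ for all $w \in \Sigma^m$, and by a standard unrolling argument this forces $\pr_1(\psi_1,u) = \pr_2(\psi_2,u)$ for all $u \in \Sigma^*$, i.e.\ $\psi_1 \equiv \psi_2$. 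The point is now that $(S_1', S_2')$ being reachable means there is some $u_0$ with $\supp(\cd_i(\delta_{s_{i,0}},u_0)) = S_i'$, and one can then build from $\psi_1 \equiv \psi_2$ a genuinely reachable \emph{equivalent} pair $(\cd_1(\delta_{s_{1,0}},u_0), \cd_2(\delta_{s_{2,0}},u_0))$ — or, more directly, invoke the structural results of~\cite{14CK-LICS} underlying Proposition~\ref{prop-lbound}, which say precisely that for distinguishable HMCs no reachable pair can be equivalent; combined with the fact that equivalence of $(\psi_1,\psi_2)$ depends only on the supports being ``equivalence-compatible,'' this contradicts distinguishability. So $c > 0$.

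Finally, to see $(\A,c)$ is a profile with this $c$ (where $\A(\psi_1,\psi_2)$ is the optimal $U$ above for the actual reachable pair $(\psi_1,\psi_2)$): for any reachable pair $(\psi_1,\psi_2)$, the pair of supports is reachable, so the outer minima in the definition of $c$ range over a set containing the point $(\psi_1,\psi_2)$ itself, giving $\pr_1(\psi_1,\A(\psi_1,\psi_2)) - \pr_2(\psi_2,\A(\psi_1,\psi_2)) = \max_U (\pr_1(\psi_1,U)-\pr_2(\psi_2,U)) \ge c$, as required. The main obstacle I anticipate is the positivity argument: turning ``some equivalent pair with reachable supports exists'' into ``some equivalent \emph{reachable} pair exists'' cleanly, so that one really does contradict distinguishability rather than merely a support-level statement; I expect this needs the characterization of equivalence from~\cite{14CK-LICS} (equivalence being witnessed on the linear span, and the reachable pairs spanning enough of that span) and is essentially the same insight used in the proof of Proposition~\ref{prop-lbound}. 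The LP-count bookkeeping for the exponential-time claim is routine by comparison.
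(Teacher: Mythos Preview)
Your approach to the computability claim is essentially the paper's. The closed form $\max_{U}\big(\pr_1(\psi_1,U)-\pr_2(\psi_2,U)\big)=\sum_{w\in\Sigma^m}\max\{0,\pr_1(\psi_1,w)-\pr_2(\psi_2,w)\}$ is exactly what the paper exploits, and the standard way to minimize this convex piecewise-linear function over the polytope $\Distr{S_1'}\times\Distr{S_2'}$ is to introduce auxiliary variables $x_w\ge 0$, $x_w\ge \pr_1(\psi_1,w)-\pr_2(\psi_2,w)$ and minimize $\sum_w x_w$. That is the paper's single exponential-size LP per reachable support pair. Your ``one per sign pattern'' would be doubly exponential; the ``or more cleverly'' you allude to \emph{is} the paper's formulation, so you should spell that out rather than leave it parenthetical.

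Where you diverge from the paper is in scope. The proposition asserts only exponential-time \emph{computability} of~$c$; positivity of~$c$ and the fact that $(\A,c)$ is a profile are stated separately in the surrounding text and are obtained there simply by observing that this~$c$ dominates the~$c>0$ of Theorem~\ref{thm-profile} (equivalently Proposition~\ref{prop-lbound}). Concretely: for any $(\psi_1,\psi_2)$ with $\psi_i\in\Distr{S_i'}$ and $(S_1',S_2')$ a reachable support pair, one checks that $(\psi_1,\psi_2)$ is feasible in the linear program~$\mathcal{LP}(s_1)$ from the proof of Proposition~\ref{prop-lbound} (take $s_1$ dominating~$\psi_1$; every $s_2\in\supp(\psi_2)\subseteq S_2'$ satisfies $(s_1,s_2)$ reachable by definition of reachable support pair), whence $\max_U(\cdots)\ge\dist(\psi_1,\psi_2)\ge c_{\text{Prop.~\ref{prop-lbound}}}>0$. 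So the ``main obstacle'' you anticipate---lifting an equivalent pair with reachable supports to an equivalent reachable pair---never arises: positivity is inherited directly from the earlier result, and you need not reprove it here.
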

(Note that $U$ ranges over a set of double-exponential size.)
This value of~$c$ is lower-bounded by the value of~$c > 0$ from Theorem~\ref{thm-profile},
and it is part of a profile with 
\[
 \A(\psi_1, \psi_2) \ = \ \arg \max_{U \subseteq \Sigma^m} \big( \pr_1(\psi_1, U) - \pr_2(\psi_2, U) \big)\,.
\]

\section{Application: Runtime Verification}
\label{sec-verification}
In this section we discuss an application of monitors for runtime verification of stochastic systems.
Traditional verification aims at proving correctness of systems at the time of their design.
This quickly becomes infeasible, in particular for complex systems with several components and stochastic behavior, see e.g.~\cite{Sistla11}.
\emph{Runtime verification} is an alternative where a monitor observes a system while it is running, and raises an alarm once a faulty behavior is detected.
The alarm may trigger, e.g., a fail-safe way of shutting the system down.
HMCs were suggested in~\cite{Sistla11,Sistla14} as models of partially observable stochastic systems.
In this section, the monitor does not try to distinguish two HMCs, rather it tries to distinguish correct and faulty behavior of a single~HMC.

\myparagraph{Definitions.}
For a probability measure~$\P$ and measurable sets $C,D$ such that $\P(C)>0$, we let $\P(D \mid C )$ denote the value $\frac{\P(C\cap D)}{\P(C)}$, which is the conditional probability of~$D$ given~$C$.
A \emph{classifying HMC (cHMC)} is a quadruple $H = (G, O, s_0, \Class)$, 
where $(G, O, s_0)$ is an HMC
and $\Class$ is a condition classifying each bottom strongly connected component (BSCC) of~$H$
  as \emph{bad} or \emph{good}.
For a cHMC and a state $s \in S$ we define:
\begin{align*}
 \Bad_s  &:= \{s s_1 s_2 \cdots \in \{s\} S^\omega \mid \exists i: s_i \text{ is in a bad BSCC}\} \\
 \Good_s &:= \{s s_1 s_2 \cdots \in \{s\} S^\omega \mid \exists i: s_i \text{ is in a good BSCC}\}
\end{align*}
Define $\Bad := \Bad_{s_0}$ and $\Good := \Good_{s_0}$.
The events $\Bad$~and~$\Good$ are disjoint and measurable.
By fundamental properties of Markov chains we have
\[
 \P(\Bad \cup \Good) = \P(\Bad) + \P(\Good) = 1\,.
\]
To avoid trivialities we assume that $\P(\Bad), \P(\Good) > 0$ (this can be checked in polynomial time by graph reachability).
We say that a cHMC~$H$ is \emph{monitorable}
 if for every $\varepsilon > 0$ there exists a monitor~$M$ such that
\begin{align*}
  \P(\io{\Lang(M)} \mid \Bad) \ &\ge \ 1-\varepsilon \quad \text{ and  } \\
  \P(\io{\Lang(M)} \mid \Good) \ &\le \ \varepsilon\,.
\end{align*}
In~\cite{Sistla11} the authors define and study monitorability of pairs $(H_0, \A)$
 where $H_0$ is an HMC and $\A$ is a deterministic Streett automaton.
One can compute, in polynomial time, the product of $H_0$ and~$\A$.
That product is a cHMC~$H$ as defined above.
Then $(H_0, \A)$ is monitorable (in the sense of~\cite{Sistla11}) if and only if
 $H$ is monitorable (in the sense defined above).

A construction similar to one that was given in~\cite[Section 3]{BKKM14conditional} allows us, for a given cHMC~$H$, to construct two HMCs $H_1, H_2$ that exhibit the bad and the good behavior of~$H$ according to their conditional probabilities:

\newcommand{\stmtpropconditioned}{
Let $H$ be a cHMC with $\P(\Bad), \P(\Good) > 0$.
Then one can compute, in polynomial time,
 HMCs $H_1, H_2$ such that
 for all measurable events $E \subseteq S^\omega$ we have
\[
 \P_1(E) = \P(E \mid \Bad) \quad \text{and} \quad
 \P_2(E) = \P(E \mid \Good)\,.
\]
% where $\P_1$ and $\P_2$ denote the measures of $H_1, H_2$, respectively.
}
\begin{proposition} \label{prop-conditioned}
\stmtpropconditioned
\end{proposition}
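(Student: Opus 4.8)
The plan is to use the standard conditioning construction for finite Markov chains, i.e.\ a discrete Doob $h$-transform. First I would compute, for every state $s \in S$, the probability $\beta(s) := \P_s(\Bad_s)$ that some bad BSCC is eventually reached from~$s$, together with $\gamma(s) := \P_s(\Good_s) = 1 - \beta(s)$. After a linear-time BSCC decomposition of~$H$ (which fixes the boundary values $\beta(s)=1$ on bad BSCCs and $\beta(s)=0$ on good BSCCs), the vector $(\beta(s))_{s \in S}$ is the unique solution of the corresponding linear system of absorption probabilities, of size $\card{S}$, so it is computable in polynomial time, with all entries of polynomially bounded bit size (Cramer's rule). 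By assumption $\beta(s_0) = \P(\Bad) > 0$ and $\gamma(s_0) = \P(\Good) > 0$.

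Next I would define $H_1 := (G_1, O_1, s_0)$, where $S_1 := \{s \in S \mid \beta(s) > 0\}$ (so $s_0 \in S_1$), $R_1 := \{(s,t) \in R \mid s, t \in S_1\}$, $O_1 := O|_{S_1}$, and
\[
 \phi_1(s,t) \ := \ \phi(s,t) \cdot \frac{\beta(t)}{\beta(s)} \qquad \text{for } (s,t) \in R_1 \,.
\]
I would then check that $G_1$ is a Markov chain: for every $s$ with $\beta(s) > 0$ the harmonic identity $\beta(s) = \sum_{t : (s,t) \in R} \phi(s,t) \beta(t)$ holds (it is the defining first-step equation for transient states; for a state in a bad BSCC both sides equal~$1$), and the terms with $\beta(t) = 0$ drop out, so $\sum_{t : (s,t) \in R_1} \phi_1(s,t) = 1$; in particular every $s \in S_1$ has a successor in~$R_1$, and each $\phi_1(s,t)$ lies in $(0,1]$ since $\phi(s,t), \beta(s), \beta(t) > 0$. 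The construction is plainly polynomial time. Symmetrically, let $H_2$ be the analogous chain built from $\gamma$ on the state set $S_2 := \{s \in S \mid \gamma(s) > 0\}$.

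For correctness I would show $\P_1 = \P(\cdot \mid \Bad)$ as measures on $\{s_0\} S^\omega$ (both being understood to give mass~$0$ outside $\{s_0\} S_1^\omega$). First, $\P(\cdot \mid \Bad)$ is concentrated on $\{s_0\} S_1^\omega$: if a path visits a state $s$ with $\beta(s) = 0$, then no earlier state is in a bad BSCC (BSCCs are absorbing, so that would force $\beta(s) = 1$), and by the Markov property no later state reaches a bad BSCC almost surely; hence the path is almost surely not in~$\Bad$. Since cylinder sets form a $\pi$-system generating the $\sigma$-algebra, it suffices to compare the two measures on a cylinder $C = \{s_0 s_1 \cdots s_n\} S_1^\omega$ with all $s_i \in S_1$. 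Telescoping gives $\P_1(C) = \bigl(\prod_{i=1}^{n} \phi(s_{i-1}, s_i)\bigr) \beta(s_n) / \beta(s_0)$, and the Markov property gives $\P(C \cap \Bad) = \P(C) \beta(s_n)$---whether or not $s_0, \ldots, s_n$ already contains a bad-BSCC state, the former case having $\beta(s_n) = 1$---so $\P(C \mid \Bad) = \P(C) \beta(s_n) / \beta(s_0) = \P_1(C)$. The argument for $H_2$ and $\Good$ is word-for-word the same with $\gamma$ in place of~$\beta$.

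The main obstacle is the careful treatment of the states with $\beta(s) = 0$ (resp.\ $\gamma(s) = 0$): they must be removed from~$S_1$ both to keep $\phi_1$ valued in $(0,1]$ and so that $\P(\cdot \mid \Bad)$ is genuinely a measure on the runs of~$H_1$, and the crucial (if short) point is that absorption of BSCCs together with the Markov property make these states invisible to $\P(\cdot \mid \Bad)$. Beyond that, the harmonic identity, the telescoping computation, and the polynomial-time/bit-size bookkeeping are all routine.
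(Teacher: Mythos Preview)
Your proposal is correct and follows essentially the same approach as the paper: both construct $H_1$ via the Doob $h$-transform $\phi_1(s,t) = \phi(s,t)\,\beta(t)/\beta(s)$ on the state set $\{s \mid \beta(s)>0\}$ and verify equality with $\P(\cdot \mid \Bad)$ on cylinder sets (the paper by induction on the path length, you by telescoping, which is the same computation). Your write-up is in fact a bit more careful than the paper's, as you explicitly argue polynomial-time computability of~$\beta$, verify that $G_1$ is a Markov chain via the harmonic identity, and justify that $\P(\cdot \mid \Bad)$ assigns no mass outside $\{s_0\}S_1^\omega$; the paper leaves these points implicit.
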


It follows from Proposition~\ref{prop-conditioned}
 that distinguishing and monitoring are equivalent:
Given HMCs $H_1, H_2$, we can combine them into a single cHMC~$H$ by introducing a new initial state~$s_0$, which branches to the initial states of $H_1, H_2$ with probability $1/2$ each.
We classify the BSCCs of~$H_1$ and of $H_2$ as bad and good, respectively.
Then for any $E \subseteq \Sigma^\omega$ we have
\begin{align*}
 \P_1(E) & = \P(\{O(s_0)\} E \mid \Bad) \quad \text{ and } \\
 \P_2(E) & = \P(\{O(s_0)\} E \mid \Good)\,,
\end{align*}
 so any monitor for~$H$ can be translated in a straightforward way into a monitor that distinguishes $H_1$ and~$H_2$.
Conversely, given a cHMC~$H$, we can compute $H_1,H_2$ according to Proposition~\ref{prop-conditioned}.
Then any monitor that distinguishes $H_1$ and~$H_2$ also monitors~$H$.

By combining this observation with Theorem~\ref{thm-qual-P} we obtain:
\begin{corollary} \label{cor-qual-P-cHMC}
One can decide in polynomial time whether a given cHMC~$H$ is monitorable.
\end{corollary}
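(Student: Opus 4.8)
The plan is to reduce deciding monitorability of the given cHMC~$H$ to deciding distinguishability of a pair of HMCs, and then invoke Theorem~\ref{thm-qual-P}. First I would check whether $\P(\Bad) > 0$ and $\P(\Good) > 0$; as noted after the definitions of $\Bad$ and~$\Good$, this is a graph-reachability question and hence decidable in polynomial time. Under this standing assumption the definition of monitorability is meaningful, and the remaining work is a polynomial-time reduction.

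Next, apply Proposition~\ref{prop-conditioned} to compute, in polynomial time, HMCs $H_1$ and~$H_2$ with $\P_1(E) = \P(E \mid \Bad)$ and $\P_2(E) = \P(E \mid \Good)$ for every measurable $E \subseteq S^\omega$. The core of the argument is the claim that $H$ is monitorable if and only if $H_1$ and~$H_2$ are distinguishable, and both implications follow just by substituting these identities. For the forward direction, if for a given $\varepsilon > 0$ a monitor~$M$ satisfies $\P(\io{\Lang(M)} \mid \Bad) \ge 1-\varepsilon$ and $\P(\io{\Lang(M)} \mid \Good) \le \varepsilon$, then $\P_1(\io{\Lang(M)}_1) \ge 1-\varepsilon$ and $\P_2(\io{\Lang(M)}_2) \le \varepsilon$, so $M$ already witnesses distinguishability of $H_1,H_2$; the backward direction is the same computation read in reverse. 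The only bookkeeping is that $\io{\Lang(M)}$ denotes the same measurable subset of $S^\omega$ on both sides, which holds because the construction behind Proposition~\ref{prop-conditioned} leaves the observation function essentially intact; in a full write-up I would spend one sentence on this.

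Finally, feed $H_1,H_2$ to the polynomial-time algorithm of Theorem~\ref{thm-qual-P} to decide whether they are distinguishable; by the equivalence above this answers whether $H$ is monitorable, and since every step runs in polynomial time, so does the whole procedure. I do not expect a genuine obstacle: all the substance sits in Proposition~\ref{prop-conditioned} (the conditioning construction) and in Theorem~\ref{thm-qual-P} (which itself rests on Proposition~\ref{prop-reduction-to-dist-one} and the results of~\cite{14CK-LICS}), so the corollary amounts to composing these two polynomial-time procedures, with the equivalence of monitorability and distinguishability serving as the glue.
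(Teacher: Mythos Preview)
Your proposal is correct and follows essentially the same route as the paper: compute the conditioned HMCs $H_1, H_2$ via Proposition~\ref{prop-conditioned}, observe that monitorability of~$H$ is equivalent to distinguishability of $H_1, H_2$, and invoke Theorem~\ref{thm-qual-P}. The paper's argument is just the paragraph preceding the corollary, and your extra remarks about checking $\P(\Bad), \P(\Good) > 0$ and about the observation function being preserved are sensible bookkeeping that the paper leaves implicit.
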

Another kind of monitorability, called \emph{strong monitorability}~\cite{Sistla11}, was shown PSPACE-complete in~\cite{Sistla11}.
Strong monitorability implies monitorability.

Using Proposition~\ref{prop-conditioned} again, the monitors from Section~\ref{sec-monitors} apply to monitoring cHMCs.
For instance, the monitor with one-sided error can guarantee that (a) given that the behavior is faulty then an alarm is raised with probability~$1$ and within short expected time, and (b) given that the behavior is correct then probably no alarm is raised.

\section{Conclusions}
\label{sec-conclusions}
In this paper we have considered the distinguishability problem for HMCs.
%We showed that these two problems are equivalent.
We have shown that it is decidable in polynomial time.

We have presented two likelihood based monitors~$M_1, M_2$ for distinguishing between HMCs $H_1, H_2$ based on the sequences of observations generated by them.
%The error probability of these monitors decay exponentially in the number of observations. 
The monitor~$M_2$ makes a decision after running for a fixed number of observations and exhibits two-sided error.
It processes $O(\log \frac{1}{\varepsilon})$ observations to ensure an error probability of at most~$\varepsilon$.
The monitor~$M_1$ has only one-sided error.
%On an observation sequence generated by~$H_1$, with probability~1, the second monitor will eventually make an accurate decision.
%However, on a sequence generated by~$H_2$, with a small error probability, it may give a wrong answer.
The expected number of observations it processes to identify a sequence generated by~$H_1$ is $O(\log \frac{1}{\varepsilon})$ to guarantee an error probability of at most~$\varepsilon$ on sequences generated by~$H_2$.
We have also provided a monitor for distinguishing multiple HMCs.
All error analyses rely on martingale techniques, in particular, Azuma's inequality.

Polynomial time bounded algorithms are provided, which for the monitor~
$M_2$, compute the number of observations that guarantees a given
upper bound on the error, and for the~$M_1$ compute the expected
number of observations of $H_1$ before which an alarm is raised, for a
given error bound on the probability of raising an alarm on inputs
generated by $H_2$.  
These algorithms  employ  linear programming based techniques for
computing profiles.

We have discussed an application to runtime verification of stochastic systems.
The monitorability problem for cHMCs is polynomial-time equivalent to distinguishability, and hence decidable in polynomial time.
We have shown that the monitors developed in this paper can be adapted so that they monitor cHMCs.

One direction for future work is to improve the efficiency of computing a good lower bound on~$c$.
We have seen that this bound strongly influences the number of observations the monitor needs to make,
so the bound may determine the applicability of a monitor in practice.
Another direction is to develop a notion of a monitor for HMCs that are not equivalent but not distinguishable.
Such monitors might still attempt to distinguish between the HMCs for as many runs as possible.

\myparagraph{Acknowledgments.}
Stefan Kiefer is supported by a University Research Fellowship of the
Royal Society. Prasad Sistla is partly supported by the  NSF
grants CCF-1319754 and CNS-1314485.

\bibliographystyle{abbrv}
\bibliography{db}

\begin{thebibliography}{10}

\bibitem{Weather}
P.~Ailliot, C.~Thompson, and P.~Thomson.
\newblock Space-time modelling of precipitation by using a hidden {M}arkov
  model and censored {G}aussian distributions.
\newblock {\em Journal of the Royal Statistical Society}, 58(3):405--426, 2009.

\bibitem{GeneFinding}
M.~Alexandersson, S.~Cawley, and L.~Pachter.
\newblock {SLAM}: Cross-species gene finding and alignment with a generalized
  pair hidden {M}arkov model.
\newblock {\em Genome Research}, 13:469--502, 2003.

\bibitem{AlurCY95}
R.~Alur, C.~Courcoubetis, and M.~Yannakakis.
\newblock Distinguishing tests for nondeterministic and probabilistic machines.
\newblock In {\em Proceedings of STOC}, pages 363--372. ACM, 1995.

\bibitem{BKKM14conditional}
C.~Baier, J.~Klein, S.~Kl\"uppelholz, and S.~M\"arcker.
\newblock Computing conditional probabilities in {M}arkovian models
  efficiently.
\newblock In {\em Proceedings of TACAS}, volume 8413 of {\em LNCS}, pages
  515--530, 2014.

\bibitem{BertrandHL16}
N.~Bertrand, S.~Haddad, and E.~Lefaucheux.
\newblock Accurate approximate diagnosability of stochastic systems.
\newblock In {\em Proceedings of LATA}, pages 549--561, 2016.

\bibitem{Gesture}
F.-S. Chen, C.-M. Fu, and C.-L. Huang.
\newblock Hand gesture recognition using a real-time tracking method and hidden
  {M}arkov models.
\newblock {\em Image and Vision Computing}, 21(8):745--758, 2003.

\bibitem{14CK-LICS}
T.~Chen and S.~Kiefer.
\newblock On the total variation distance of labelled {M}arkov chains.
\newblock In {\em Proceedings of CSL-LICS}, pages 33:1--33:10, 2014.

\bibitem{DNA-modeling}
G.~Churchill.
\newblock Stochastic models for heterogeneous {DNA} sequences.
\newblock {\em Bulletin of Mathematical Biology}, 51(1):79--94, 1989.

\bibitem{SignalProcessing}
M.~Crouse, R.~Nowak, and R.~Baraniuk.
\newblock Wavelet-based statistical signal processing using hidden {M}arkov
  models.
\newblock {\em IEEE Transactions on Signal Processing}, 46(4):886--902, April
  1998.

\bibitem{Doyen-Equivalence}
L.~Doyen, T.~Henzinger, and J.-F. Raskin.
\newblock Equivalence of labeled {M}arkov chains.
\newblock {\em International Journal of Foundations of Computer Science},
  19(3):549--563, 2008.

\bibitem{durbin1998biological}
R.~Durbin.
\newblock {\em Biological Sequence Analysis: Probabilistic Models of Proteins
  and Nucleic Acids}.
\newblock Cambridge University Press, 1998.

\bibitem{HMM-comp-biology}
S.~Eddy.
\newblock What is a hidden {M}arkov model?
\newblock {\em Nature Biotechnology}, 22(10):1315--1316, October 2004.

\bibitem{book-Fraser-HMM}
A.~Fraser.
\newblock {\em Hidden {M}arkov Models and Dynamical Systems}.
\newblock Society for Industrial and Applied Mathematics, 2008.

\bibitem{Homology}
J.~Gough, K.~Karplus, R.~Hughey, and C.~Chothia.
\newblock Assignment of homology to genome sequences using a library of hidden
  {Markov} models that represent all proteins of known structure.
\newblock {\em Journal of Molecular Biology}, 313(4):903--919, 2001.

\bibitem{Ito-Equivalence}
H.~Ito, S.-I. Amari, and K.~Kobayashi.
\newblock Identifiability of hidden {M}arkov information sources and their
  minimum degrees of freedom.
\newblock {\em IEEE Transactions on Information Theory}, 38(2):324--333, March
  1992.

\bibitem{probDistMeasure}
B.-H. Juang and L.~Rabiner.
\newblock A probabilistic distance measure for hidden {M}arkov models.
\newblock {\em AT\&T Technical Journal}, 64(2):391--408, February 1985.

\bibitem{16KS-TR}
S.~Kiefer and A.~Sistla.
\newblock Distinguishing hidden {M}arkov chains.
\newblock Technical report, arxiv.org, 2015.
\newblock Available at \verb|http://arxiv.org/abs/1507.02314|.

\bibitem{ProteinStructure}
A.~Krogh, B.~Larsson, G.~von Heijne, and E.~Sonnhammer.
\newblock Predicting transmembrane protein topology with a hidden {Markov}
  model: Application to complete genomes.
\newblock {\em Journal of Molecular Biology}, 305(3):567--580, 2001.

\bibitem{LyngsoP02}
R.~Lyngs{\o} and C.~Pedersen.
\newblock The consensus string problem and the complexity of comparing hidden
  {M}arkov models.
\newblock {\em Journal of Computer and System Sciences}, 65(3):545--569, 2002.

\bibitem{pp02}
A.~Papoulis and S.~U. Pillai.
\newblock {\em Probability, Random Variables and Stochastic Processes}.
\newblock McGrawHill, NewYork, 2002.

\bibitem{Phatarfod}
R.~Phatarfod.
\newblock Sequential analysis of dependent observations. {I}.
\newblock {\em Biometrika}, 52(1-2):157--165, 1965.

\bibitem{Rabiner89}
L.~Rabiner.
\newblock A tutorial on hidden {M}arkov models and selected applications in
  speech recognition.
\newblock {\em Proceedings of the IEEE}, 77(2):257--286, 1989.

\bibitem{MusicalScore}
C.~Raphael.
\newblock Automatic segmentation of acoustic musical signals using hidden
  {M}arkov models.
\newblock {\em IEEE Transactions on Pattern Analysis and Machine Intelligence},
  21(4):360--370, April 1999.

\bibitem{Schmitz}
N.~Schmitz and B.~S\"uselbeck.
\newblock Sequential probability ratio tests for homogeneous {M}arkov chains.
\newblock In {\em Mathematical Learning Models -- Theory and Algorithms},
  volume~20 of {\em Lecture Notes in Statistics}, pages 191--202, 1983.

\bibitem{Schutzenberger}
M.-P. Sch\"{u}tzenberger.
\newblock On the definition of a family of automata.
\newblock {\em Inf.\ and Control}, 4:245--270, 1961.

\bibitem{Sistla11}
A.~Sistla, M.~\v{Z}efran, and Y.~Feng.
\newblock Monitorability of stochastic dynamical systems.
\newblock In {\em Proceedings of CAV}, volume 6806 of {\em LNCS}, pages
  720--736, 2011.

\bibitem{Sistla14}
A.~Sistla, M.~\v{Z}efran, Y.~Feng, and Y.~Ben.
\newblock Timely monitoring of partially observable stochastic systems.
\newblock In {\em Proceedings of the 17th international conference on Hybrid
  systems: computation and control (HSCC14)}, pages 61--70, 2014.

\bibitem{Swamy}
R.~Swamy.
\newblock Sequential comparison of two {M}arkov chains.
\newblock {\em Biometrika}, 70(1):293--296, 1983.

\bibitem{Tzeng}
W.~Tzeng.
\newblock A polynomial-time algorithm for the equivalence of probabilistic
  automata.
\newblock {\em SIAM Journal on Computing}, 21(2):216--227, 1992.

\bibitem{wetherill}
B.~Wetherill and K.~Glazenbrook.
\newblock {\em Sequential Methods in Statistics}.
\newblock Chapman and Hall, 1986.

\bibitem{book:Williams}
D.~Williams.
\newblock {\em Probability with Martingales}.
\newblock Cambridge University Press, 1991.

\end{thebibliography}

\iftechrep{
\clearpage
%\onecolumn
\appendix

\section{Proofs of Section~4} \label{app-monitors}

We prove Theorem~\ref{thm-alg1} from the main text.

\begin{qtheorem}{\ref{thm-alg1}}
\stmtthmalgone
\end{qtheorem}
\begin{proof}
Let $\psi_{1,k},\psi_{2,k}$ and $X_k$ denote the values of
$\psi_1,\psi_2$ and $x$ directly after the $k$-th phase, for $k\geq 0$.
Initially, $\psi_{1,0}=\delta_{s_{1,0}}$, $\psi_{2,0}=\delta_{s_{2,0}}$ and $X_0=0$.
For $k \ge 1$, let $u_k \in \Sigma^{k \cdot m}$ be the sequence of all observations received until and including phase~$k$.
By induction on $k$, it is easy to see that $\psi_{i,k}\:=\cd_i(\delta_{s_{i,0}},u_{k})$ for $k \ge 0$, $i=1,2$.

Note that $X_k$ depends only on~$u_k$.
In the following we view $X_0, X_1, \ldots$ as a sequence of random variables.
(Formally, for \mbox{$k \ge 0$} the random variable $X_k$ is a function of type $X_k: \{s_{1,0}\} S_1^\omega \to \Q$.)
We also define a sequence $Y_0, Y_1, \ldots$ of random variables with $Y_k = X_k + k \cdot \frac{c}{2}$.
Note that $X_0 = Y_0 =0$.

We show that the sequence of random variables $Y_0, Y_1, \ldots$ forms a supermartingale in~$H_1$.
Let $k \ge 0$.
Fix $u_k \in \Sigma^{k \cdot m}$.
Recall that this determines~$X_k$.
For the conditional expected value of $X_{k+1}$ given prefix~$u_k$ we have:
\begin{equation} \label{eq-alg1-mart-X-1}
 \E (X_{k+1} \mid \io{u_k \Sigma^\omega}_1) \ = \ X_k + d\;,
\end{equation} 
where $d$ denotes the expected change of~$x$ after phase~$k+1$.
Recall that $\psi_{i,k}= \cd_i(\delta_{s_{i,0}},u_{k})$ for $i=1,2$.
Let $p_i =  \pr_i(\psi_{i,k},\A(\psi_{1,k},\psi_{2,k}))$, for $i=1,2$.
Assume $p_1+p_2\leq 1$.
According to our rule for updating~$x$ we then have:
\begin{align*}
d & \ = \ p_1\cdot (-1) + (1-p_1)\cdot \frac{p_1+p_2}{2-p_1-p_2} \ = \ \frac{p_2-p_1}{2 - p_1-p_2}
\intertext{This is negative. Moreover, by the definition of a profile
    we have $p_1 - p_2 \ge c>0$. Further more, $1\leq 2-p_1-p_2<2$.
Hence:}
d & \ \le \ \frac{p_2-p_1}{2} \ \le \ - \frac{c}{2}
\end{align*}
Combining this with~\eqref{eq-alg1-mart-X-1} and the definition of~$Y_k$ we obtain:
\begin{equation} \label{eq-alg1-mart-Y-1}
\E (Y_{k+1} \mid \io{u_k \Sigma^\omega}_1) \ = \ Y_k + \frac{c}{2} + d \ \le \ Y_k
\end{equation}
Now assume $p_1 + p_2 > 1$.
Then we have:
\begin{align*}
d & \ = \ -p_1\cdot \frac{2-p_1-p_2}{p_1+p_2} + (1-p_1) \cdot 1 \ = \ \frac{p_2 - p_1}{p_1+p_2} \\
  & \ \le \ \frac{p_2 - p_1}{2} \ \le \ - \frac{c}{2} \;,
\end{align*}
so \eqref{eq-alg1-mart-Y-1} again follows.
Hence we have shown that $Y_0, Y_1, \ldots$ is a supermartingale in~$H_1$.

By definition of the update rule we have $\card{X_{k+1} - X_k} \le 1$ and hence $\card{Y_{k+1} - Y_k} \le 1 + \frac{c}{2} \le \frac{3}{2}$.
Applying Azuma's inequality (see, e.g.,~\cite{book:Williams}) to the supermartingale $Y_0,Y_1, \ldots$ we obtain:
\begin{align*}
 \P_1\left\{X_N>0\right\} \ & = \ \P_1\left\{Y_N>\frac{c}{2}\cdot N\right\} \ \leq \ \exp\left(-\frac{\left(\frac{c}{2}\cdot N\right)^2}{2 N \cdot \left(\frac{3}{2}\right)^2}\right) \\ \ & = \ \exp\left(-\frac{c^2}{18}\cdot N\right)
\end{align*}
Hence,
\[
\P_1(\io{\Lang(M)}_1)
 \ = \ \P_1\{X_N\leq 0\}
 \ \geq \ 1-\exp\left(-\frac{c^2}{18}\cdot N\right).
\]
From this, it follows that  $\P_1(\io{\Lang(M)}_1) \ge 1- \exp(-\frac{c^2}{18}\cdot 
N).$

The proof of the second inequality in the statement is similar with the following modifications.
The random variables $X'_k$ are defined like~$X_k$, but on sequences of states in~$H_2$ rather than~$H_1$.
Define $Y'_k = X'_k - k \cdot \frac{c}{2}$.
The sequence $Y'_0, Y'_1, \ldots$ is now a submartingale.
Applying Azuma's inequality to this submartingale now leads to the second inequality claimed in the statement.
\end{proof}

The following lemma is used for the proof of Theorem~\ref{thm-alg1-likeli}.

\begin{lemma}\label{lem-maximizing-event-countable}
Let $S$ be a countable set.
Let $\psi_1, \psi_2$ be probability distributions over~$S$.
For $i \in \{1,2\}$ and any event $V \subseteq S$ define $\psi_i(V) := \sum_{v \in V} \psi_i(v)$.
Define
\[
 W := \{s \in S \mid \psi_1(s) \ge \psi_2(s)\}\,.
\]
Then
\[
 \max_{V \subseteq S} \big( \psi_1(V) - \psi_2(V) \big) \ = \ \psi_1(W) - \psi_2(W)\,,
\]
i.e., $W$ maximizes the probability difference over all events.
\end{lemma}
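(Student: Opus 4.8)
The plan is to observe that for every $V \subseteq S$ one has $\psi_1(V) - \psi_2(V) = \sum_{s \in V} (\psi_1(s) - \psi_2(s))$, so that maximizing the probability difference over all events amounts to collecting exactly the summands with a nonnegative contribution — and these are precisely the $s \in W$. The first thing I would do is justify the manipulations of these (possibly infinite) sums: since $\psi_1$ and $\psi_2$ are probability distributions, $\sum_{s \in S} |\psi_1(s) - \psi_2(s)| \le \sum_{s \in S} \psi_1(s) + \sum_{s \in S} \psi_2(s) = 2 < \infty$, so the family $(\psi_1(s) - \psi_2(s))_{s \in S}$ is absolutely summable. Consequently $\sum_{s \in V}(\psi_1(s) - \psi_2(s))$ is a well-defined finite real number for every $V \subseteq S$, it equals $\psi_1(V) - \psi_2(V)$, and we may split such sums over disjoint unions and add or remove nonnegative summands at will.

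Next I would fix an arbitrary $V \subseteq S$ and decompose it into the disjoint union of $V \cap W$ and $V \setminus W$. By the definition of $W$, every $s \in V \cap W$ satisfies $\psi_1(s) - \psi_2(s) \ge 0$ and every $s \in V \setminus W$ satisfies $\psi_1(s) - \psi_2(s) < 0$. Hence
\begin{align*}
\psi_1(V) - \psi_2(V)
&= \sum_{s \in V \cap W}(\psi_1(s) - \psi_2(s)) + \sum_{s \in V \setminus W}(\psi_1(s) - \psi_2(s)) \\
&\le \sum_{s \in V \cap W}(\psi_1(s) - \psi_2(s)) \\
&\le \sum_{s \in W}(\psi_1(s) - \psi_2(s)) \ = \ \psi_1(W) - \psi_2(W),
\end{align*}
where the first inequality drops the (nonpositive) terms indexed by $V \setminus W$ and the second adds the (nonnegative) terms indexed by $W \setminus V$. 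Since $W$ is itself an event, the value $\psi_1(W) - \psi_2(W)$ is actually attained, so it is the maximum, which is exactly the claimed identity.

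I do not anticipate any genuine obstacle here: the argument is essentially the discrete Neyman–Pearson observation. The only subtlety worth flagging is that $S$ is merely countable rather than finite, which is why the absolute-summability remark in the first step is needed — without it, the splitting and rearranging of the defining series would not be justified.
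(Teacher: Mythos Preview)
Your proof is correct and follows essentially the same approach as the paper's: both argue that removing elements with $\psi_1(s) < \psi_2(s)$ and adding elements with $\psi_1(s) \ge \psi_2(s)$ cannot decrease the difference, which forces $W$ to be optimal. Your presentation is in fact slightly more careful than the paper's, since you make the absolute-summability argument explicit to justify splitting and rearranging the countable sums, whereas the paper only treats single-element additions and removals and then asserts that the statement follows.
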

\begin{proof}
If $s \in S$ with $s \not\in V$ and $\psi_1(s) \ge \psi_2(s)$, then
\begin{align*}
& \psi_1(V \cup \{s\}) - \psi_2(V \cup \{s\}) \\
& = \psi_1(V) - \psi_2(V) + \psi_1(s) - \psi_2(s) \\
& \ge \psi_1(V) - \psi_2(V)\,.
\end{align*}
Similarly, if $s \in S$ with $s \in V$ and $\psi_1(s) < \psi_2(s)$, then
\begin{align*}
& \psi_1(V \setminus \{s\}) - \psi_2(V \setminus \{s\}) \\
& = \psi_1(V) - \psi_2(V) - \psi_1(s) + \psi_2(s) \\
& > \psi_1(V) - \psi_2(V)\,.
\end{align*}
The statement of the lemma follows.
\end{proof}

Now we prove Theorem~\ref{thm-alg1-likeli} from the main text.

\begin{qtheorem}{\ref{thm-alg1-likeli}}
\stmtcoralgonelikeli
\end{qtheorem}
\begin{proof}
Let $N \ge 0$.
We can write $\Lang(M_2) = W \Sigma^\omega$ where $W \subseteq \Sigma^{N \cdot m}$ denotes the set of observation prefixes of length~$N \cdot m$ on which $M_2$ outputs~$1$.
Then we have:
\begin{align*}
 W \ &= \ \{ u \in \Sigma^{N \cdot m} \mid \pr_1(\delta_{s_{1,0}}, u) \ge \pr_2(\delta_{s_{2,0}}, u) \} \\
     &= \ \{ u \in \Sigma^{N \cdot m} \mid \P_1(\io{\{u\} \Sigma^\omega}_1) \ge \P_2(\io{\{u\} \Sigma^\omega}_2) \}
\end{align*}
(We left the output of the monitor unspecified when the likelihood ratio is equal to~$1$.
As a consequence, the inequalities above might be strict.
This does not affect the rest of the argument.)
Using Lemma~\ref{lem-maximizing-event-countable}
we obtain the following inequality.
%see Lemma~\ref{lem-maximizing-event-countable} in Appendix~\ref{app-monitors}, 
%it follows that we have
\begin{equation} \label{eq-thm-alg1-likeli-1}
\P_1(\io{W \Sigma^\omega}_1) - \P_2(\io{W \Sigma^\omega}_2) \ \ge \ 
\P_1(\io{V \Sigma^\omega}_1) - \P_2(\io{V \Sigma^\omega}_2) 
\end{equation}
for all $V \subseteq \Sigma^{N \cdot m}$.
In particular, this holds for the prefixes of length~$N \cdot m$ of~$\Lang(M_2')$ from Theorem~\ref{thm-alg1}.
Hence we have:
\begin{align*}
& \quad \P_1(\io{\Lang(M_2)}_1) - \P_2(\io{\Lang(M_2)}_2) \\
& = \quad \P_1(\io{W \Sigma^\omega}_1) - \P_2(\io{W \Sigma^\omega}_2) && \text{as $\Lang(M_2) = W \Sigma^\omega$} \\
& \ge \quad \P_1(\io{\Lang(M_2')}_1) - \P_2(\io{\Lang(M_2')}_2) && \text{by~\eqref{eq-thm-alg1-likeli-1}} \\
& \ge \quad 1 - 2 \exp\left(-\frac{c^2}{18}\cdot N\right) && \text{by Theorem~\ref{thm-alg1}}
\end{align*}
This concludes the proof of the Theorem~\ref{thm-alg1-likeli}.
\end{proof}

\myparagraph{Computation for Example~\ref{ex-1}.}
We analyse the likelihood-based monitor~$M_2$ for the HMCs of Figure~\ref{fig-ex-1}.
The monitor~$M_2$ makes $N \cdot m = 4 N$ observations.
It is easy to see that it outputs~$1$ if and only if it reads at least as many $a$-symbols as $b$-symbols, 
i.e., the number of read $a$-symbols is at least $2 N$.
Hence we have:
\begin{align*}
& \quad \P_2(\io{\Lang(M_2)}_2) \\
& = \sum_{i = 2 N}^{4 N} \binom{4 N}{i} \left(\frac12 - \delta\right)^i \left(\frac12 + \delta\right)^{4 N - i} \\
& \ge \binom{4 N}{2 N} \left(\frac12 - \delta\right)^{2 N} \left(\frac12 + \delta\right)^{2 N} \\
&  =  \frac{(4 N)!}{(2 N)! \cdot (2 N)!} \left(\frac14 - \delta^2\right)^{2 N} \\
&  =  \frac{2^{2 N} \cdot (4 N - 1) \cdot (4 N - 3) \cdots 5 \cdot 3}{(2 N) \cdot (2 N - 1) \cdot (2 N - 2) \cdots 2 \cdot 1} \left(\frac14 - \delta^2\right)^{2 N} \\
& \ge \frac{2^{2 N}}{(2 N)} \cdot 2^{2 N - 1} \cdot \left(\frac14 - \delta^2\right)^{2 N} \\
&  = \frac{1}{4 N} \cdot \left(1 - 4 \delta^2\right)^{2 N}
\intertext{For $x \in [0, \frac12]$ we have $\ln (1 - x) \ge - 2 x$. So we can continue as follows:}
& \quad \P_2(\io{\Lang(M_2)}_2) \\
& \ge \frac{1}{4 N} \cdot \exp\left(-16 \delta^2 N\right) \\
& \ge \exp\left(-17 \delta^2 N\right) \qquad \text{for large~$N$}
\end{align*}
It follows that for small~$\varepsilon$, an inequality $\varepsilon \ge \P_2(\io{\Lang(M_2)}_2)$ implies that $N \ge \frac{1}{17} \delta^{-2} \ln \frac{1}{\varepsilon}$.
This completes the calculation for the example.
\qed

\medskip
We prove Lemma~\ref{lem-technical} from the main text.
\begin{qlemma}{\ref{lem-technical}}
\stmtlemtechnical
\end{qlemma}
\begin{proof}
By contradiction. Contrary to the lemma, assume:
\begin{equation}\label{condition-0} 
\P_1(\io{U_N}_1)< 1-4
\exp\Big(-\frac{c^2}{36}\cdot N\Big)
\end{equation}
Let $V_N := \Lang(M_2)-U_N$, and let $W_N$ denote the set of all prefixes, of length $N\cdot m$, of sequences in $V_N$. 
Clearly, for all $v\in W_N$:
\begin{equation*} %\label{condition-1} 
 %\forall v\in W_N: \,
\exp\Big(-\frac{c^2}{36}\cdot N\Big)<\lr(v)<1
\end{equation*}
It follows for all $v\in W_N$:
$$\pr_1(\delta_{s_{1,0}},v)=\frac{\pr_2(\delta_{s_{2,0}},v)}{\lr(v)}<
\exp\Big(\frac{c^2}{36}\cdot N\Big) \cdot \pr_2(\delta_{s_{2,0}},v)$$
\begin{multline}
\text{\hspace{-3.3mm}Hence, \ }
\P_1(\io{V_N}_1)=\sum_{v\in W_N}
\pr_1(\delta_{s_{1,0}},v) 
%&<\sum_{v\in W_N} \pr_2(\delta_{s_{2,0}},v)\cdot
%\exp\Big(\frac{c^2}{36}\cdot N\Big) \nonumber\\
\\
< \exp\Big(\frac{c^2}{36}\cdot N\Big)\cdot \underbrace{\sum_{v\in W_N} \pr_2(\delta_{s_{2,0}},v)}_{=\P_2(\io{V_N}_2)}\,. \label{condition-2}
\end{multline}
From Theorem~\ref{thm-alg1-likeli} we know that
\[ 
\P_2(\io{V_N}_2)
\ \le\ \P_2(\io{\Lang(M_2)}_2)
\ \le\ 2 \exp\Big(-\frac{c^2}{18}\cdot N\Big)\,.
\]
By combining this with~\eqref{condition-2}, we get:
\begin{equation}
\begin{aligned}
 \P_1(\io{V_N}_1)&<2 \exp\Big(\frac{c^2}{36}\cdot
  N\Big)\cdot \exp\Big(-\frac{c^2}{18}\cdot N\Big) \\
& = 2 \exp\Big(-\frac{c^2}{36}\cdot N\Big)  
\end{aligned} \label{condition-3}
\end{equation}
We have
$\P_1(\io{\Lang(M_2)}_1)= \P_1(\io{V_N}_1) +
\P_1(\io{U_N}_1)$. Using \eqref{condition-0} and~\eqref{condition-3}, we get:
$$\P_1(\io{\Lang(M_2)}_1)< 1-2 \exp\Big(-\frac{c^2}{36}\cdot
  N\Big) < 1-2\exp\Big(-\frac{c^2}{18}\cdot
  N\Big)$$
But this contradicts Theorem~\ref{thm-alg1-likeli}.
\end{proof}

\medskip
We prove Theorem~\ref{thm-alg2} from the main text:
\begin{qtheorem}{\ref{thm-alg2}}
\stmtthmalgtwolikeli
\end{qtheorem}
\begin{proof}
Let $N_0$ be the smallest integer such that
$\exp\left(-\frac{c^2}{36}\cdot N_{0}\right)\leq \low$. Clearly,
for all $N\geq N_0$ we have $\Lang(M_1) \supseteq U_N$ where $U_N$ is the set
defined at the beginning of Section~\ref{sub-monitor-one-sided}. From this observation and Lemma~\ref{lem-technical},
we see that for all $N \ge N_0$:
$$\P_1(\io{\Lang(M_1)}_1)\geq   1-4 \exp\left(-\frac{c^2}{36}\cdot 
  N\right)$$
From this, we get
$$\P_1(\io{\Lang(M_1)}_1)\geq  \lim_{N\to \infty}  1-4 \exp\left(-\frac{c^2}{36}\cdot 
  N\right)\:=1.$$
Let $X\:=\{v\in (\Sigma^m)^*\mid  \pr_1(\delta_{s_{1,0}},v)>0,\: \lr(v)\leq
\low,\: \forall\, i<\card{v}: \lr(v[i])>\low\}$. Intuitively, $X$ is
the set of shortest observation sequences whose length is a multiple
of $m$ and whose likelihood ratio is $\leq \low$. It is easy to see that
$\Lang(M_1)\:=X\Sigma^{\omega}$.
Observe that there do
not exist two distinct sequences $v_1,v_2 \in X$ such that $v_1$ is a prefix of $v_2$.  
\begin{align*}
\P_2(\io{\Lang(M_1)}_2)&=\sum_{v\in X}\pr_2(\delta_{s_{2,0}},v)\\
&\leq \low\cdot \sum_{v\in X}\pr_1(\delta_{s_{1,0}},v)\\
&=\low\cdot \P_1(\io{\Lang(M_1)}_1)\leq \low
\end{align*}
\end{proof}

We prove Proposition~\ref{prop-alg2likeli-bound} from the main text.
\begin{qproposition}{\ref{prop-alg2likeli-bound}}
\stmtpropalgtwolikelibound
\end{qproposition}
\begin{proof}
Since $T$ is a nonnegative integer valued
random variable, from \cite{pp02}, we see that $\E_1(T) = \sum_{n\geq
  0} \P_1\{T>n\}$. Since $M_1$ only decides after each phase, i.e.,
after reading each successive sequence of $m$ observations, we see that 
 $\E_1(T) = \sum_{N\geq
  0} m\cdot \P_1\{T>N\cdot m\}$. Let $N_0$ be the smallest integer
such that  $ \exp\left(-\frac{c^2}{36}\cdot 
  N_0\right)\leq \low$, i.e.,  $N_0=\lceil\frac{36}{c^2}\cdot \log \frac{1}{\low}\rceil$.
We have:
\begin{align}
\E_1(T) &=  \sum_{N=0}^{N_0-1} m\cdot \underbrace{\P_1\{T>N\cdot m\}}_{\le 1}+\sum_{N\geq
 N_0} m\cdot \P_1\{T>N\cdot m\} \nonumber\\
&\leq m\cdot N_0 + m\cdot \sum_{N\geq
 N_0} \P_1\{T>N\cdot m\} 
\label{prop-alg2like-cond1}
%&\text{since $\P_1\{T>N\cdot m\}\leq 1$}\nonumber
\end{align}
For $N \geq 0$, let $X_N\:=\{u\in \Sigma^{\omega} \mid \lr(u[N\cdot
m])>\low\}$.
Observe that, for $N\geq N_0$, $X_N\subseteq \Sigma^{\omega}-U_N$.
Further,
\begin{equation}
\begin{aligned}
\P_1\{T>N\cdot m\}&\leq \P_1(\io{X_N}_1)\leq
                    \P_1(\io{\Sigma^{\omega}-U_N}_1)\\
&\leq 4 \exp\left(-\frac{c^2}{36}\cdot 
  N\right) \text{\quad from Lemma \ref{lem-technical}.}
\end{aligned}
\label{prop-alg2like-cond2}
\end{equation}
From \eqref{prop-alg2like-cond1} and \eqref{prop-alg2like-cond2}, we
get
\begin{align}
\E_1(T) &=  m\cdot N_0 +4m\cdot \sum_{N\geq
 N_0}  \exp\left(-\frac{c^2}{36}\cdot 
  N\right)\nonumber\\
&=m\cdot N_0 +4m\cdot \sum_{N\geq 0}  \exp\left(-\frac{c^2}{36}\cdot 
  (N+N_0)\right)\nonumber\\
&=m\cdot N_0 + 4m\cdot \exp\left(-\frac{c^2}{36}\cdot N_0\right)\cdot \sum_{N\geq
 0}  \exp\left(-\frac{c^2}{36}\cdot 
  N\right)\nonumber\\
&\leq m\cdot N_0 + 4m\cdot \low\cdot
  \frac{1}{1-\exp\left(-\frac{c^2}{36}\right)}\nonumber\\
&\leq \frac{36m}{c^2}\cdot \log \frac{1}{\low} +m+ 4m\cdot \low\cdot
  \frac{1}{1-\exp\left(-\frac{c^2}{36}\right)}\label{prop-alg2like-cond3}\\
 &\text{substituting for $N_0$.}\nonumber
\end{align}
By using a Taylor series expansion of
$\exp\left(-\frac{c^2}{36}\right)$, we get an infinite sum in which the
signs of the terms alternate starting with a positive sign, and in which
the absolute values of the terms decrease monotonically. Hence we can
upper bound its value by the sum of the first three terms, which is
$(1-\frac{c^2}{36}+\frac{c^4}{2\cdot 36^2})$.
From this, we see that $1-\exp\left(-\frac{c^2}{36}\right) \geq
\frac{c^2}{36}-\frac{c^4}{2\cdot 36^2} = \frac{72 c^2 - c^4}{2 \cdot 36^2}$. Using this, after simplification, we see that
\begin{align}  
4m\cdot \low\cdot\frac{1}{1-\exp\left(-\frac{c^2}{36}\right)}&\leq
                                                              \frac{8\cdot
                                                              36^2\cdot
                                                              m\cdot
                                                              \low}{c^2\cdot
                                                              (72-c^2)}\nonumber\\
&\leq\frac{8\cdot 36^2\cdot m\cdot \low}{71\cdot c^2}&\text{since
                                                      $c<1$}\nonumber\\
&\leq \frac{147\cdot m\cdot \low}{c^2}\label{prop-alg2like-cond4}
\end{align}
Using the bound of \eqref{prop-alg2like-cond4} in~\eqref{prop-alg2like-cond3},
we obtain the statement.
\end{proof}

We prove Theorem~\ref{thm-algmultiple} from the main text.
\begin{qtheorem}{\ref{thm-algmultiple}}
\stmtthmalgmultiple
\end{qtheorem}
\begin{proof}
For $j \in \{1, \ldots, k\} - \{i\}$ define:
\[
\Lang_{i,j} := \begin{cases}
\{u\in \Sigma^{N\cdot m}\mid \pr_i(\delta_{s_{i,0}},u) \geq \pr_j(\delta_{s_{j,0}},u)\} & \text{ if $i<j$} \\
\{u\in \Sigma^{N\cdot m}\mid \pr_i(\delta_{s_{i,0}},u) > \pr_j(\delta_{s_{j,0}},u)\}          & \text{ if $i>j$}
\end{cases}
\]
By Theorem~\ref{thm-alg1-likeli} we have:%, we see that there exist polynomial time computable constants $c_{i,j}>0$ such that 
\begin{equation} \label{eq-thm-algmultiple-M2prime}
 \begin{aligned} 
  \P_i(\io{\Lang_{i,j}\Sigma^\omega}_i) \ &\ge \ 1-2\cdot \exp\left(-\frac{c^2}{18}\cdot N\right) \quad \text{and} \\
  \P_j(\io{\Lang_{i,j}\Sigma^\omega}_j) \ &\le \ 2\exp\left(-\frac{c^2}{18}\cdot N\right)
 \end{aligned}
\end{equation}
We have:
\begin{align*}
& \ 1 - \P_i(\io{\Lang_i\Sigma^\omega}_i) \\
& \  =  \ 1 - \P_i(\io{\cap_{j\neq i}\Lang_{i,j}\Sigma^\omega}_i)
     && \text{$\Lang_i= \cap_{j\neq i}\Lang_{i,j}$} \\
%& \  =  \ 1 - \P_i(\io{(\Sigma^{N\cdot m}-\Lang_i)\Sigma^\omega}_i) \\
& \  =  \ \P_i(\io{\cup_{j\neq i} (\Sigma^{N\cdot m}-\Lang_{i,j}) \Sigma^\omega}_i) \\
& \ \le \ \sum_{j\neq i}\P_i(\io{(\Sigma^{N\cdot m}-\Lang_{i,j})\Sigma^\omega}_i)
     && \text{union bound} \\
& \  =  \ \sum_{j\neq i} \big( 1 - \P_i(\io{\Lang_{i,j}\Sigma^\omega}_i) \big) \\
& \ \le \ \sum_{j\neq i} \left( 2\cdot \exp\left(-\frac{c^2}{18}\cdot N\right) \right)
     && \text{by~\eqref{eq-thm-algmultiple-M2prime}} \\
& \ \le \ 2 k \cdot \exp\left(-\frac{c^2}{18}\cdot N\right)
\end{align*}
The first inequality follows.
Further we have:
\begin{align*}
\P_j(\io{\Lang_i\Sigma^\omega}_j)
& \ \le \ \P_j(\io{\Lang_{i,j}\Sigma^\omega}_j)
     && \text{$\Lang_i \subseteq \Lang_{i,j}$} \\
& \ \le \ 2 \exp\left(-\frac{c^2}{18}\cdot N\right)
     && \text{by~\eqref{eq-thm-algmultiple-M2prime}}
\end{align*}
This proves the second inequality.
\end{proof}

\section{Proofs of Section~5} \label{app-profiles}

We prove Proposition~\ref{prop-test-set} from the main text.

\begin{qproposition}{\ref{prop-test-set}}
\stmtproptestset
\end{qproposition}
\begin{proof}
For both $i=1,2$ and all $a \in \Sigma$ define a matrix $M_i(a) \in [0,1]^{S_i \times S_i}$ with
\[
 \big(M_i(a)\big)_{s, t} = 
 \begin{cases}
  \phi_i(s, t) & \text{if $O_i(s) = a$} \\
  0            & \text{otherwise}
 \end{cases}
 \qquad\qquad \text{for all $s, t \in S_i$.}
\]
For $i=1,2$ write $\eta_i \in \{1\}^{S_i}$ for the column vector all whose entries are~$1$. 
For $i=1,2$ and for any string $u = a_1 \ldots a_k \in \Sigma^*$ define the column vector $\eta_i(u) \in [0,1]^{S_i}$ with $\eta_i(u) = M_i(a_1) \cdot \ldots \cdot M_i(a_{k}) \cdot \eta_i$.
For all $s \in S_i$ we have, according to the definitions, the equality $\big(\eta_i(u)\big)_s = \P_{i,s}(\io{u \Sigma^{\omega}}_i)$, which is the probability that the string $u$ is output by~$H_i$ starting from~$s$. % (but disregarding the observation of the very first state~$s$).
For a distribution $\psi_i \in \Distr{S_i}$ write $\langle \psi_i \rangle \in [0,1]^{S_i}$ for the stochastic row vector with $\langle \psi_i \rangle_s = \psi_i(s)$.
According to the definitions, we have $\pr_i(\psi_1,u) = \langle \psi_i \rangle \cdot \eta_i(u)$ for all $u \in \Sigma^*$.
Define
\[
 \eta(u) := \left(\begin{matrix} \eta_1(u) \\[1mm] - \eta_2(u)
  \end{matrix}\right) \in [0,1]^{S_1 \cup S_2}
 \qquad \text{for all $u \in \Sigma^*$.}
\]
Hence we have $\psi_1 \equiv \psi_2$ if and only if
\[
  \big(\langle \psi_1 \rangle \quad \langle \psi_2 \rangle\big) 
  \cdot 
  \eta(u) = 0 \qquad \text{for all $u \in \Sigma^*$.}
\]
It follows that we have $\psi_1 \equiv \psi_2$ if and only if
$\big(\langle \psi_1 \rangle \quad \langle \psi_2 \rangle\big)$
is orthogonal to the vector space, say~$\mathcal{V}$, spanned by
$\{\eta(u) \mid u \in \Sigma^*\}$.
Define
\[
  M(u) := \left(\begin{matrix}
M_1(u) & 0 \\
0      & M_2(u)
\end{matrix}\right)
\in [0,1]^{(S_1 \cup S_2) \times (S_1 \cup S_2)}
\]
for all $u \in \Sigma^*$.
Note that $\eta(a u) = M(a) \eta(u)$ holds for all $a \in \Sigma$ and all $u \in \Sigma^*$. 
Hence the vector space~$\mathcal{V}$ can be equivalently described as the smallest vector space that contains $\eta(\varepsilon)$ (where $\varepsilon$ denotes the empty string, i.e., all entries of $\eta(\varepsilon)$ are $\pm 1$) and satisfies $M(a) v \in \mathcal{V}$ for all $a \in \Sigma$ and all $v \in \mathcal{V}$.

We now give a polynomial-time algorithm for computing a set $\Test \subseteq \Sigma^*$.
 %such that $\{\eta(u) \mid u \in \Test\}$ is a basis for~$\mathcal{V}$.
The algorithm is as follows:
Initialize $\Test := \{\varepsilon\}$ where $\varepsilon$ denotes the empty string.
Then, as long as there are $a \in \Sigma$ and $w \in \Test$ such that $M(a) \eta(w)$ is linearly independent of $\{\eta(u) \mid u \in \Test\}$, set $\Test := \Test \cup \{a w\}$.

Now we show that the computed set $\Test$ has the properties claimed in the proposition.
Since $\mathcal{V}$ is the smallest vector space that contains $\eta(\varepsilon)$ and satisfies $M(a) v \in \mathcal{V}$ for all $a \in \Sigma$ and all $v \in \mathcal{V}$,
the set $U := \{\eta(u) \mid u \in \Test\}$ for the computed set~$\Test$ is a basis for~$\mathcal{V}$.
Since $\mathcal{V}$ is a subspace of $\mathbb{R}^{S_1 \cup S_2}$,
the dimension of~$\mathcal{V}$ is at most $m = |S_1| + |S_2|$.
Since $U$ is a basis, we have $|\Test| \le m$.
Since every string that the algorithm adds to~$\Test$ is only one letter longer than some other string already in~$\Test$, it follows that $|u| < |\Test| \le m$ holds for all $u \in \Test$.
Finally we show for all $\psi_1 \in \Distr{S_1}$ and all $\psi_2 \in \Distr{S_2}$:
\[
 \psi_1\equiv \psi_2 \quad \Longleftrightarrow\quad \forall u\in
 \Test:\ \pr_1(\psi_1,u)=\pr_2(\psi_2,u)
\]
The direction ``$\Longrightarrow$'' is immediate.
For the converse ``$\Longleftarrow$'', assume $\pr_1(\psi_1,u)=\pr_2(\psi_2,u)$ for all $u \in \Test$.
Then we have for all $u \in \Test$:
\begin{align*}
 0 & = \pr_1(\psi_1,u) - \pr_2(\psi_2,u) \\
   & = \langle \psi_1 \rangle \cdot \eta_1(u) - \langle \psi_2 \rangle \cdot \eta_2(u) \\
   & = \big( \langle \psi_1 \rangle \quad \langle \psi_2 \rangle \big) \cdot \eta(u)
\end{align*}
Since $\{\eta(u) \mid u \in \Test\} = U$ is a basis for~$\mathcal{V}$, it follows that $\big(\langle \psi_1 \rangle \quad \langle \psi_2 \rangle\big)$ is orthogonal to~$\mathcal{V}$.
We have already argued that this implies $\psi_1 \equiv \psi_2$.
This completes the proof.
\end{proof}

We prove Proposition~\ref{prop-lbound} from the main text.

\begin{qproposition}{\ref{prop-lbound}}
\stmtproplbound
\end{qproposition}
\begin{proof}
We say a state $s_1 \in S_1$ \emph{dominates} a distribution $\psi_1 \in \Distr{S_1}$ if $\psi_1(s_1) \ge \psi_1(t_1)$ holds for all $t_1 \in S_1$.
%For a distribution $\psi_1 \in \Distr{S_1}$ we write $\supp(\psi_1) := \{s_1 \in S_1 \mid \psi_1(s_1) > 0\}$ for the support.
We say a pair of states $(s_1, s_2)$ is \emph{reachable} if there exists a reachable pair of distributions $(\psi_1, \psi_2)$ with $\psi_i(s_i) > 0$ for both $i=1,2$.
Note that one can compute, in polynomial time, from $H_1, H_2$ the set of all reachable pairs of states.
For $s_1 \in S_1$ define $\Unreach(s_1) := \{s_2 \in S_2 \mid (s_1, s_2) \text{ is not reachable.}\}$.
For every $s_1 \in S_1$, consider the following linear program $\mathcal{LP}(s_1)$ over a real variable~$x$ and over real variables encoding distributions $\psi_1 \in \Distr{S_1}$ and $\psi_2 \in \Distr{S_2}$:
\begin{align*}
 \text{minimize }    & x \ge 0 \\
 \text{subject to: } & \psi_1 \in \Distr{S_1} \\
                     & \psi_2 \in \Distr{S_2} \\
                     & s_1 \text{ dominates } \psi_1 \\
                     & \psi_2(s_2) = 0 \text{ for all $s_2 \in \Unreach(s_1)$ } \\
                     & -x \ \le \ \pr_1(\psi_1, u) - \pr_2(\psi_2, u) \ \le \ x  \\
                     & \qquad \text{ for all $u \in \Test$.}
\end{align*}
Note that all constraints are linear (in)equalities.
In particular, we have $\pr_i(\psi_i, u) = \sum_{s\in S_i}\psi_i(s)\cdot\P_{i,s}(\io{u\Sigma^{\omega}}_i)$.
The probabilities $\P_{i,s}(\io{u\Sigma^{\omega}}_i)$ can be computed in polynomial time.
(Those probabilities are computed already when computing the set~$\Test$ according to the proof of Proposition~\ref{prop-test-set}:
they are the probabilities in the vectors $\eta_i(u)$ defined there.)

For every $s_1 \in S_1$, let $c(s_1)$ denote the optimum solution (minimizing~$x$) of~$\mathcal{LP}(s_1)$.
Define $c := \min \{c(s_1) \mid s_1 \in S_1\}$.
Note that $c$ can be computed in polynomial time.
We show that $c$ has the properties claimed by the proposition.

First we show that $\dist(\psi_1, \psi_2) \ge c$ holds for all reachable pairs $(\psi_1, \psi_2)$.
Towards a contradiction suppose that there is a reachable pair $(\psi_1, \psi_2)$ with $\dist(\psi_1, \psi_2) < c$.
Let $s_1 \in S_1$ be a state that dominates~$\psi_1$.
Since $(\psi_1, \psi_2)$ is reachable, we have $\psi_2(s_2) = 0$ for all $s_2 \in \Unreach(s_1)$.
By the definition of $\dist(\psi_1, \psi_2)$, we have
\[
 -\dist(\psi_1, \psi_2) \ \le \ \pr_1(\psi_1, u) - \pr_2(\psi_2, u) \ \le \ \dist(\psi_1, \psi_2)   
\]
for all $u \in \Test$.
It follows that $x := \dist(\psi_1, \psi_2)$ along with $\psi_1, \psi_2$ is a feasible solution of the linear program $\mathcal{LP}(s_1)$.
Since $c(s_1)$ is optimal, we have $c(s_1) \le \dist(\psi_1, \psi_2)$.
By our assumption we have $\dist(\psi_1, \psi_2) < c$, hence $c(s_1) < c$.
But by the definition of~$c$ we have $c \le c(s_1)$, a contradiction.
We conclude that $\dist(\psi_1, \psi_2) \ge c$ holds for all reachable pairs $(\psi_1, \psi_2)$.

Finally, we show $c>0$.
Towards a contradiction suppose $c=0$.
So by definition of~$c$ there is $s_1 \in S_1$ with $c(s_1) = 0$.
Thus, $\mathcal{LP}(s_1)$ has a solution with $x=0$.
That is, there exist $\psi_1 \in \Distr{S_1}$ and $\psi_2 \in \Distr{S_2}$ such that $s_1$ dominates~$\psi_1$,
and
\begin{equation}
\psi_2(s_2) = 0 \ \text{ holds for all $s_2 \in \Unreach(s_1)$,} \label{eq-lbound-proof-1}
\end{equation}
and $\pr_1(\psi_1, u) = \pr_2(\psi_2, u)$ holds for all $u \in \Test$.
By Proposition~\ref{prop-test-set}, the last fact implies
\begin{equation}
\psi_1 \equiv \psi_2\,. \label{eq-lbound-proof-2}
\end{equation}
Since $s_1$ dominates~$\psi_1$, we have
\begin{equation}
\psi_1(s_1) > 0\,. \label{eq-lbound-proof-3}
\end{equation}
It follows directly from~\cite[Theorem~21]{14CK-LICS} that \eqref{eq-lbound-proof-1}--\eqref{eq-lbound-proof-3}
together imply that we have $d(H_1, H_2) < 1$ for the the total variation distance~$d$ defined in the beginning of Section~\ref{sec-distinguish}.
But then Proposition~\ref{prop-reduction-to-dist-one} implies that $H_1, H_2$ are not distinguishable, which is a contradiction.
Hence $c>0$ must hold.
This concludes the proof.
\end{proof}

We prove Proposition~\ref{prop-expensive-c} from the main text.

\begin{qproposition}{\ref{prop-expensive-c}}
\stmtpropexpensive
\end{qproposition}
\begin{proof}
The reachable pairs $(S_1', S_2') \in 2^{S_1} \times 2^{S_2}$ can be computed in exponential time.
So it suffices to show that one can compute, for a fixed reachable pair $(S_1', S_2') \in 2^{S_1} \times 2^{S_2}$, the value
\begin{align*}
 c_{S_1', S_2'} &:=
          \min_{\psi_1 \in \Distr{S_1'}} \
          \min_{\psi_2 \in \Distr{S_2'}} \\
 &  \qquad\qquad \max_{U \subseteq \Sigma^m} \
          \big( \pr_1(\psi_1, U) - \pr_2(\psi_2, U) \big)
\end{align*}
in exponential time.
Consider the following linear program, similar to the one from the proof of Proposition~\ref{prop-lbound},
with variables $x_u$ for $u \in \Sigma^m$ and variables encoding distributions $\psi_1, \psi_2$:
\begin{align*}
 \text{minimize }    & \sum_{u \in \Sigma^m} x_u \\
 \text{subject to: } & \psi_1 \in \Distr{S_1} \\
                     & \psi_2 \in \Distr{S_2} \\
                     & 0 \le x_u && \text{for all $u \in \Sigma^m$} \\
                     & \pr_1(\psi_1, u) - \pr_2(\psi_2, u) \ \le \ x_u && \text{for all $u \in \Sigma^m$}
\end{align*}
This linear program has exponential size.
We show that its optimal solution is $c_{S_1', S_2'}$.

First we show that it has a feasible solution whose value is $c_{S_1', S_2'}$.
Let $\psi_1, \psi_2$ be the distributions that attain the minimum from the definition of~$c_{S_1', S_2'}$.
Let $U$ be a set that attains the maximum from the definition of~$c_{S_1', S_2'}$.
We can take $U = \{u \in \Sigma^m \mid \pr_1(\psi_1, u) \ge \pr_2(\psi_2, u)\}$.
Let $x_u = \pr_1(\psi_1, u) - \pr_2(\psi_2, u)$ for all $u \in U$, and let $x_u = 0 $ for all $u \in \Sigma^m - U$.
Then the solution with those $x_u$ and with $\psi_1, \psi_2$ is feasible.
Moreover, its value is:
\begin{align*}
\sum_{u \in \Sigma^m} x_u
& \ = \ \sum_{u \in U} x_u \\
& \ = \ \sum_{u \in U} \left( \pr_1(\psi_1, u) - \pr_2(\psi_2, u) \right) \\
& \ = \ \pr_1(\psi_1, U) - \pr_2(\psi_2, U) \\
& \ = \ c_{S_1', S_2'}
\end{align*}

For the converse, we show that $c_{S_1', S_2'}$ is a lower bound to the value of any feasible solution.
Let $(x_u)_{u \in \Sigma^m}$ along with $\psi_1, \psi_2$ denote a feasible solution.
Let $U$ be a set that attains the maximum in $\max_{U \subseteq \Sigma^m} \big( \pr_1(\psi_1, U) - \pr_2(\psi_2, U) \big)$.
We can take $U = \{u \in \Sigma^m \mid \pr_1(\psi_1, u) \ge \pr_2(\psi_2, u)\}$.
%By optimality, we have $x_u = \pr_1(\psi_1,u) - \pr_2(\psi_2,u)$ for all $u \in U$, and $x_u = 0$ for all $u \in \Sigma^m - U$.
Hence we have:
\begin{align*}
& \ \sum_{u \in \Sigma^m} x_u \\
& \ \ge \ \sum_{u \in U} x_u \\
   &\qquad\qquad \left(\text{$x_u \ge 0$ from the linear program}\right) \\
& \ \ge \ \sum_{u \in U} \left( \pr_1(\psi_1, u) - \pr_2(\psi_2, u) \right) \\
   &\qquad\qquad \left(\text{$x \ge \pr_1(\psi_1, u) - \pr_2(\psi_2, u)$ from the lin.~program}\right) \\
& \ = \ \pr_1(\psi_1, U) - \pr_2(\psi_2, U) \\
& \ = \ \max_{U \subseteq \Sigma^m} \
          \big( \pr_1(\psi_1, U) - \pr_2(\psi_2, U) \big) \\
& \ \ge \ c_{S_1', S_2'} \\
   &\qquad\qquad \left(\text{definition of~$c_{S_1', S_2'}$}\right)
\end{align*}
We conclude that $c_{S_1', S_2'}$ is an optimal solution of the linear program.
\end{proof}

\section{Proofs of Section~6} \label{app-verification}

We prove Proposition~\ref{prop-conditioned} from the main text.

\begin{qproposition}{\ref{prop-conditioned}}
\stmtpropconditioned
\end{qproposition}

\begin{proof}
By symmetry, it suffices to provide the construction for~$H_1$.
Let $H = (G, O, s_0, \Class)$ be the given cHMC
 with $G = (S, R, \phi)$ a Markov chain.
Define
\[
 S_1 := \{s \in S \mid \P_s(\Bad_s) > 0\}\,.
\]
Note that $s_0 \in S_1$.
Define $G_1 := (S_1, R_1, \phi_1)$ with
 $R_1 := R \cap (S_1 \times S_1)$  and
 \[
  \phi_1(s,t) := \frac{\phi(s,t) \cdot \P_t(\Bad_t)}{\P_s(\Bad_s)}
  \qquad \text{for all $(s,t) \in R_1$}.
 \]
Finally, take $H_1 := (G_1, O_1, s_0)$ where $O_1$ equals~$O$ restricted to~$S_1$.

We show that the measures $\P_1(\cdot)$ and~$\P(\cdot \mid \Bad)$ are equal.
By definition, it suffices to show that
 they are equal on the cylinder sets $\{s_0 r\} S_1^\omega$ 
 for all $r \in S_1^*$.
We show by induction on the length of~$r$ that 
\[
 \P_{1,s}(\{s r\} S_1^\omega) \cdot \P_s(\Bad_s) 
 = \P_s(\{s r\} S_1^\omega \cap \Bad_s)
 \quad \text{$\forall\,s \in S_1$.}
\]
For the induction base, let $r$ be empty.
Then the claim follows from $\P_{1,s}(\{s\} S_1^\omega) = 1$
 and $\Bad_s \subseteq \{s\} S_1^\omega$.
For the induction step, let $t \in S_1$ and $r \in S_1^*$.
We want to show:
\begin{equation} \label{eq-condition-inductive}
 \P_{1,s}(\{s t r\} S_1^\omega) \cdot \P_s(\Bad_s) 
 = \P_s(\{s t r\} S_1^\omega \cap \Bad_s)
\end{equation}
If $(s,t) \not\in R_1$ then both sides of~\eqref{eq-condition-inductive} are zero.
So let $(s,t) \in R_1$.
Then we have:
\begin{align*}
& \P_{1,s}(\{s t r\} S_1^\omega) \cdot \P_s(\Bad_s) \\
& = \phi_1(s,t) \cdot \P_{1,t}(\{t r\} S_1^\omega) \cdot \P_s(\Bad_s) \\
& = \frac{\phi(s,t) \cdot \P_t(\Bad_t)}{\P_s(\Bad_s)} \cdot \P_{1,t}(\{t r\} S_1^\omega) \cdot \P_s(\Bad_s) \\
& = \phi(s,t) \cdot \P_t(\{t r\} S_1^\omega \cap \Bad_t) \qquad\qquad \text{by the ind.~hyp.} \\
& = \P_s(\{s t r\} S_1^\omega \cap \Bad_s)
\end{align*}
This shows~\eqref{eq-condition-inductive} and hence the proposition.
\end{proof}

}{}

\end{document}